\newcommand{\sshow}[2]{\ifthenelse{\equal{#1}{0}}{#2}{}}
\newcommand{\bd}{{\bf bor}}
\newcommand{\ann}{{\sf ann}}
\newcommand{\inter}{{\sf int}}
\newcommand{\remove}[1]{}
\newcommand{\bigmid}{\;\big|\;}
\newcommand{\w}{\operatorname{{\bf w}}}
\newcommand{\cupall}{\pmb{\pmb{\bigcup}}}
\newcommand{\seg}{partially ${\rm\Delta}$-embedded graph}
\newcounter{func}
\newcommand{\newfun}[1]{f_{\refstepcounter{func}\label{#1}\thefunc}}
\newcommand{\funref}[1]{\hyperref[#1]{f_{\ref*{#1}}}} 
\newcounter{con}
\newcommand{\newcon}[1]{c_{\refstepcounter{con}\label{#1}\thecon}}
\newcommand{\conref}[1]{\hyperref[#1]{c_{\ref*{#1}}}} 
\newcommand{\bound}[1]{{\bf #1}}
\newtheorem{proposition}{Proposition}[section]
\newtheorem{lemma}{Lemma}[section]
\newtheorem{corollary}{Corollary}[section]
\newtheorem{theorem}{Theorem}[section]
\newtheorem{conjecture}{Conjecture}[section]
\newcommand{\tw}{{\mathbf{tw}}}
\renewcommand{\Bbb}[1]{{\mathbb{#1}}}
\newcommand{\hh}{\end{document}}
\definecolor{Black}{rgb}{0,0, 0}
\definecolor{Blue}{rgb}{0, 0 ,1}
\definecolor{Red}{rgb}{1, 0 ,0}
\definecolor{White}{rgb}{1, 1, 1}
\definecolor{Grey}{rgb}{.6, .6, .6}
\definecolor{Mygreen}{rgb}{.0, .7, .0}
\definecolor{Yellow}{rgb}{.55,.55,0}
\definecolor{mustard}{rgb}{1.0, 0.86, 0.35}
\definecolor{applegreen}{rgb}{0.55, 0.71, 0.0}
\definecolor{darkturquoise}{rgb}{0.0, 0.81, 0.82}
\definecolor{celestialblue}{rgb}{0.29, 0.59, 0.82}
\definecolor{green-yellow}{rgb}{0.68, 1.0, 0.18}
\definecolor{crimsonglory}{rgb}{0.75, 0.0, 0.2}
\definecolor{darkmagenta}{rgb}{0.55, 0.0, 0.55}
\tikzset{red node/.style={draw=red, circle, fill = red, minimum size = 4pt, inner sep = 0pt}}
\tikzset{yellow node/.style={draw=yellow, circle, fill = yellow, minimum size = 4pt, inner sep = 0pt}}
\tikzset{blue node/.style={draw=celestialblue, circle, fill =celestialblue, minimum size = 4pt, inner sep = 0pt}}
\tikzset{triangle/.style = { regular polygon, regular polygon sides=3, rotate=180}}
\tikzset{small red/.style={draw=red, triangle, fill = red, minimum size = 2pt, inner sep = 0pt}}
\tikzset{black node/.style={draw, circle, fill = black, minimum size = 4pt, inner sep = 0pt}}
\tikzset{small black node/.style={draw, circle, fill = black, minimum size = 3pt, inner sep = 0pt}}
\tikzset{model node/.style={draw=celestialblue, circle, fill = celestialblue, minimum size = 5pt, inner sep = 0pt}}
\tikzset{model node small/.style={draw=celestialblue, circle, fill = celestialblue, minimum size = 3pt, inner sep = 0pt}}
\tikzset{rep node/.style={draw=red, circle, fill = red, minimum size = 3pt, inner sep = 0pt}}
\tikzset{track node 1/.style={draw, circle, fill = black, minimum size = 2pt, inner sep = 0pt}}
\tikzset{track node 2/.style={draw=black!30!white, circle, fill = black!30!white, minimum size = 2pt, inner sep = 0pt}}
\tikzset{track node 3/.style={draw=black!10!white, circle, fill = black!10!white, minimum size = 2pt, inner sep = 0pt}}
\newcommand*\samethanks[1][\value{footnote}]{\footnotemark[#1]}
\begin{document}

\title{\bf\Large  Hitting Topological Minor Models in Planar Graphs is Fixed Parameter Tractable\thanks{A preliminary version of the  results of this paper appeared in~\cite{GolovachST20hitt}.
}}

\author{Petr A. Golovach%
	\thanks{Department of Informatics, University of Bergen, Norway.}$\ ^,$\thanks{Supported by the Research Council of Norway via the projects ``CLASSIS'' and ``MULTIVAL''.}~$^{,}$\thanks{Supported by the Research Council of Norway and the French Ministry of Europe and Foreign Affairs, via the Franco-Norwegian project PHC AURORA 2019.}%
	\and 
	Giannos Stamoulis\thanks{LIRMM, Univ Montpellier, CNRS, Montpellier, France.}\hspace{1.55mm}$^,$\thanks{Supported by projects DEMOGRAPH (ANR-16-CE40-0028),  ESIGMA (ANR-17-CE23-0010), and the French-German Collaboration ANR/DFG Project UTMA (ANR-20-CE92-0027).}
	\and
	Dimitrios M. Thilikos\,{\samethanks[4]}~$^,${\samethanks[5]}\hspace{1.55mm}$^,{\samethanks[6]}$
}	
	
\date{}

\maketitle

\begin{abstract}
\noindent For a finite collection of graphs ${\cal F}$, the \textsc{${\cal F}$-TM-Deletion} problem has as input an $n$-vertex  graph $G$ and an integer $k$ and asks  whether there exists a set $S \subseteq V(G)$ with $|S| \leq k$ such that $G \setminus S$ does not contain any of the graphs in ${\cal F}$ as a topological minor. We prove that for every such ${\cal F}$, \textsc{${\cal F}$-TM-Deletion} is fixed parameter tractable on planar graphs. Our algorithm runs in a $2^{\mathcal{O}(k^2)}\cdot n^{2}$ time or, alternatively in $2^{\mathcal{O}(k)}\cdot n^{4}$ time. Our techniques can easily  be extended to graphs that are  embeddable on any fixed surface.
\end{abstract}
\medskip

\noindent{\bf Keywords:}  Topological minors, irrelevant vertex technique, treewidth, vertex deletion problems


%


\section{Introduction}
\subsection{The \texorpdfstring{${\cal P}$}{P}-deletion problem and its variants}

In general, a ${\cal P}$-{\sc deletion} problem is determined by some graph class ${\cal P}$ and asks, given an $n$-vertex  graph $G$ and an integer $k$, whether $G$ can be 
transformed to a graph in ${\cal P}$ after the deletion of $k$ vertices.  
In other words, the class ${\cal P}$ represents some desired property 
that we want to impose on the input graph after deleting $k$ vertices.
This is a general 
graph modification problem with great expressive power
as it encompasses many problems, depending on the choice of the property ${\cal P}$.
Unfortunately for most instances of ${\cal P}$, this problem is not expected 
to admit a polynomial time algorithm.
Lewis and Yannakakis showed in~\cite{LewisY80then} that for any non-trivial and hereditary graph class ${\cal P}$, the ${\cal P}$-vertex deletion problem is ${\sf NP}$-complete.  Given this hardness result, an attractive alternative 
is to consider the standard parameterized version of the problem, called {\sl p}\,-{\sc ${\cal P}$-deletion} where the parameter  is the number $k$ of vertex deletions. In this case the  challenge is to investigate for which instances of ${\cal P}$, {\sl p}\,-{\sc ${\cal P}$-deletion} is fixed parameter tractable  (or, in short, is {\sf FPT}), i.e., 
it can be solved by an $f(k)\cdot n^{\mathcal{O}(1)}$-time algorithm (also called an {\sf FPT}-algorithm), for some function  $f:\Bbb{N}\to\Bbb{N}$. There is a long line of research on this general question. In many cases, this concerns 
particular properties and possible optimizations of the {\em parametric dependence} $f(k)$ (see e.g.~\cite{BodlaenderHL14}). However, it is interesting to notice that  {\sf FPT}-algorithms 
exist for  general families of properties.
In this direction the more general (and compact) results 
concern properties ${\cal P}$ that can be characterized by the exclusion of 
some finite set ${\cal F}$ of graphs of at most $h$ vertices or edges with respect to some partial ordering relation $\leq $. We define
$${\cal P}_{{\cal F},\leq }=\{G\mid \forall H\in {\cal F} :  H\not\leq  G\}$$
and ask whether {\sl p}\,-{\sc ${\cal P}_{{\cal F},\leq }$-deletion} is {\sf FPT}.
Let us now consider the general status of this problem for the main known instances of the partial ordering relation $\leq $.
\medskip

\noindent(1) $\leq $ is the {\sl contraction}\footnote{A graph $G$ is a {\em contraction} of a graph $G'$ if $G$ can be obtained from $G$ by applying edge contractions.} relation: then there are graphs $H$ such that  {\sc ${\cal P}_{\{H\},\leq }$-deletion} is {\sf NP}-complete even for the case where $k=0$. For instance one may take $H$ to be
the path on 4 vertices, as indicated in~\cite{BrouwerV87}. Using the terminology of fixed parameter complexity, this implies that there are choices of ${\cal F}$ such that  {\sl p}\,-{\sc ${\cal P}_{{\cal F},\leq }$-deletion} is {\sf para-NP}-complete.
\medskip

\noindent(2) $\leq $ is the {\sl induced minor}\footnote{A graph $G$ is an {\em induced minor} of a graph $G'$ if $G$ can be obtained from some contraction of $G'$ after removing vertices.} relation: as in the previous case, there are choices of ${\cal F}$ such that  {\sl p}\,-{\sc ${\cal P}_{{\cal F},\leq }$-deletion} is {\sf para-NP}-complete. For instance, one may consider ${\cal F}$ to contain the graph in~\cite[Theorem 4.3]{FellowsKMP95}.\medskip

\noindent(3) $\leq $ is the {\sl subgraph} or the {\sl induced subgraph} 
relation: because of the result of Cai in~\cite{Cai96fixe}, {\sl p}\,-{\sc ${\cal P}_{{\cal F},\leq }$-deletion} is {\sf FPT}, for every ${\cal F}$.  In particular, the result in~\cite{Cai96fixe}
implies an $\mathcal{O}(h^{k}n^{h+1})$-time algorithm for both these problems. However, if instead we  parameterize {\sc ${\cal P}_{{\cal F},\leq }$-deletion} by $h$, then there are instances of ${\cal F}$ for which the problem is ${\sf W[1]}$-hard even for $k=0$: just take ${\cal F}=\{K_{h}\}$ in order  to generate the {\sl p}-{\sc Clique} problem.\medskip

\noindent(4) $\leq $ is the  {\sl minor}\footnote{A graph $G$ is an  {\em minor} of a graph $G'$ is $G$ is the contraction of some subgraph of $G'$.} relation:  again  {\sl p}\,-{\sc ${\cal P}_{{\cal F},\leq  }$-deletion} is {\sf FPT}, for every ${\cal F}$.
To see this, observe
 that, for every $k$, the 
set of  {\sf yes}-instances of this problem is closed under taking of minors.  On the other 
hand, Robertson and Seymour~\cite{RobertsonS04GMXX} proved that graphs are well-quasi-ordered with respect to the minor relation. These two facts together 
imply that  there is a finite set ${\cal B}_{k}$ (whose size depends on $k$ and $h$) such that $(G,k)$ is a {\sf yes}-instance  if and only if $G$ contains no graph in ${\cal B}_{k}$ as a minor. As minor-checking for a graph on $c$ vertices can be done in $\mathcal{O}_{c}(n^3)$ time~\cite{RobertsonRXIII}, we derive  the (non-constructive) existence of an $\mathcal{O}_{k,h}(n^3)$-time algorithm (see \autoref{sadfadsfsdffgdsgdfgfg} for the definition of the ${\cal O}_{k,h}(\cdot )$ notation).
This result was made constructive in \cite{AdlerGK08comp}. Recently,  a $2^{k^{\mathcal{O}_{h}(1)}}\cdot n^2$ time algorithm  for {\sl p}\,-{\sc ${\cal P}_{{\cal F},\leq  }$-deletion} was designed in~\cite{SauST20anfp}.

\subsection{Our contribution.}
A graph $H$ is a {\em topological minor} of a graph $G$ if $G$ contains as a subgraph 
some subdivision\footnote{A graph $G$ is a {\em subdivision} of a graph $G'$
if $G$ can be obtained from $G'$ if we replace its edges by paths with the same endpoints.} of $H$ and we denote this by $H\preceq G$. We consider the  problem  {\sl p}\,-{\sc ${\cal P}_{{\cal F},\preceq }$-deletion} that in the rest of this paper we  call  \textsc{${\cal F}$-TM-Deletion}. Notice that this problem is more general than its 
 counterpart for the minor relation (case (4) above) as, for every 
 graph $H$, there exists a finite set of graphs ${\cal H}$ such that 
 a graph  $G$ contains $H$ as a minor if and only if $G$ contains some graph in  ${\cal H}$ as a topological minor. However as graphs are not well-quasi-ordered with respect to the topological minor relation, the  parameterized  complexity of \textsc{${\cal F}$-TM-Deletion} remained open for a while.\medskip
  
In this paper we prove that \textsc{${\cal F}$-TM-Deletion} is {\sf FPT}
for inputs  restricted to planar graphs. Moreover, we develop  results and techniques that may serve as the base for further {\sf FPT}-algorithms for  {\sl p}\,-{\sc ${\cal P}_{{\cal F},\leq  }$-deletion} on planar graphs, when $\leq$ is the induced minor or the contraction relation (see \autoref{asdgasdfgdsfgdfg} for a discussion). 
We  stress that, until very recently,  
the parameterized complexity of this problem  was unknown. For an update 
on the current status of the general problem see \autoref{asdfsfgddfhdgshfghg}.\medskip

Let  ${\cal F}$ be a finite set of graphs.  We use $h({\cal F})$ for the maximum number of vertices or edges  of a graph in ${\cal F}$, i.e., {$h({\cal F})=\max\{|V(H)|, |E(H)|\mid H \in {\cal F}\}$}.
We also write ${\cal F}\npreceq G$ to denote the fact that none of the graphs in ${\cal F}$ is a topological minor of $G$. 
We define the  parameter ${\bf tm}_{\cal F}$ so that, for every graph $G$,
$${\bf tm}_{\cal F}(G)=\min\{k\mid \exists S\subseteq V(G): |S|\leq k \wedge {\cal F}\npreceq G\setminus S\}.$$ The main result of this paper is the following:

\begin{theorem} \label{dfassdfasfdfdsad}
There exists an algorithm that given a finite set of graphs ${\cal F}$, a $k\in\Bbb{N}$, and an $n$-vertex planar graph $G$,
outputs whether ${\bf tm}_{\cal F}(G)\leq k$ in  $2^{\mathcal{O}_h (k^2)}\cdot n^2$ time, or, alternatively, $\mathcal{O}(k\cdot n^4)+\mathcal{O}_{h}(n^4)+2^{\mathcal{O}_{h}(k)}\cdot n^2$ time, where $h=h({\cal F})$. 
\end{theorem}

We stress that the algorithm of \autoref{dfassdfasfdfdsad} can be straightforwardly 
modified so as to output a set $S$ of size at most $k$ that intersects all models of the graphs in ${\cal F}$.
{A version of \autoref{dfassdfasfdfdsad} without the explicit parametric dependences on the running times appeared in~\cite{GolovachST20hitt}.}

\subsection{High level description of our algorithm} 
Our main approach towards proving \autoref{dfassdfasfdfdsad} is the application of the so-called {\em irrelevant vertex technique}.  
This technique was introduced for the first time  by Roberston and Seymour in~\cite{RobertsonRXIII}
for the design of an {\sf FPT}-algorithm for  the {\sc Disjoint Paths} problem, parameterized by the number of terminals. Subsequently,
it was applied, in diverse ways, for the design of {\sf FPT}-algorithms 
for several graph-theoretical problems and is now considered
as a powerful technique of parameterized algorithm
design~\cite{AdlerKKLST17irre,CyganMPP13thep,GolovachKMT17thep,GolovachHP13,GroheKMW11find,JansenLS14anea,KawarabayashiK12,KawarabayashiM08,KawarabayashiMR08,KawarabayashiR10oddc,Marx10chor,MarxS12obta}.
We also refer to \cite[Chapter~7]{cygan2015parameterized} for a high-level
overview of the irrelevant vertex technique.
The general algorithmic paradigm of the irrelevant vertex technique
takes advantage of some structural characteristic of the input graph
in order to detect, in {\sf FPT}-time, 
some vertex, called {\em irrelevant}, whose 
removal from $G$ generates an equivalent instance of the problem.
By recursing on the produced equivalent instance we end up with  
a graph where the structural parameter is bounded (by some function of $k$).
This fact permits the resolution of the problem with other techniques -- typically by dynamic programming.
Most of the times, this structural parameter 
is  treewidth (see \autoref{sadfadsfsdffgdsgdfgfg}
for the formal definition) and this is the one that we use in this paper.
Towards proving \autoref{dfassdfasfdfdsad}, the application of the irrelevant vertex technique is based on \autoref{dfasdfdsad} that we present below.

Let $G$ be a graph, $R$ be a subset of $V(G)$, and $k$ be a non-negative integer.
We say that $(G,R,k)$ is a {\em ${\bf tm}_{\cal F}$-triple} if there exists an $S\subseteq R$
such that $|S|\leq k$ and ${\cal F}\npreceq G\setminus S$.
Intuitively, the set $R$ can be seen as the set of vertices that are
possible candidates for a solution $S$. Aiming to remove (irrelevant) vertices from the given graph,
we also make progress by reducing $R$. This is formulated in the next result.

\begin{theorem}\label{dfasdfdsad}
There exists a function $\newfun{nntenethjehe}:\Bbb{N}\to\Bbb{N}$, and an algorithm with the following specifications:

\smallskip\noindent {\bf Find\_Irrelevant\_Vertex}$(k,h,G,R)$\\
\noindent{\sl Input:}  $k,h\in \Bbb{N}$, an $n$-vertex planar graph $G$,
and a set $R\subseteq V(G)$.	

\noindent{\sl Output:}
\begin{enumerate}
\item an (irrelevant) vertex $v\in V(G)$ and a set $R'\subseteq R$ such that,
for every graph class ${\cal F}$ where $h({\cal F})\leq h$,
$(G,R,k)$ is a  ${\bf tm}_{\cal F}$-triple if and only if
$(G\setminus v, R', k)$  is a ${\bf tm}_{\cal F}$-triple or 
\item a tree decomposition of $G$ of width
at most $\funref{nntenethjehe}(h)\cdot k$.
\end{enumerate}

Moreover, this algorithm runs in  $2^{\mathcal{O}_h (k^2)}\cdot n$ time, or, alternatively, $\mathcal{O}(k\cdot n^3)+\mathcal{O}_{h}(n^3)+2^{\mathcal{O}_{h}(k)}\cdot n$ time.
\end{theorem}
After applying the algorithm of \autoref{dfasdfdsad} at most $n$ times,  the problem is reduced  to instances of bounded treewidth.  As topological minor containment can be defined by a formula in Monadic Second Order Logic, i.e., an MSOL formula,~\cite[Appendix D]{KimLPRRSS16line} 
and vertex deletion to some MSOL definable property is also MSOL definable, it follows from Courcelle's Theorem~\cite{CourcelleM93} (see also~\cite{BoriePT92,ArnborgLS91easy,Seese96line}) that the problem for reduced instances  can be solved in $\mathcal{O}_{k,h}(n)$ time. 
To solve the version of the problem where a certificate of the solution is asked for, one can use the version of Courcelle's Theorem \cite{CourcelleM93}
that returns such a certificate, if it exists.
However, to achieve the parametric dependencies in the running times of \autoref{dfassdfasfdfdsad}, we have to avoid the use of Courcelle's Theorem when solving the problem on instances of bounded treewidth.
We devote \autoref{dsfnskdalgnaklds} to describe how to develop a dynamic programming algorithm (\autoref{sssswfgfegwergewgwergwegr}) that can solve the problem on instances of treewidth at most $w$ in $2^{\mathcal{O}_{h}(w\log w)}\cdot n$ time, or, alternatively, in $\mathcal{O}(n^3)+2^{\mathcal{O}_{h}(w)}\cdot n$ time. \autoref{dfassdfasfdfdsad} follows. We stress that each one of these running times has some advantage against the other. In the first case, we have a linear, in $n$, algorithm whose parametric dependence on $k$
is super-exponential. In the second, we drop the parametric dependency to a single exponential one to the cost 
of a worst polynomial dependency on $n$.
\medskip

In the rest of this section we give an outline on how \autoref{dfasdfdsad} is proved. All combinatorial concepts used in this description are presented in an intuitive way;   formal definitions can be found in  \autoref{sadfadsfsdffgdsgdfgfg}. 
Given a tuple of variables ${\bf x}=(x_{1},\dots, x_{q})$ by the term {\em ${\bf x}$-big/small} we refer to a quantity that is lower/upper bounded by some  (unbounded) function of ${\bf x}$.
Alternatively, we use the term  {\em {\bf x}-many/few} that is defined analogously.
We work on some embedding of $G$ in the plane.

\paragraph{Walls and annuli.}   An important combinatorial object is  the one of a {\sl $r$-wall }, as the one  in \autoref{asfdsfdsfasdfsdfdsf}, that can be seen as the union of $r$ horizontal paths intersected by $r$ vertical paths. The layers of a wall $W$ are defined as indicated  in \autoref{asfdsfdsfasdfsdfdsf}.

\begin{figure}[ht]\centering
\sshow{0}{\includegraphics[width=10cm]{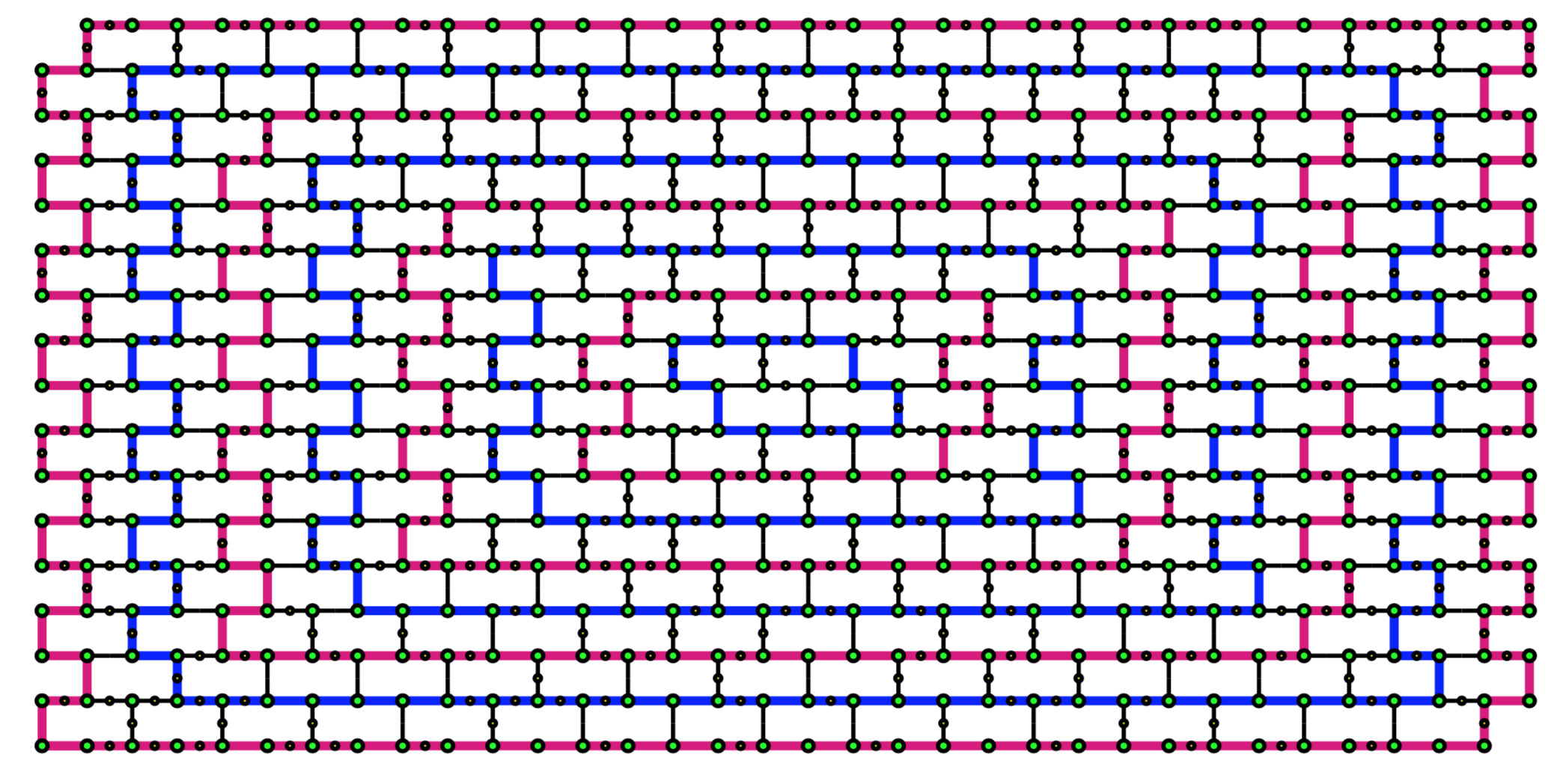}}
  \caption{\small A $17$-wall and its 8 layers.}
\label{asfdsfdsfasdfsdfdsf}
\end{figure}
We call the outermost layer {\em perimeter} of the wall $W$.
Combining the results of \cite{BodlaenderDDFLP16,GolovachKMT17thep,GuT12,AdlerDFST11} we know that if the treewidth of a planar graph 
is $(k,h)$-big, then we can find a $(k,h)$-big  wall in $G$ such that the subgraph of $G$, called {\em the compass} of $W$, inside the closed disk ``cropped'' by  the perimeter of $W$ has $(k,h)$-small  treewidth (see \autoref{something_good}). This additional property will permit us to compute (possible) partial solutions on subgraphs of the compass of $W$.

The next step is to  detect some more structure in the wall $W$ that is intuitively depicted in the left side of \autoref{asdfdsfdsdsfdsfsgfdgsdhf}.  We first distinguish the collection ${\cal C}$  of the $(k,h)$-many  outermost layers, drawn in yellow, and then we consider in the rest of $W$  a packing of $(k,h)$-many  $(h)$-big walls, drawn in green. This is done in \autoref{gdgdasgfhjkuyi}.

We now work on the ``annulus'' of the $(k,h)$-many outer layers of $W$. For this, it is convenient to see those cycles as ``crossed'' by a collection ${\cal P}$ of disjoint paths (that are monotone subpaths of the horizontal/vertical paths of $W$)  
called {\em rails}.
We call this system of cycles and rails {\em railed annulus}, denoted by ${\cal A}=({\cal C},{\cal P})$. See the right side of \autoref{asdfdsfdsdsfdsfsgfdgsdhf} for an example of a railed annulus with 5 cycles and 8 rails.

\paragraph{Combing topological minor models.}
Notice that if $H$ is a topological minor of a graph $G$, then 
this is materialized by a pair $(M,T)$ where $M$ is a subgraph of $G$ 
and $T$ is a set of vertices of $M$, called {\em branches},
such that all vertices of $V(M)\setminus T$ have degree 2.
We say that  $(M,T)$ is a {\em topological minor model} of $H$ in $G$ if 
a graph isomorphic to $H$ is created after dissolving in $M$ all vertices in $V(M)\setminus T$
(which means deleting every such vertex and making its two neighbors adjacent).
For simplicity, assume that ${\cal F}=\{H\}$ and recall that ${\bf tm}_{\cal F}(G)\leq k$
if there is a set $S\subseteq V(G), |S|\leq k$, called from now on {\em solution set},
that intersects all  topological minor models of $H$ in $G$.

 Our next aim is to  analyze how topological minor models of $H$ 
may cross the cycles and the rails of a railed annulus ${\cal A}=({\cal C},{\cal P})$.
{For this reason, using \cite[Corollary 1]{GolovachST22comb} (see also~\cite{GolovachST20hitt} for a conference version), we prove that if the branches of $(M,T)$ are situated {\sl outside} the annulus and the
annulus is $(h)$-big then it is possible to find an alternative ``rail-combed'' 
model $(M',T')$ of $G$, whose intersection with the ``middle cycle'' of ${\cal A}$ 
consists only of  $(h)$-few  rail vertices. We refer to this theorem as the {\em ``model combing theorem''} (\autoref{jklnlnlk}).}

\paragraph{Representations of topological minor models.}
Using the model combing theorem, we can pick an $(h)$-small collection ${\cal P}'$
of the rails of ${\cal A}$ for which the following holds: for every topological minor model $(M,T)$ of $H$ that crosses ${\cal A}$, there is a disk $\Delta$ bounded by some cycle $C$ of ${\cal A}$ 
and a ``combed'' (through ${\cal P}'$) version  $(M',T')$ of $(M,T)$ that {\sl represents} $(M,T)$ in the sense that a set of vertices that are ``not so close'' to $C$, intersects $M\cap \Delta$ if and only if the same set 
intersects $M'\cap \Delta$. From now on we refer to the instances of $M'\cap \Delta$ as the {\em inner 
combed models} of ${\cal A}$ and we can see them as models representing the ``inner part'' of all annulus-crossing models.

\begin{figure}[ht]
	\centering\scalebox{1.05}{
	\sshow{0}{\begin{tikzpicture}[scale=0.28]
	\filldraw[mustard] (0,0) rectangle (22,16);
	\fill[white] (2,2) rectangle (20,14);
	\foreach \i in {0,...,4}{
		\draw (\i/2,\i/2) rectangle (22-\i/2, 16-\i/2);
	}
	\foreach \a in {0,6}{
		\foreach \b in {0,6,12}
		{
			\begin{scope}[xshift=\b cm, yshift = \a cm]
			\foreach \i in {5,...,10}{
				\fill[applegreen] (2.5,2.5) rectangle (7.5,7.5);
				\fill[white] (4.5,4.5) rectangle (5.5,5.5);
				\draw[-] (\i/2,\i/2) rectangle (10-\i/2, 10-\i/2);
				
			}
			\end{scope}
		}
	}\foreach \a in {0,6}{
		\foreach \b in {0,6,12}
		{
			\begin{scope}[xshift=\b cm, yshift = \a cm]
			\foreach \i in {5,...,9}{
				\draw[-] (\i/2,\i/2) rectangle (10-\i/2, 10-\i/2);
				
			}
			\end{scope}
		}
	}
	\end{tikzpicture}}~~~~
	\sshow{0}{\begin{tikzpicture}[scale=.31]
	\foreach \x in {2,...,6}{
		\draw[line width =0.6pt] (0,0) circle (\x cm);
	}
	\begin{scope}[on background layer]
	\fill[mustard] (0,0) circle (6 cm);
	\fill[applegreen!98!white] (0,0) circle (2 cm);
	\end{scope}
	
	\node (P3) at (45:7) {$P_{3}$};
\node[small black node] (P11) at (45:6) {};
\node[small black node] (P21a) at (30:5) {};
\node[small black node] (P21b) at (40:5) {};
\node[small black node] (P31a) at (35:4) {};
\node[small black node] (P31b) at (50:4) {};
\node[small black node] (P41a) at (45:3) {};
\node[small black node] (P41b) at (25:3) {};
\node[small black node] (P51) at (40:2) {};
\draw[line width=1pt] (P11) -- (P21a) -- (P21b) -- (P31a)  (P31b) -- (P41a) -- (P41b) -- (P51);

\node (P4) at (70:7) {$P_{4}$};
\node[small black node] (P12) at (70:6) {};
\node[small black node] (P22a) at (80:5) {};
\node[small black node] (P22b) at (75:5) {};
\node[small black node] (P32a) at (90:4) {};
\node[small black node] (P32b) at (75:4) {};
\node[small black node] (P42a) at (85:3) {};
\node[small black node] (P42b) at (70:3) {};
\node[small black node] (P52) at (75:2) {};
\draw[line width=1pt] (P12) -- (P22a) -- (P22b) -- (P32a) (P32b) -- (P42a) -- (P42b) -- (P52);

\node (P5) at (115:7) {$P_{5}$};
\node[small black node] (P13a) at (120:6) {};
\node[small black node] (P13b) at (110:6) {};
\node[small black node] (P23a) at (110:5) {};
\node[small black node] (P23b) at (115:5) {};
\node[small black node] (P33a) at (120:4) {};
\node[small black node] (P33b) at (130:4) {};
\node[small black node] (P43a) at (135:3) {};
\node[small black node] (P43b) at (120:3) {};
\node[small black node] (P53) at (125:2) {};
\draw[line width=1pt] (P13a) -- (P13b) -- (P23a) -- (P23b) -- (P33a) -- (P33b) -- (P43a) -- (P43b) -- (P53);

\node (P6) at (165:7) {$P_{6}$};
\node[small black node] (P14a) at (170:6) {};
\node[small black node] (P14b) at (160:6) {};
\node[small black node] (P24) at (155:5) {};
\node[small black node] (P34) at (160:4) {};
\node[small black node] (P44a) at (165:3) {};
\node[small black node] (P44b) at (180:3) {};
\node[small black node] (P54) at (170:2) {};
\draw[line width=1pt] (P14a) -- (P14b) -- (P24) -- (P34) -- (P44a) -- (P44b) -- (P54);

\node (P7) at (190:7) {$P_{7}$};
\node[small black node] (P18a) at (190:6) {};
\node[small black node] (P28a) at (185:5) {};
\node[small black node] (P28b) at (220:5) {};
\node[small black node] (P38a) at (210:4) {};
\node[small black node] (P38b) at (225:4) {};
\node[small black node] (P48a) at (205:3) {};
\node[small black node] (P48b) at (220:3) {};
\node[small black node] (P58) at (200:2) {};
\draw[line width=1pt] (P18a) -- (P28a) to [bend right=15]  (P28b);
\draw[line width=1pt] (P28b) --  (P38a) -- (P38b) -- (P48a) -- (P48b) -- (P58);

\node (P8) at (235:7) {$P_{8}$};
\node[small black node] (P15) at (235:6) {};
\node[small black node] (P25a) at (240:5) {};
\node[small black node] (P25b) at (250:5) {};
\node[small black node] (P35b) at (245:4) {};
\node[small black node] (P45a) at (250:3) {};
\node[small black node] (P45b) at (240:3) {};
\node[small black node] (P55) at (235:2) {};
\draw[line width=1pt] (P15) -- (P25a)  -- (P25b) --  (P35b) -- (P45a) -- (P45b) -- (P55);

\node (P1) at (295:7) {$P_{1}$};
\node[small black node] (P16a) at (290:6) {};
\node[small black node] (P16b) at (300:6) {};
\node[small black node] (P26a) at (295:5) {};
\node[small black node] (P26b) at (310:5) {};
\node[small black node] (P36a) at (300:4) {};
\node[small black node] (P36b) at (290:4) {};
\node[small black node] (P46a) at (310:3) {};
\node[small black node] (P46b) at (325:3) {};
\node[small black node] (P56) at (320:2) {};
\draw[line width=1pt] (P16a) -- (P16b) -- (P26a) -- (P26b) -- (P36a) -- (P36b) -- (P46a) -- (P46b) -- (P56);

\node (P2) at (0:7) {$P_{2}$};
\node[small black node] (P17a) at (5:6) {};
\node[small black node] (P17b) at (-5:6) {};
\node[small black node] (P27a) at (0:5) {};
\node[small black node] (P27b) at (-10:5) {};
\node[small black node] (P37a) at (-15:4) {};
\node[small black node] (P37b) at (0:4) {};
\node[small black node] (P47a) at (10:3) {};
\node[small black node] (P47b) at (-5:3) {};
\node[small black node] (P57) at (-5:2) {};
\draw[line width=1pt] (P17a) -- (P17b) -- (P27b)  -- (P27a) -- (P37a) -- (P37b) -- (P47a) -- (P47b) -- (P57);

\node (C5) at (90:1.2) {$C_{5}$};

\node (D5) at (-70:0.7) {$\Delta_{\rm in}$};
	
	\end{tikzpicture}}}\medskip\medskip
	\caption{{\sf Left}: The partition of a wall into a yellow annulus and several green subwalls. {\sf Right}: An example of a $(5,8)$-railed annulus depicted in yellow; its innermost cycle is $C_5$ and the disk $\Delta_{\rm in}$ bounded by $C_5$ is depicted in green.}
	\label{asdfdsfdsdsfdsfsgfdgsdhf}
\end{figure}
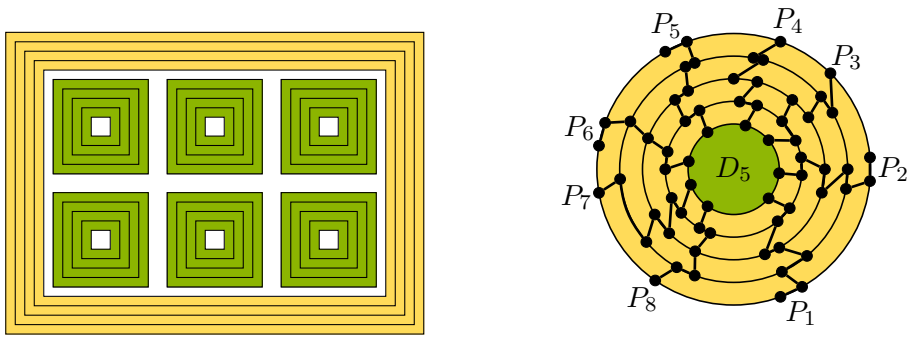

\paragraph{Reducing the solution space.}
The next step is  to compute, for every cycle $C$ of ${\cal A}$, a set $S_{C}$ of at most $(k,h)$-many vertices intersecting each possible inner combed model of ${\cal A}$ (it is possible that $S_{C}$ is an empty set).
This computation can be done by the dynamic programming algorithm of \autoref{sssswfgfegwergewgwergwegr} that can find (partial) solutions of the problem on subgraphs of the compass of $W$ that have $(k,h)$-small treewidth.
Let  $\Delta_{\rm in}$
be the disk bounded by the innermost cycle of ${\cal C}$ (cycle $C_{5}$ in \autoref{asdfdsfdsdsfdsfsgfdgsdhf}). 
We then compute $S_{\rm in}=\Delta_{\rm in}\cap (\bigcup_{C\in {\cal C}}S_{C})$ and 
observe that $S_{\rm in}$ has $(k,h)$-small size. Based on the fact that the inner 
combed models represent the inner part of all models crossing ${\cal A}$
and the fact that all these models are intersected by subsets of at most $k$
vertices  whose restriction in $\Delta_{\rm in}$ is in $S_{\rm in}$, we 
prove that if  $G\setminus S$ does not contain any topological minor model of $H$, then 
we can replace $S\cap \Delta_{\rm in}$ by vertices of $S_{\rm in}$ to obtain a new solution that is not larger than $S$ (\autoref{fsfsdfdsdsafsasdsdadffasdasfd}).
This is an important restriction of the solution space of the problem 
in what concerns its intersection with  $\Delta_{\rm in}$.
As there are $(k,h)$-many
$(h)$-big subwalls packed inside  $\Delta_{\rm in}$, there is a subwall whose compass can be avoided by all possible 
solution sets.
{The compass of such a wall is called {\em solution-free}.}
In the above, $H$ might be any graph on $h$ vertices, however its is more convenient to think about some specific (planar) graph $H$ in ${\cal F}$.

\paragraph{Finding an irrelevant vertex.}  We  now fix our attention to the solution-free  compass of 
some $(h)$-big subwall of $W$. Once again, we see this wall as a railed annulus ${\cal A}'$ and 
use the model combing theorem
in order to represent all ways topological minor models of $H$ can ``invade'' 
the compass of $W$ by combed topological models going through the rails of ${\cal A}'$.
This, in turn, permits us to detect a vertex $v$ of the solution-free compass of $W$ such that if a solution set 
$S$ intersects a topological 
minor model that contains $v$, then it should also intersect some representation of it that 
avoids $v$, therefore $v$ is irrelevant (\autoref{fsfsdfdsdsafsadffasdasfd1}).

\subsection{Organization of the paper}
In \autoref{sadfadsfsdffgdsgdfgfg}, we give some definitions and preliminaries. In \autoref{dsfnskdalgnaklds} we present a way to design a dynamic programming algorithm that solves the problem in bounded treewidth graphs. In \autoref{asfdsasdfdsf} we present the two main subroutines of the algorithm of \autoref{dfasdfdsad} and in \autoref{sasdfsdfdsfsdfsdgffdsgdfgsdfgd} we prove  \autoref{dfasdfdsad}. We conclude in \autoref{dnfjklsdgnklsdnbm} by discussing the running time dependancy on $h$ of our algorithm, the extension of our results to bounded Euler genus graphs, some recent advances on the study of the problem on general graphs, and some open problems.

\section{Definitions and preliminaries}
\label{sadfadsfsdffgdsgdfgfg}

We denote by $\Bbb{N}$ the set of all non-negative  integers. Given an $n\in\Bbb{N}$, we denote by $\Bbb{N}_{\geq n}$ the set containing 
all integers equal or greater than $n$.
Given  two integers $x$ and $y$ we define by $[x,y]=\{x,x+1,\ldots,y-1,y\}$. Given an $n\in\Bbb{N}_{\geq 1}$, we also define $[n]=\{1,\ldots,n\}$.
Let $U$ be a set, $r\in \Bbb{N}_{\geq 1}$, and ${\cal A}=[A_{1},\ldots,A_{r}]\subseteq (2^{U})^r$, ${\cal B}=[B_{1},\ldots,B_{r}]\subseteq (2^{U})^r$.
We say that  ${\cal A}\subseteq {\cal B}$ if for all $i\in[r]$, $A_{i}\subseteq B_{i}$.
Also, if $S\subseteq U$ we denote ${\cal A}\cap S=[A_{1}\cap S,\ldots,A_{r}\cap S]$. 
\smallskip

Let $(x_{1},\ldots,x_{l})\in \Bbb{N}^{l}$ and $\chi,\psi: \Bbb{N}
\rightarrow \Bbb{N}$.
We use the notation $\chi(n)=\mathcal{O}_{x_{1},\ldots,x_{l}}(\psi(n))$ to denote that there exists a computable
function $f:\Bbb{N}^{l} \rightarrow \Bbb{N}$
such that  $\chi(n)=\mathcal{O}( f(x_{1},\ldots,x_{l})\cdot \psi(n))$.

\subsection{Basic concepts on graphs}
All graphs in this paper are undirected,  finite, and they do not  have loops or multiple edges.
Unless stated otherwise, we denote by $n$ the number of vertices of the graph under consideration.
If $G_{1}=(V_{1},E_{1})$ and $G_{2}=(V_{2},E_{2})$ are graphs, then we denote
$G_{1}\cap G_{2}=(V_{1}\cap V_{2},E_{1}\cap E_{2})$ and $G_{1}\cup G_{2}=(V_{1}\cup V_{2},E_{1}\cup E_{2})$.
Also, given a graph $G$ and a set $S\subseteq V(G)$, we denote by $G\setminus S$ the graph obtained if we remove from $G$ the vertices in $S$, along with their incident edges.
Given a vertex $v\in V(G),$ we denote by $N_{G}(v)$ the set of vertices of $G$ that are adjacent to $v$ in $G.$
Also, given a set $S\subseteq V(G)$, we set $N_G(S) = \bigcup_{v \in S}N_G(v)$ and $N_G [S]=N_G (S)\cup S$.
We denote by $\partial(S)$ the set of vertices in $S$ that have a neighbor in $V(G)\setminus S$.
Given a graph $G$, we say that the pair $(A,B)$ is a {\em separation} of $G$ 
if $A\cup B=V(G)$ and there is no edge in $G$ with one endpoint in $A\setminus B$ 
and the other in $B\setminus A$.
A {\em path} ({\em cycle}) in a graph $G$ is a connected subgraph with all vertices of degree 
at most (exactly) 2. A path is {\em trivial} if it has only one vertex and is empty if it is the empty graph (i.e., the graph with empty vertex set).
%

\paragraph{Partially disk-embedded graphs.}
A {\em closed disk} (resp. {\em open disk}) $\Delta$
is a set homeomorphic to the set $\{(x,y)\in\Bbb{R}^2\mid x^2+y^2\leq 1\}$
(resp. $\{(x,y)\in \Bbb{R}^2\mid x^2+y^2< 1\}$).
We use $\bd(\Delta)$ to denote the boundary of $\Delta$ and $\inter(\Delta)$ to denote the open disk $\Delta\setminus \bd(\Delta)$.
When we embed a graph $G$ in the
plane or in a disk, we treat $G$ as a set of points. This permits us to make
set operations operations between graphs and sets of points.
We say that a graph $G$ is {\em partially disk-embedded in some closed disk $\Delta$}, 
if there is some subgraph $K$ of $G$ that is embedded in $\Delta$
such that $\bd(\Delta)$ is a cycle of $K$  and $(V(G)\cap \Delta,V(G)\setminus\inter(\Delta))$
is a separation of $G$. From now on, we use the term {\em partially $\Delta$-embedded graph $G$}
to denote that a graph $G$ is  partially disk-embedded in some closed disk $\Delta$.
{We also call the graph $K$
{\em the compass}
of the partially $\Delta$-embedded graph $G$ and we always assume that we accompany
a partially $\Delta$-embedded graph $G$ with the embedding of its compass in $\Delta$, that is the set $G\cap \Delta$.}

\paragraph{Grids and walls.}
Let  $k,r\in\Bbb{N}.$ The
{\em $(k\times r)$-grid} is the Cartesian product of two paths on $k$ and $r$ vertices, respectively.
An  {\em elementary $r$-wall}, for some odd $r\geq 3,$ is the graph obtained from a
$(2 r\times r)$-grid with vertices $(x,y),$
$x\in[2r]\times[r],$ after the removal of the
``vertical'' edges $\{(x,y),(x,y+1)\}$ for odd $x+y,$ and then the removal of
all vertices of degree one. 
Notice that, as $r\geq 3$,  an elementary $r$-wall is a planar graph
that has a unique (up to topological isomorphism) embedding in the plane
such that all its finite faces are incident to exactly six edges.  
The {\em perimeter} of an elementary $r$-wall is the cycle bounding its infinite face.
Given an elementary wall $\overline{W},$ a {\em vertical path} of $\overline{W}$ is  one whose 
vertices, in order of appearance, are 
$(i,1),(i,2),(i+1,2),(i+1,3),
(i,3),(i,4),(i+1,4),(i+1,5),
(i,5),\ldots,(i,r-2),(i,r-1),(i+1,r-1),(i+1,r)$, for some $i\in \{1,3,\ldots,2r-1\}$.
 Also, a {\em horizontal path} of $\overline{W}$
 is one whose 
vertices, in order of appearance, are $(1,j),(2,j),\ldots,(2r,j)$, for some  $j\in[2,r-1]$, 
or  $(1,1),(2,1),\ldots,(2r-1,1)$ or  $(2,r),(3,r),\ldots,(2r,r)$.

An {\em $r$-wall} is any graph $W$ obtained from an elementary $r$-wall $\overline{W}$
after subdividing edges (see \autoref{asfdsfdsfasdfsdfdsf}).  We call the vertices that were added after the subdivision operations {\em subdivision vertices}.
The {\em perimeter} of $W$, denoted by ${\sf perim}(W)$, is the cycle of $W$ whose non-subdivision vertices are the vertices of the perimeter of $\overline{W}$.
Also, a vertical (resp. horizontal) path of $W$ is a subdivided vertical (resp. horizontal) path of $\overline{W}$.
An {\em $r'$-subwall} $W'$ of a wall $W$ is any $r'$-wall that is a subgraph of $W$ 
and whose horizontal/vertical paths are subpaths of the horizontal/vertical paths of $W$.

A subgraph $W$ of a graph $G$ is called a {\em wall} of 
$G$ if $W$ is an $r$-wall for some odd $r\geq  3$ and we refer to $r$ as the {\em height} of the wall $W$.
Let $W$ be a wall of a graph $G$ and $K'$ be the connected component of $G \setminus V({\sf perim}(W))$ that contains $W \setminus V({\sf perim}(W))$.
The compass of $W$, denoted by ${\sf compass}(W)$, is the graph 
$G[V(K')\cup V({\sf perim}(W))]$.
Observe that $W$ is a subgraph of ${\sf compass}(W)$ and ${\sf compass}(W)$ is connected.

The {\em layers} of an $r$-wall $W$  are recursively defined as follows.
The first layer of $W$ is its perimeter.
For $i=2,\ldots,(r-1)/2,$ the $i$-th layer of $W$ is the $(i-1)$-th layer of the subwall $W'$ obtained from $W$ after removing from $W$ its perimeter and all occurring vertices of degree one.
Notice that each $(2r+1)$-wall has $r$ layers  (see \autoref{asfdsfdsfasdfsdfdsf}).

\paragraph{Treewidth.} 
A \emph{tree decomposition} of a graph $G$
is a pair $(T,\chi)$ where $T$ is a tree and $\chi: V(T)\to 2^{V(G)}$
such that
\begin{enumerate}
	\item $\bigcup_{t \in V(T)} \chi(t) = V(G)$;
	\item for every edge $e$ of $G$ there is a $t\in V(T)$ such that 
	$\chi(t)$
	contains both endpoints of $e$ and
	\item for every $v \in V(G)$, the subgraph of ${T}$
	induced by $\{t \in V(T)\mid {v \in \chi(t)}\}$ is connected.
\end{enumerate}
The \emph{width} of $(T,\chi)$ is defined as
$\w(T,\chi):=\max\big\{\left|\chi(t)\right|-1 \mid t\in V(T)\big\}.$
The \emph{treewidth of $G$} is defined as $\tw(G):= 
\min\left\{\w(T,\chi) \bigmid (T,\chi) \text{ is a tree decomposition of }G\right\}.$

The following result follows combining the results of \cite{BodlaenderDDFLP16,GolovachKMT17thep,GuT12,AdlerDFST11}.
It intuitively states that given a $q\in \Bbb{N}$ and a planar graph $G$  with ``big'' enough treewidth, we can find a $q$-wall of $G$ whose compass has ``small'' enough treewidth.
\begin{proposition}\label{something_good}
There exists a constant $\newcon{sdafsdfsd}$  and an algorithm with the following specifications:
	
\smallskip\noindent {\bf Find\_Wall}$(G,q)$\\
\noindent{\sl Input:} a planar graph $G$ and a $q\in \Bbb{N}_{\geq 3}$.

\noindent{\sl Output:}	
\begin{enumerate}
	\item Either a $q$-wall $W$ of $G$ whose  compass  has treewidth at most $\conref{sdafsdfsd}\cdot q$ or  
	\item a tree decomposition of $G$ of width at most $\conref{sdafsdfsd}\cdot q$.
\end{enumerate}
Moreover, this algorithm runs in $2^{\mathcal{O}(q^2)}\cdot n$ time,
or, alternatively, in $2^{\mathcal{O}(q)}\cdot n^2$ time.	
\end{proposition}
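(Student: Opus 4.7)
The plan is a win–win argument combining a linear-time planar treewidth approximation with an iterative wall refinement in the planar embedding. First, I would run a linear-time approximation algorithm for treewidth on planar graphs: if the returned decomposition has width at most $\conref{sdafsdfsd}\cdot q$, we output option~(2) and stop. Otherwise the treewidth exceeds this threshold, and the constructive (linear-gap) excluded-grid theorem for planar graphs yields, in $O_{q}(n)$ time, a wall $W_{0}$ in $G$ of height $\ell\cdot q$ for a sufficiently large constant $\ell$ fixed in advance. This wall serves as the seed of the refinement.

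Second, inside $W_{0}$ I would identify the $\ell$ canonical concentric $q$-subwalls $W^{(1)},\ldots,W^{(\ell)}$ obtained by peeling off $q$ successive layers at a time; their compasses $K^{(1)}\supsetneq\cdots\supsetneq K^{(\ell)}$ form a strictly nested chain in the planar embedding. Using the linear-time planar treewidth test, I would examine them in order: if some $K^{(i)}$ has treewidth at most $\conref{sdafsdfsd}\cdot q$, return $(W^{(i)},K^{(i)})$ as option~(1). The overall running time stays $O_{q}(n)$ because the compasses are nested and their successive differences are pairwise disjoint annular regions whose total size is bounded by $n$, so the treewidth tests amortize.

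The main obstacle, and the combinatorial heart of the proof, is showing that the search in the second step cannot fail for every $i$. The argument is: if every $K^{(i)}$ had treewidth exceeding $\conref{sdafsdfsd}\cdot q$, then the thin annular strip $K^{(i)}\setminus K^{(i+1)}$ — which sits between two nested wall-perimeters and contains exactly $q$ full wall-layers — admits an \emph{explicit} tree decomposition of width $O(q)$ built from its layered planar structure, the bags being supported on the cycles of successive layers together with short rail segments crossing them. Gluing all such annular decompositions along the perimeter cycles $\partial K^{(i)}$, and attaching a similarly layered decomposition for the outer region $G\setminus K^{(1)}$ (whose separation from the rest is controlled by the perimeter of $W_{0}$), produces a tree decomposition of $G$ of width $O(\ell\cdot q)$, which for a proper choice of $\conref{sdafsdfsd}$ is at most $\conref{sdafsdfsd}\cdot q$. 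This contradicts the assumption coming out of Step~1 that $\tw(G)$ exceeds this bound, so some $K^{(i)}$ must satisfy the required treewidth bound and option~(1) is returned.

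The delicate technical point is therefore the construction of the annular tree decompositions of width tightly $O(q)$; it exploits both the planar embedding of the wall-layers and a careful accounting of how the graph's edges crossing each layer can be swept by the decomposition. Everything else — the initial approximation, the wall extraction, and the iterated treewidth checks — is off-the-shelf linear-time machinery for planar graphs and contributes only to the constant hidden in the $O_{q}(n)$ running time.
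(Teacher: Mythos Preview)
The proposition is cited from \cite{GolovachKMT17thep} without proof in the paper, so there is no in-paper argument to compare against; evaluating your plan on its own, Step~4 has a genuine gap. The claim that each annular strip $K^{(i)}\setminus K^{(i+1)}$ admits a tree decomposition of width $O(q)$ ``built from its layered planar structure'' confuses the wall with $G$: the compass is the full subgraph of $G$ embedded in that region, and nothing prevents an arbitrary planar piece (e.g.\ a large grid hanging off a single wall vertex) from sitting between two consecutive wall perimeters. The $q$ wall-layers you see there give no control over such extra structure, so the ``explicit'' decomposition you describe need not exist.

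More fatally, your gluing covers the annuli $K^{(i)}\setminus K^{(i+1)}$ and the outer region $G\setminus K^{(1)}$ but never the innermost compass $K^{(\ell)}$. Take $G$ to be $W_{0}$ together with an $(N\times N)$-grid, $N\gg \conref{sdafsdfsd}\cdot q$, embedded in the central face of $W_{0}$. Every $K^{(i)}$ contains this grid and has treewidth $\ge N$, so your loop rejects all of them; the strips here are pure wall and really do have width $O(q)$, yet $\tw(G)\ge N$ and no decomposition of $G$ of width $O(\ell q)$ exists --- there is no contradiction and your algorithm outputs nothing. The argument in \cite{GolovachKMT17thep} avoids this by genuine recursion: when the current compass has treewidth exceeding $\conref{sdafsdfsd}\cdot q$, the planar grid theorem yields a fresh $q$-wall strictly inside it with a strictly smaller compass, and one iterates until the compass treewidth drops. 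A fixed-depth peeling of a single initial wall cannot replace this descent.
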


The above algorithm uses first the
single exponential {\sf FPT}-approximation of treewidth
by \cite{BodlaenderDDFLP16} and as long as the treewidth is not small 
enough then it finds a $q$-wall $W$ by either using
the algorithm of \cite{AdlerDFST11}, that runs in $2^{\mathcal{O}(q^2)}\cdot n$ time, or 
the algorithm of \cite{GuT12} that runs in $\mathcal{O}(n^2)$ time.
The treewidth of the compass of $W$ is bounded by applying
the main idea of \cite[Lemma 4.2]{GolovachKMT17thep}.
{
We present a proof for completeness.

\begin{proof}[Proof of~\autoref{something_good}]
The following algorithm is a slight modification of the algorithm
Compass in~\cite[Subsection 4.2]{GolovachKMT17thep}.
The version presented here uses \cite[Theorem VI]{BodlaenderDDFLP16} and
the algorithms of \cite{AdlerDFST11,GuT12} to obtain the claimed running times.

We set $\conref{sdafsdfsd}:= 94$.
We start by applying the single-exponential $5$-approximation algorithm
of Bodlaender et al. for treewidth \cite[Theorem VI]{BodlaenderDDFLP16},
which outputs either a report that the treewidth of $G$ is larger than
$18q+1$ or a tree decomposition of $G$ of width at most $5\cdot (18q+1)+4$.
Observe that $5\cdot (18q+1)+4\leq 94 q,$ for $q\geq 3$.
In the latter case, we return the obtained tree decomposition of $G$.
In the former case,
i.e., where the treewidth of $G$ is larger than
$18q+1$, we know~\cite[Lemma 2.1]{GolovachKMT17thep} that
$G$ contains a $2q$-wall as a minor.
Such a $q$-wall $W$ can be found using either the minor-checking algorithm of \cite{AdlerDFST11}
that runs in $2^{\mathcal{O}(q^2)}\cdot n$ time, or 
the algorithm of \cite{GuT12} that runs in $\mathcal{O}(n^2)$ time.
Next, among the four vertex-disjoint $q$-subwalls of $W$,
we obtain the one, say $W'$,
whose compass has the minimum number of vertices.
We recursively apply the algorithm of~\cite[Theorem VI]{BodlaenderDDFLP16}
with input the compass of $W'$ and the integer $18q+1$.
For more details, we refer the reader to~\cite[Subsection 4.2]{GolovachKMT17thep}.
\end{proof}
}

\subsection{Railed annuli}\label{klagakdfhglka}
In this subsection we present the notion of {\it railed annulus}, introduced in \cite{KaminskiT12}, a ``wall-like'' graph  as in the right side of \autoref{asdfdsfdsdsfdsfsgfdgsdhf}, that is the union of a collection of cycles and a collection of paths ``crossing'' these cycles.
In order to define railed annuli, we first give the definitions of {\it nested sequences of cycles} and {\it annuli}.

\paragraph{Nested cycles and annuli.}
Let $G$ be a  partially $\Delta$-embedded graph and let ${\cal C}=[C_{1},\ldots,C_{r}]$, $r\geq 2$, 
be a collection of  vertex-disjoint cycles of the compass of $G$. We say that the sequence ${\cal C}$ is a {\em $\Delta$-nested sequence of cycles} of $G$
if every $C_{i}$ is the boundary of an open
disk $D_{i}$  
such that  $\Delta\supseteq D_{1}\supseteq\cdots \supseteq D_{r}$. From now on,
each $\Delta$-nested sequence ${\cal C}$ will be accompanied 
with the sequence $[D_{1},\ldots,D_{r}]$ of the corresponding open disks 
as well as 
the sequence $[\overline{D}_{1},\ldots,\overline{D}_{r}]$ of their closures. 
Given $x,y\in[r]$ where $x\leq y$, 
we call the set  $\overline{D}_{x}\setminus D_{y}$ the {\em $(x,y)$-annulus} of ${\cal C}$
and we denote it by $\ann({\cal C},x,y)$. Finally, we say that $\ann({\cal C},1,r)$
is the {\em annulus} of ${\cal C}$ and we denote it by $\ann({\cal C})$.  
 
 \paragraph{Railed annuli.}
Let $r\in\Bbb{N}_{\geq 3}$ and $q\in \Bbb{N}_{\geq 3}$. Assume also that $r$ is an odd number.
An {\em $(r,q)$-railed annulus} of a partially $\Delta$-embedded graph $G$ is a pair ${\cal A}=({\cal C},{\cal P})$ where 
${\cal C}=[C_{1},\ldots,C_{r}]$  is a $\Delta$-nested collection of cycles of $G$ and ${\cal P}=[P_{1},\ldots,P_{q}]$ is a  collection of pairwise vertex-disjoint 
 paths in $G$ such that 
\begin{itemize}
	\item For every $j\in[q],$\ $P_{j}\subseteq \ann({\cal C})$.
	\item For every $(i,j)\in[r]\times[q],$   $C_{i}\cap P_{j}$ is  a non-empty path, that we denote $P_{i,j}$.
\end{itemize}
We refer to the paths of ${\cal P}$ as the {\em rails} of ${\cal A}$ and to the cycles of ${\cal C}$ as the {\em cycles} of ${\cal A}$.

Let ${\cal A}=({\cal C},{\cal P})$ be an $(r,q)$-railed annulus of a partially $\Delta$-embedded graph $G$. We call $\overline{D}_{r}$ (resp. $\overline{D}_{1}$) the {\em inner (resp. outer) disk} of ${\cal A}$. We also extend the notion of an annulus and we say that the {\em annulus} of ${\cal A}=({\cal C}, {\cal P})$ is the annulus of ${\cal C}$.\medskip

We now prove the following lemma which intuitively states that there is an algorithm that given a ``big enough'' wall, outputs a collection of  railed annuli whose number and size will be useful in the proof of \autoref{dfasdfdsad}.

%
%
%
%
%

\begin{lemma}\label{gdgdasgfhjkuyi}
	There exists a function $\newfun{ddansjbndaj}:\Bbb{N}^3\to\Bbb{N}$ and an algorithm with the following specifications:

	\smallskip\noindent {\bf Find\_Collection\_of\_Annuli}$(x,y,z,\Delta,G,W)$\\
	\noindent{\sl Input:} two odd integers $x,y\in \Bbb{N}_{\geq 3}$, an integer $z\in \Bbb{N}$,  a partially $\Delta$-embedded graph $G$ and a $q$-wall $W$ of the compass of $G$ whose perimeter is the boundary of $\Delta$ and such that $q\geq \funref{ddansjbndaj}(x,y,z)$.
	
	\noindent{\sl Output:}  a closed disk $\Delta'\subseteq \Delta$ and a collection $\mathfrak{A}=\{{\cal A}_{0},{\cal A}_{1},\ldots,{\cal A}_{z}\}$ of railed annuli of the compass of $G$ such that 
	\begin{itemize}
		\item ${\cal A}_{0}$ is an $(x,x)$-railed annulus  whose outer disk is  $\Delta$ and whose inner disk is $\Delta'$,   
		\item for $i\in[z]$, ${\cal A}_{i}$ is a $(y,y)$-railed annulus of $G\cap \inter(\Delta')$, and  
		\item for every $i,j\in [z]$ where $i\neq j$, the outer disk of ${\cal A}_{i}$ and the outer disk of ${\cal A}_{j}$ are disjoint.
	\end{itemize}
	Moreover, this algorithm runs in $\mathcal{O}(n)$ time and  $\funref{ddansjbndaj}(x,y,z)=\mathcal{O}(x+y\sqrt{z})$.
	
\end{lemma}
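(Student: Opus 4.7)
The plan is to construct ${\cal A}_0$ from the outermost layers and vertical paths of $W$, and then to obtain the inner railed annuli ${\cal A}_1,\dots,{\cal A}_z$ from $z$ pairwise well-separated $(2y+1)$-subwalls packed in the remaining part of $W$. The bound $f(x,y,z)=O(x+y\sqrt{z})$ arises by adding the $\Theta(x)$ wall-height needed for ${\cal A}_0$ and the $\Theta(y\sqrt{z})$ wall-height needed to place $z$ disjoint $(2y+1)$-subwalls on a $\sqrt{z}\times\sqrt{z}$ grid inside.

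For ${\cal A}_0$, let $C_1,\dots,C_x$ be the $x$ outermost layers of $W$; by the definition of a layer each $C_i$ is a cycle of the compass of $G$ bounding an open disk $D_i\subseteq\Delta$ with $\Delta\supseteq D_1\supseteq\cdots\supseteq D_x$, so $[C_1,\dots,C_x]$ is $\Delta$-nested. Pick any $x$ vertical paths $Q_1,\dots,Q_x$ of $W$ and set $P_j:=Q_j\cap\ann([C_1,\dots,C_x])$. Since every vertical path of $W$ meets every layer of $W$ in a nonempty subpath, the paths $P_1,\dots,P_x$ are pairwise vertex-disjoint and $C_i\cap P_j$ is a nonempty path for every $(i,j)\in[x]\times[x]$. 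Thus ${\cal A}_0=([C_1,\dots,C_x],[P_1,\dots,P_x])$ is an $(x,x)$-railed annulus with outer disk $\Delta$ and inner disk $\Delta':=\overline{D}_x$.

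For the inner railed annuli, let $W''$ be the subwall of $W$ obtained by removing the outer $x$ layers, so that $W''$ is a $q''$-wall with $q''\geq q-2x$ drawn in $\Delta'$. Let $m=\lceil\sqrt{z}\rceil$ and assume $q''\geq(2y+2)m+1$. Partition the horizontal paths of $W''$ into $m$ consecutive groups of $2y+1$ paths each, with at least one horizontal path of $W''$ separating adjacent groups; do the same with the vertical paths of $W''$. Each of the $m^2\geq z$ resulting blocks defines a $(2y+1)$-subwall $W_i$ of $W''$. Because consecutive blocks are buffered by a horizontal or vertical path of $W''$, the perimeter of $W_i$ bounds a closed disk $\Delta_i\subseteq\Delta'$ that is disjoint from $\Delta_j$ for $i\neq j$. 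Inside each $W_i$, its $y$ layers and any $y$ of its vertical paths, truncated to the corresponding annulus exactly as in the previous paragraph, produce a $(y,y)$-railed annulus ${\cal A}_i$ of $G\cap\Delta'$ whose outer disk is $\Delta_i$.

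Choosing any $z$ of these blocks and adding up the requirements yields $q\geq 2x+(2y+2)\lceil\sqrt{z}\rceil+1$, so taking $f(x,y,z):=2x+(2y+2)\lceil\sqrt{z}\rceil+1=O(x+y\sqrt{z})$ suffices. All operations — identifying the layers of $W$, truncating vertical paths, and splitting $W''$ into blocks — are local manipulations of the given wall data and run in $O(n)$ time. The main delicate point is the geometric disjointness of the outer disks $\Delta_i$, not just the vertex-disjointness of the $W_i$'s; this is precisely why a buffer row and column of $W''$ are placed between adjacent blocks. Without such buffers, two adjacent sub-wall perimeters could share a common arc and the closed disks they bound would overlap, violating the required third condition.
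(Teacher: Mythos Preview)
Your overall plan matches the paper's approach: take the first $x$ layers of $W$ as the cycles of ${\cal A}_0$, find rails among pieces of the vertical/horizontal paths, then pack $z$ well-separated subwalls inside $\Delta'$ and extract a $(y,y)$-railed annulus from each. The buffer argument for disjointness of the outer disks is also exactly right.

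There is, however, a genuine gap in the construction of the rails. Two false assertions are doing the work. First, ``every vertical path of $W$ meets every layer of $W$ in a nonempty subpath'' is not true: the leftmost and rightmost vertical paths of $W$ lie on the perimeter (layer~1) and do not cross any deeper layer at all, so ``pick \emph{any} $x$ vertical paths'' fails. Second, for a vertical path $Q_j$ that \emph{does} cross $C_1,\dots,C_x$, the set $Q_j\cap\ann([C_1,\dots,C_x])$ is not a path but two disjoint subpaths (the top and bottom pieces of $Q_j$ in the annulus), so your definition $P_j:=Q_j\cap\ann(\cdot)$ does not yield a rail. The same issues recur for the inner annuli.

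The paper fixes this by proving a small claim: in an $h$-wall with $h\geq p+\lceil(p-2)/4\rceil$, the set $\hat{\cal P}$ of \emph{all} vertical and horizontal paths that intersect the $p$-th layer, when restricted to the annulus, decomposes into at least $p$ pairwise disjoint paths, each crossing every $C_i$; any $p$ of these serve as rails. Your height choices happen to satisfy this inequality (since $q\geq 2x$ for ${\cal A}_0$, and $2y+1\geq y+\lceil(y-2)/4\rceil$ for each inner subwall), so the sizes you wrote down are adequate---but the argument you gave to extract the rails does not work as stated and needs to be replaced by this more careful count.
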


\begin{proof}
Let $y':= 2y+ \lceil y/4 \rceil$ and assume that $y'$ is an odd integer (otherwise, make it odd by adding 1) and let $\funref{ddansjbndaj}(x,y,z)=2x + \max\{\lceil x/4 \rceil,\lceil \sqrt{z}/2 \rceil \cdot y' \} +1$.
We argue that the following holds:\smallskip

\noindent{\em Claim:} Let $p\in \Bbb{Z}_{\geq 3}$ be an odd integer.
If $H$ is an $h$-wall of $G$, where $h$ is an odd integer such that $h\geq 2p + \lceil p/4\rceil$,
then $H$ contains a $(p,p)$-railed annulus ${\cal A}=({\cal C}, {\cal P})$,
where ${\cal C}=[C_{1}, \ldots, C_{p}]$
and
for every $i\in [p]$, $C_{i}$ is the $i$-th layer of $H$.
	
\smallskip
\noindent{\em Proof of Claim:}
Let $H$ be an $h$-wall of $G$, where $h\geq 2p + \lceil p/4\rceil$.
We define the $\Delta$-nested collection ${\cal C} = [C_{1},\ldots, C_{p}]$ of cycles of $G$, where, for every $i\in[p]$,  $C_{i}$ is the $i$-th layer of $H$.
Let $\hat{\cal P}$ be the collection of the vertical and horizontal paths of $H$ {that contain branch vertices of $W$ that are not in $\bigcup_{i\in [p]} V(C_i)$.}
Observe that, for every $ i\in [p]$, every path in $\hat{\cal P}$ also intersects $C_{i}$
and that $\hat{\cal P}\cap \ann{({\cal C})}$ is a collection of pairwise-vertex disjoint paths of $G$.
Also, notice that since $h-2p\geq \lceil p/4\rceil$, $\hat{\cal P}\cap \ann{({\cal C})}$ contains at least $p $ paths.
Let ${\cal P} := [P_{1}, \ldots, P_{p}]$ be a subset of $\hat{\cal P}\cap \ann{({\cal C})}$.
Then, ${\cal P}$ is a collection of pairwise vertex-disjoint paths of $G$ and it holds that for every $j\in[p]$,
$P_{j}\subseteq \ann({\cal C})$ and for every $(i,j)\in[p]\times[p],$   $C_{i}\cap P_{j}$ is  a non-empty path.
Therefore, $H$ contains a $(p,p)$-railed annulus ${\cal A} = ({\cal C}, {\cal P})$ of $G$ and the claim follows.
\smallskip

	Following the claim above, for $H:=W$, $h:=q$, and $p:=x$, since $q\geq 2x + \lceil x/4\rceil$, we deduce the existence of an $(x,x)$-railed annulus $A_{0}$ whose  inner disk is $\overline{D}_{x}$ and whose outer disk is $\overline{D}_{1}$ - that is $\Delta$.
	Observe that since $q-2x\geq \lceil\sqrt{z}/2 \rceil \cdot y' +1$, there exists an $r$-wall $\hat{W}$ of $G$ for some odd $r\in \Bbb{Z}_{\geq 3}$ such that $r\geq \lceil\sqrt{z}/2 \rceil \cdot y'$ and $\hat{W}\subseteq G\cap{D}_{x}$.
	
Now, notice that $\hat{W}$ contains a collection ${\cal W} = \{W_{1}',\ldots, W_{z}'\}$ of $z$ $y'$-subwalls of $W$ such that, for every $i,j\in[z], i\neq j$, ${\sf compass}(W_{i}')\cap {\sf compass}(W_{j}')=\emptyset$.
Therefore, for every $i\in[z]$, applying again the claim above for $H:=W_{i}'$, $h:=y'$ and $p:=y$, we deduce the existence of a $(y,y)$-railed annulus ${\cal A}_{i}$ of $W_{i}'$.
Furthermore, for every $i,j\in[z], i\neq j$,
the fact that ${\sf compass}(W_{i}')\cap {\sf compass}(W_{j}')=\emptyset$
implies that the outer disk of ${\cal A}_{i}$ and the outer disk of ${\cal A}_{j}$ are disjoint.
The proof concludes by setting
$\Delta'=\overline{D}_x$ and $\mathfrak{A} = \{{\cal A}_{0}, {\cal A}_{1}, \ldots, {\cal A}_{z}\}$.
\end{proof}
%
%

\subsection{Rerouting linkages inside railed annuli}\label{fjnsdlkgnls}

In the rest of this section we show how to reroute topological minor models inside railed annuli.
For this reason, in \autoref{fjnsdlkgnls}, we define the notion of a linkage, which we study as a subgraph of a partially disk-embedded graph.
It has been proved \cite[Corollary 1]{GolovachST22comb} that if a linkage $L$ of a partially disk-embedded graph invades a sufficiently large railed annulus inside the disk, then there is an equivalent linkage that is ``combed'' through the rails of the annulus.
In \autoref{dkgldjgklfdhgklsfh}, we extend this result (\autoref{fsfsdfdsdsafsadffasdasfd2}) to topological minor models, by treating the paths of the model as paths of the linkage, and we conclude with the ``model combing theorem'' (\autoref{jklnlnlk}) that allows us to reroute topological minor models in order to ``comb'' them through the rails of a sufficiently large railed annulus.

Before stating \autoref{fsfsdfdsdsafsadffasdasfd2} we need some definitions.

\paragraph{Linkages.}
A \emph{linkage} in a graph $G$ is a subgraph $L$ of $G$  whose connected components are all non-trivial paths. The {\em paths} of a linkage are its connected components and we denote them by ${\cal P}(L).$
The {\em size} of $L$ is the number of paths and is denoted by $|L|$.
The \emph{terminals} of a linkage $L$, denoted by $T(L)$, are the endpoints of the paths in ${\cal P}(L)$, and
the \emph{pattern} of $L$ is the set $\big\{\{s,t\}\mid {\cal P}(L)\mbox{ contains some $(s,t)$-path}\big\}.$ Two linkages $L_{1},L_{2}$ of $G$  are {\em equivalent} if they have the same pattern and we denote this fact by $L_{1}\equiv L_{2}$.

\paragraph{Linkages in railed annuli.}
Let $G$ be a \seg, let ${\cal A}=({\cal C},{\cal P})$ be a $(r,q)$-railed annulus of $G$ and $L$ be a linkage of $G$.
Given a set $D\subseteq \Delta$, then we say that $L$ is {\em  $D$-avoiding }
if $T(L)\cap D =\emptyset$.
We also say that $L$ is {\em ${\cal A}$-avoiding} if it is $\ann({\cal C})$-avoiding (see Figure~\ref{asdfdsghdfhgdfdfsvdfgdfdsafdsf}). 

\begin{figure}[ht]
	\centering
	\scalebox{0.8}{
	\sshow{0}{\begin{tikzpicture}[scale=.53]

	\foreach \x in {2,...,6}{
		\draw[line width =0.6pt] (0,0) circle (\x cm);
	}
	\begin{scope}[on background layer]
	\fill[celestialblue!10!white] (0,0) circle (6 cm);
	\fill[white] (0,0) circle (2 cm);
	\end{scope}
	
	\node[black node] (P11) at (45:6) {};
	\node[black node] (P21a) at (30:5) {};
	\node[black node] (P21b) at (40:5) {};
	\node[black node] (P31a) at (35:4) {};
	\node[black node] (P31b) at (50:4) {};
	\node[black node] (P41a) at (45:3) {};
	\node[black node] (P41b) at (25:3) {};
	\node[black node] (P51) at (40:2) {};
	
	\draw[line width=1pt] (P11) -- (P21a) -- (P21b) -- (P31a) -- (P31b) -- (P41a) -- (P41b) -- (P51);
	
	\node[black node] (P12) at (70:6) {};
	\node[black node] (P22a) at (80:5) {};
	\node[black node] (P22b) at (75:5) {};
	\node[black node] (P32a) at (90:4) {};
	\node[black node] (P32b) at (75:4) {};
	\node[black node] (P42a) at (85:3) {};
	\node[black node] (P42b) at (70:3) {};
	\node[black node] (P52) at (75:2) {};
	
	\draw[line width=1pt] (P12) -- (P22a) -- (P22b) -- (P32a) -- (P32b) -- (P42a) -- (P42b) -- (P52);
	
	\node[black node] (P13a) at (120:6) {};
	\node[black node] (P13b) at (110:6) {};
	\node[black node] (P23a) at (110:5) {};
	\node[black node] (P23b) at (115:5) {};
	\node[black node] (P33a) at (120:4) {};
	\node[black node] (P33b) at (130:4) {};
	\node[black node] (P43a) at (135:3) {};
	\node[black node] (P43b) at (120:3) {};
	\node[black node] (P53) at (125:2) {};
	
	\draw[line width=1pt] (P13a) -- (P13b) -- (P23a) -- (P23b) -- (P33a) -- (P33b) -- (P43a) -- (P43b) -- (P53);
	
	\node[black node] (P14a) at (170:6) {};
	\node[black node] (P14b) at (160:6) {};
	\node[black node] (P24) at (155:5) {};
	\node[black node] (P34) at (160:4) {};
	\node[black node] (P44a) at (165:3) {};
	\node[black node] (P44b) at (180:3) {};
	\node[black node] (P54) at (170:2) {};
	
	\draw[line width=1pt] (P14a) -- (P14b) -- (P24) -- (P34) -- (P44a) -- (P44b) -- (P54);
	
	\node[black node] (P18a) at (190:6) {};
	\node[black node] (P28a) at (185:5) {};
	\node[black node] (P28b) at (220:5) {};
	\node[black node] (P38a) at (210:4) {};
	\node[black node] (P38b) at (225:4) {};
	\node[black node] (P48a) at (205:3) {};
	\node[black node] (P48b) at (220:3) {};
	\node[black node] (P58) at (200:2) {};
	
	\draw[line width=1pt] (P18a) -- (P28a) to [bend right=15]  (P28b);
	\draw[line width=1pt] (P28b) --  (P38a) -- (P38b) -- (P48a) -- (P48b) -- (P58);

	\node[black node] (P15) at (235:6) {};
	\node[black node] (P25a) at (240:5) {};
	\node[black node] (P25b) at (250:5) {};
	
	\node[black node] (P35b) at (245:4) {};
	\node[black node] (P45a) at (250:3) {};
	\node[black node] (P45b) at (240:3) {};
	\node[black node] (P55) at (235:2) {};
	
	\draw[line width=1pt] (P15) -- (P25a)  -- (P25b) --  (P35b) -- (P45a) -- (P45b) -- (P55);
	
	\node[black node] (P16a) at (290:6) {};
	\node[black node] (P16b) at (300:6) {};
	\node[black node] (P26a) at (295:5) {};
	\node[black node] (P26b) at (310:5) {};
	\node[black node] (P36a) at (300:4) {};
	\node[black node] (P36b) at (290:4) {};
	\node[black node] (P46a) at (310:3) {};
	\node[black node] (P46b) at (325:3) {};
	\node[black node] (P56) at (320:2) {};
	
	\draw[line width=1pt] (P16a) -- (P16b) -- (P26a) -- (P26b) -- (P36a) -- (P36b) -- (P46a) -- (P46b) -- (P56);
	
	\node[black node] (P17a) at (10:6) {};
	\node[black node] (P17b) at (-5:6) {};
	\node[black node] (P27a) at (0:5) {};
	\node[black node] (P27b) at (-10:5) {};
	\node[black node] (P37a) at (-15:4) {};
	\node[black node] (P37b) at (0:4) {};
	\node[black node] (P47a) at (10:3) {};
	\node[black node] (P47b) at (-5:3) {};
	\node[black node] (P57) at (-5:2) {};
	
	\draw[line width=1pt] (P17a) -- (P17b) -- (P27b)  -- (P27a) -- (P37a) -- (P37b) -- (P47a) -- (P47b) -- (P57);
	
	\draw[crimsonglory, line width=1.5pt] plot [smooth, tension=1.2] coordinates { (120:1) (140:2) (120:3.5) (100:4.5) (145:5) (150:6.5)};
	\draw[crimsonglory, line width=1.5pt] plot [smooth, tension=1] coordinates { (245:6.5) (230:5) (230:3) (240:1.5) (270:2.5) (285:4) (280:6.5)};
	\draw[crimsonglory, line width=1.5pt] plot [smooth, tension=1.2] coordinates { (40:6.5) (20:5) (-10:2.5) (-40:6.5)};
	\node[black node]  () at (120:1) {};
	\node[black node]  () at (150:6.5) {};
	\node[black node]  () at (245:6.5) {};
	\node[black node]  () at (280:6.5) {};
	\node[black node]  () at (40:6.5) {};
	\node[black node]  () at (-40:6.5) {};

	\end{tikzpicture}}}
	\caption{An example of a railed annulus ${\cal A}$ and a linkage $L$ (depicted in red) that is ${\cal A}$-avoiding.}
	\label{asdfdsghdfhgdfdfsvdfgdfdsafdsf}
\end{figure}

Let  $r=2t+1$. Let also $s\in[r]$ where $s=2t'+1$.
Given some $I\subseteq [q]$, we say that a linkage  $L$ is {\em $(s,I)$-confined in ${\cal A}$} if 
$$L\cap \ann({\cal C},t+1-t',t+1+t')\subseteq \bigcup_{i\in I}P_{i}.$$

We are now ready to state the following result from \cite{GolovachST20hitt}, 
whose  proof can be found in \cite{GolovachST22comb}.

\begin{proposition}[\!\!\cite{GolovachST20hitt,GolovachST22comb}]
	\label{fsfsdfdsdsafsadffasdasfd2}
	There exist two functions  $\newfun{axfsfsd}, \newfun{axfsfadsrdsfsd}:\Bbb{N}\to\Bbb{N}$, where the images of $\funref{axfsfadsrdsfsd}$ are even, such that 
	for every odd $s\in \Bbb{N}_{\geq  1}$ and every $\ell\in\Bbb{N}$,
	if $G$ is a \seg,  ${\cal A}=({\cal C},{\cal P})$ is a  $(r,q)$-railed 
	annulus of $G$, where  $r=\funref{axfsfadsrdsfsd}(\ell)+s$ and $q\geq  5/2\cdot\funref{axfsfsd}(\ell)$, 
	$L$ is an ${\cal A}$-avoiding  linkage of size at most $\ell$, and $I\subseteq [q]$, where $|I|> \funref{axfsfsd}(\ell)$, then
	$G$ contains a linkage
	$\tilde{L}$ where $\tilde{L}\equiv L$, $\tilde{L}$ is ${\cal A}$-avoiding, $\tilde{L}\setminus \ann({\cal C}) \subseteq L \setminus \ann({\cal C})$, and  $\tilde{L}$ is $(s,I)$-confined in ${\cal A}$.
Moreover, $\funref{axfsfadsrdsfsd}(\ell)=\mathcal{O}((\funref{axfsfsd}(\ell))^{2})$.
\end{proposition}

It follows from the result in~\cite{AdlerKKLST17irre}, that $\funref{axfsfsd}(\ell)=2^{\mathcal{O}(\ell)}$, when $G$ is a planar graph. Furthermore, if $G$ is a graph of 
Euler genus at most  $\gamma$, then  $\funref{axfsfsd}(\ell)=2^{\mathcal{O}_{\gamma}(\ell)}$, because of the result of Mazoit in~\cite{Mazoit13asin}.  We stress that \cite{GolovachST22comb} contains the proof of a  more general version of \autoref{fsfsdfdsdsafsadffasdasfd2} where linkages are {\sl $t$-scattered}, i.e., their paths are within distance at least $t$.

\subsection{Rerouting topological minors}\label{dkgldjgklfdhgklsfh}
We say that $(M,T)$   is a {\em {\sf tm}-pair} if $M$ is  a graph, $T\subseteq V(M)$, and  all vertices in 
$V(M)\setminus T$ have degree two. We denote by ${\sf diss}(M,T)$ the graph obtained 
from  $M$ by dissolving all vertices  in $V(M)\setminus T$.
A {\em {\sf tm}-pair} of a graph $G$  is a  {\sf tm}-pair $(M,T)$ where 
$M$ is a subgraph of $G$. 
Given two graphs $H$ and $G,$ we say that a {\sf tm}-pair $(M,T)$ of $G$,  is a {\em topological minor model of $H$ in $G$} if $H$ is isomorphic to ${\sf diss}(M,T)$. 
We call the vertices in $T$ {\em branch} vertices of $(M,T)$.

\paragraph{Topological minor models in railed annuli.}
Let $G$ be a \seg, let $H$ be a graph, ${\cal A}=({\cal C},{\cal P})$ be a $(r,q)$-railed annulus of $G$.
Let  $r=2t+1$. Let also $s\in[r]$ where $s=2t'+1$.
Given some $I\subseteq [q]$, we say that a topological minor model $(M,T)$ of $H$ in $G$ is {\em $(s,I)$-confined in ${\cal A}$} if 
$$M\cap \ann({\cal C},t+1 -t',t+1+t')\subseteq \bigcup_{i\in I}P_{i}.$$

Intuitively, the above definition demands that $M$ traverses the ``middle'' $(s,q)$-annulus 
by intersecting it only at the rails of ${\cal A}$.

Our algorithms are strongly based on the following combinatorial result.

\begin{theorem}[Model Combing]
\label{jklnlnlk}
There exist two functions $\funref{axfsfsd}, \funref{axfsfadsrdsfsd}:\Bbb{N}\to\Bbb{N}$, {where the images of $\funref{axfsfadsrdsfsd}$ are even},  such that
if
\begin{itemize}
	\item $s$ is a positive odd integer,
	\item $H$ is a graph on at most $g$ edges,
	\item $G$ is a \seg,
	\item  ${\cal A}=({\cal C},{\cal P})$ is an  $(r,q)$-railed
	annulus of $G$, where  $r = \funref{axfsfadsrdsfsd}(g)+2+s$ and $q\geq  5/2 \cdot \funref{axfsfsd}(g)$,
	\item $(M,T)$ is a  topological minor model of $H$ in $G$ such that $T\cap {\sf ann}({\cal A})=\emptyset$, and
	\item $I\subseteq [q]$ where $|I|> \funref{axfsfsd}(g)$,
\end{itemize} then
$G$ contains a topological minor model $(\tilde{M},\tilde{T})$ of $H$ in $G$ such that
\begin{enumerate}
	\item 	 $\tilde{T}=T$,
	\item   $\tilde{M}$ is $(s,I)$-confined in ${\cal A}$ and
	\item $\tilde{M}\setminus \ann({\cal A})\subseteq {M}\setminus \ann({\cal A})$.
\end{enumerate}
Moreover, $\funref{axfsfadsrdsfsd}(g)=\mathcal{O}((\funref{axfsfsd}(g))^{2})$.
\end{theorem}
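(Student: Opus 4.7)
The plan is to prove the theorem by a rerouting argument that modifies $M$ only inside $\ann({\cal A})$, keeping $T$ and the portion of $M$ outside the annulus fixed. Since $T\cap \ann({\cal A})=\emptyset$, every connected component of $M\cap \ann({\cal A})$ is a subpath of $M$ whose internal vertices have degree two and whose endpoints lie on $C_1\cup C_r$. Call these subpaths $L_1,\dots,L_\ell$ and, among all topological minor models of $H$ with branch set $T$ which coincide with $M$ outside $\ann({\cal A})$, I pick one that minimises $|E(M\cap \ann({\cal A}))|$. Under this choice, the linkage $\{L_1,\dots,L_\ell\}$ inside the planar compass of $G$ restricted to $\ann({\cal A})$ becomes \emph{irreducible}: no $L_j$ can be replaced by a strictly shorter alternative between the same endpoints while staying disjoint from the rest.

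The first quantitative ingredient is to bound $\ell$ by $\funref{axfsfsd}(g)$. A planar counting argument, exploiting minimality and the fact that all branch vertices lie outside $\ann({\cal A})$, shows that an edge-path of $M$ can contribute only a bounded number of segments: any segment entering and exiting through $C_1$ (or through $C_r$) without performing work essential relative to other segments can be absorbed or shortcut along a sub-arc of $C_1$ (or $C_r$), contradicting minimality. This accounts for the two parameters of the hypothesis: $|I|>\funref{axfsfsd}(g)$ ensures enough ``highway'' rails in $I$ to carry every segment through the middle sub-annulus, while the slack $5/2$ on $q$ reserves the extra rails needed as connectors between segment endpoints on $C_1\cup C_r$ and their chosen highway rails in $I$.

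The technical core is an application of the unique linkage theorem of Robertson--Seymour~\cite{RobertsonSGM22,RobertsonS09XXI} (equivalently, the version of Kawarabayashi--Wollan~\cite{KawarabayashiW10asho}), applied to the linkage $L_1,\dots,L_\ell$ inside the planar graph $G\cap \ann({\cal A})$. Because the railed annulus provides $r\geq \funref{axfsfadsrdsfsd}(g)+2+s$ nested cycles---in particular at least $\funref{axfsfadsrdsfsd}(g)/2+1$ buffer layers on each side of the middle sub-annulus $\ann({\cal C},t-t',t+t')$---and at least $\tfrac{5}{2}\funref{axfsfsd}(g)$ rails, the unique linkage theorem allows me to replace the linkage by a new one $\tilde L_1,\dots,\tilde L_\ell$ with the same endpoint pairs, pairwise vertex-disjoint, all contained in $\ann({\cal A})$, and such that $\bigcup_j \tilde L_j$ meets $\ann({\cal C},t-t',t+t')$ only in $\bigcup_{i\in I} P_i$. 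The buffer cycles provide the depth needed to route the linkage into the prescribed rails of $I$, while the rails outside $I$ absorb the local rearrangements near $C_1$ and $C_r$. Setting $\tilde M := (M\setminus \ann({\cal A}))\cup\bigcup_j \tilde L_j$ and $\tilde T := T$ then yields a topological minor model of $H$ with all three required properties: $\tilde T = T$, $(s,I)$-confinement in ${\cal A}$, and $\tilde M \setminus \ann({\cal A}) \subseteq M \setminus \ann({\cal A})$.

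The main obstacle, and the source of the technicality, is the rerouting step: the unique linkage theorem guarantees only the existence of \emph{some} reroute through a sufficiently large obstruction-free zone, whereas here we must steer the reroute into the rails indexed by the prescribed set $I$. Translating the planar railed-annulus geometry into the exact hypotheses of that theorem---and then cascading the reroute cycle-by-cycle across the buffer zones---requires careful bookkeeping. The two separate functions $\funref{axfsfsd}$ and $\funref{axfsfadsrdsfsd}$ in the hypothesis correspond to two independent quantities of this reduction: the size of the linkage to accommodate, and the depth of the obstruction-free zone needed to force the reroute through $\bigcup_{i\in I} P_i$.
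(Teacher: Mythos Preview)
Your high-level architecture (convert the model inside the annulus to a linkage, minimise, invoke unique-linkage, reroute) matches the paper's, but two of the load-bearing claims are misstated in a way that would make the argument break if you tried to fill it in.

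First, the bound on $\ell$. You claim that after minimisation the total number of components of $M\cap\ann({\cal A})$ is at most $\funref{axfsfsd}(g)$. This is not what one can prove, and the paper does not prove it. A single edge-path of $M$ may enter and leave the annulus through $C_1$ arbitrarily many times, creating many ``mountain'' segments (both endpoints on $C_1$); your shortcut-along-$C_1$ idea fails because the relevant arc of $C_1$ may already be occupied by other segments. What the paper actually bounds is two different quantities: the number of \emph{rivers} (segments running all the way from $C_1$ to $C_r$) is at most $\funref{axfsfsd}(g)$, via a bramble/treewidth argument (\autoref{u4l49rop0r} and \autoref{i8o2mq0}); and the \emph{height} of every mountain and the \emph{depth} of every valley is at most $\tfrac{3}{2}\funref{axfsfsd}(g)$, via a tightness argument (\autoref{o7ui4jkhq0f}, \autoref{hks5llozn}, \autoref{to954jgd}). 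Mountains and valleys of bounded height never reach the middle $s$-annulus at all, so only the (few) rivers need rerouting. You need this separation; bounding $\ell$ globally does not go through.

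Second, the unique linkage theorem does not do what you say it does. \autoref{i94opq} is the statement ``vital linkage $\Rightarrow$ bounded treewidth''; it says nothing about steering a linkage through a prescribed set $I$ of rails. In the paper it is used only indirectly, to obtain the treewidth bound that feeds the bramble arguments above. The actual $(s,I)$-confinement is achieved by an explicit grid-routing construction (\autoref{upside} and \autoref{lo3qx}): once you know there are at most $\funref{axfsfsd}(g)$ rivers, you cut each river at a carefully chosen point near each boundary, walk along cycle arcs and rail segments to the $(b{+}i)$-th rail, and then use the grid minor inside the middle block of the annulus to thread each river onto its designated rail in $I$. This is where the lower bounds $r\geq \funref{axfsfadsrdsfsd}(g)+2+s$ and $q\geq \tfrac{5}{2}\funref{axfsfsd}(g)$ are actually consumed. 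Your sketch conflates this explicit construction with the unique linkage theorem; without the grid routing you have no mechanism to force the reroute through $\bigcup_{i\in I}P_i$.
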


\begin{proof}
	Let $s$ be a positive odd integer, $H$ be a graph on $g$ edges, $G$ be a \seg, ${\cal A}=({\cal C},{\cal P})$ be a  $(r,q)$-railed 
	annulus of $G$, where  $r= \funref{axfsfadsrdsfsd}(g) +2+s$ and $q\geq  5/2\cdot \funref{axfsfsd}(g)$, $(M,T)$ be a topological minor model of $H$ in $G$  such that $T\cap \ann({\cal A})=\emptyset$.
	
	Notice that all the connected components of $M\setminus T$ are paths of $G$. Let $L$ be the linkage of $G\setminus T$ created by taking the union of all non-trivial connected components of $M\setminus T$ (see \autoref{sdfadfsdfgdgfdgfdg}). Observe that
	${\cal P}(L)$ is the set of all paths of $G$ connecting neighbors of branch vertices of $M$ and consisting only of subdividing vertices of $M$ and that there is an one-to-one correspondence of ${\cal P}(L)$ with $E(H)$. Thus $|L|\leq g$.

	Let ${\cal A}'=([C_{2}, \ldots, C_{r-1}], {\cal P}\cap \ann({\cal C},2,r-1))$ and keep in mind that ${\cal A}'$  is a  $(r',q)$-railed 
	annulus of $G$, where  $r'= \funref{axfsfadsrdsfsd}(g)+s$ and $q\geq  5/2\cdot \funref{axfsfsd}(g)$. The fact that $T\cap \ann({\cal A})=\emptyset$ implies that $T(L)\cap \ann({\cal A}')=\emptyset$ and thus $L$ is ${\cal A}'$-avoiding (see \autoref{sdfadfsdfgdgfdgfdg}).

	\begin{figure}[ht]
		\centering
	\scalebox{0.9}{		
		\sshow{0}{\begin{tikzpicture}[scale=.35]
		\foreach \x in {2,...,10}{
			\draw[line width =0.6pt] (0,0) circle (\x cm);
		}
		\begin{scope}[on background layer]
		\fill[applegreen!50!white] (0,0) circle (9 cm);
		\fill[white] (0,0) circle (3 cm);
		\end{scope}
		
		\node[track node 1] (1P1) at (30:2) {};
		\node[track node 1] (1P2) at (40:3) {};
		\node[track node 1] (1P3) at (45:4) {};
		\node[track node 1] (1P4) at (40:5) {};
		\node[track node 1] (1P5) at (50:6) {};
		\node[track node 1] (1P6) at (45:7) {};
		\node[track node 1] (1P7) at (40:8) {};
		\node[track node 1] (1P8) at (35:8) {};
		\node[track node 1] (1P9) at (30:9) {};
		\node[track node 1] (1P10) at (30:10) {};
		
		\foreach \i/\j in {1/2,2/3,3/4,4/5,5/6,6/7,7/8,8/9,9/10}{
			\draw[-] (1P\i) to (1P\j);
		}
		
		\node[track node 1] (2P1) at (70:2) {};
		\node[track node 1] (2P2) at (80:3) {};
		\node[track node 1] (2P3) at (75:4) {};
		\node[track node 1] (2P4) at (70:5) {};
		\node[track node 1] (2P5) at (80:6) {};
		\node[track node 1] (2P6) at (85:6) {};
		\node[track node 1] (2P7) at (80:7) {};
		\node[track node 1] (2P8) at (85:8) {};
		\node[track node 1] (2P9) at (80:9) {};
		\node[track node 1] (2P10) at (75:10) {};
		
		\foreach \i/\j in {1/2,2/3,3/4,4/5,5/6,6/7,7/8,8/9,9/10}{
			\draw[-] (2P\i) to (2P\j);
		}
		
		\node[track node 1] (3P1) at (130:2) {};
		\node[track node 1] (3P2) at (140:3) {};
		\node[track node 1] (3P3) at (135:4) {};
		\node[track node 1] (3P4) at (140:5) {};
		\node[track node 1] (3P5) at (130:6) {};
		\node[track node 1] (3P6) at (135:7) {};
		\node[track node 1] (3P7) at (140:7) {};
		\node[track node 1] (3P8) at (135:8) {};
		\node[track node 1] (3P9) at (130:9) {};
		\node[track node 1] (3P10) at (130:10) {};
		
		\foreach \i/\j in {1/2,2/3,3/4,4/5,5/6,6/7,7/8,8/9,9/10}{
			\draw[-] (3P\i) to (3P\j);
		}
		
		\node[track node 1] (4P1) at (180:2) {};
		\node[track node 1] (4P2) at (170:3) {};
		\node[track node 1] (4P3) at (175:3) {};
		\node[track node 1] (4P4) at (180:4) {};
		\node[track node 1] (4P5) at (170:5) {};
		\node[track node 1] (4P6) at (165:6) {};
		\node[track node 1] (4P7) at (170:7) {};
		\node[track node 1] (4P8) at (175:8) {};
		\node[track node 1] (4P9) at (170:9) {};
		\node[track node 1] (4P10) at (180:10) {};
		
		\foreach \i/\j in {1/2,2/3,3/4,4/5,5/6,6/7,7/8,8/9,9/10}{
			\draw[-] (4P\i) to (4P\j);
		}
		
		\node[track node 1] (5P1) at (210:2) {};
		\node[track node 1] (5P2) at (220:3) {};
		\node[track node 1] (5P3) at (225:4) {};
		\node[track node 1] (5P4) at (230:4) {};
		\node[track node 1] (5P5) at (220:5) {};
		\node[track node 1] (5P6) at (215:6) {};
		\node[track node 1] (5P7) at (220:7) {};
		\node[track node 1] (5P8) at (210:8) {};
		\node[track node 1] (5P9) at (220:9) {};
		\node[track node 1] (5P10) at (215:10) {};
		
		\foreach \i/\j in {1/2,2/3,3/4,4/5,5/6,6/7,7/8,8/9,9/10}{
			\draw[-] (5P\i) to (5P\j);
		}
		
		\node[track node 1] (6P1) at (270:2) {};
		\node[track node 1] (6P2) at (280:3) {};
		\node[track node 1] (6P3) at (275:4) {};
		\node[track node 1] (6P4) at (270:5) {};
		\node[track node 1] (6P5) at (265:6) {};
		\node[track node 1] (6P6) at (275:7) {};
		\node[track node 1] (6P7) at (285:8) {};
		\node[track node 1] (6P8) at (280:8) {};
		\node[track node 1] (6P9) at (270:9) {};
		\node[track node 1] (6P10) at (280:10) {};
		
		\foreach \i/\j in {1/2,2/3,3/4,4/5,5/6,6/7,7/8,8/9,9/10}{
			\draw[-] (6P\i) to (6P\j);
		}
		
		\node[track node 1] (7P1) at (-40:2) {};
		\node[track node 1] (7P2) at (-50:3) {};
		\node[track node 1] (7P3) at (-45:4) {};
		\node[track node 1] (7P4) at (-40:5) {};
		\node[track node 1] (7P5) at (-50:6) {};
		\node[track node 1] (7P6) at (-55:7) {};
		\node[track node 1] (7P7) at (-45:7) {};
		\node[track node 1] (7P8) at (-50:8) {};
		\node[track node 1] (7P9) at (-40:9) {};
		\node[track node 1] (7P10) at (-45:10) {};
		
		\foreach \i/\j in {1/2,2/3,3/4,4/5,5/6,6/7,7/8,8/9,9/10}{
			\draw[-] (7P\i) to (7P\j);
		}
		
		\node[model node] (M1) at (200:12) {};
		\node[model node] (M2) at (190:12) {};
		\node[model node] (M3) at (200:13) {};
		\node[model node] (M4) at (0:0) {};
		\node[model node] (M5) at (70:13) {};
		\node[model node] (M6) at (30:12) {};
		\node[model node] (M7) at (-20:15) {};

		\draw[celestialblue, line width=1pt] plot [smooth, tension=1] coordinates {(200:10) (200:8) (180:7) (160:6) (180:5.5) (160:2)};
		\draw[celestialblue, line width=1pt] (M1)-- (200:10) (160:2)-- (M4);
		\node[rep node] () at (200:10) {};
		\node[rep node] () at (160:2) {};
		
		\draw[celestialblue, line width=1pt] plot [smooth, tension=1] coordinates {(70:1) (90:4) (110:8) (90:10)};
		\draw[celestialblue, line width=1pt] (M4) -- (70:1) (90:10)--(M5);
		\node[rep node] () at (70:1) {};
		\node[rep node] () at (90:10) {};
		
		\draw[celestialblue, line width=1pt] plot [smooth, tension=1] coordinates {(230:11) (250:1) (290:8) (-40:10)};
		\draw[celestialblue, line width=1pt] (M1) -- (230:11) (-40:10)--(M7);
		\node[rep node] () at (230:11) {};
		\node[rep node] () at (-40:10) {};
		
		\draw[celestialblue, line width=1pt] plot [smooth, tension=1] coordinates { (0:1.5) (10:4) (-40:7) (-30:10.5)};
		\draw[celestialblue, line width=1pt] (M4) -- (0:1.5) (-30:10.5) --(M7);
		\node[rep node] () at (0:1.5) {};
		\node[rep node] () at (-30:10.5)  {};
		
		\draw[celestialblue, line width=1pt] plot [smooth, tension=1] coordinates {(-10:12) (20:6.5) (30:9) (20:11)};
		\draw[celestialblue, line width=1pt] (M7) -- (-10:12) (20:11)--(M6);
		\node[rep node] () at (-10:12) {};
		\node[rep node] () at (20:11) {};
		
		\draw[celestialblue, line width=1pt] plot [smooth, tension=1] coordinates {(-15:10) (-10:6) (60:6) (70:11)};
		\draw[celestialblue, line width=1pt] (M7) -- (-15:10) (70:11)--(M5);
		\node[rep node] () at (-15:10) {};
		\node[rep node] () at (70:11) {};
		
		\node[rep node] (m) at (45:12) {};
		\draw[celestialblue, line width=1pt] (M1)-- (M2) (M1) -- (M3) (M2) --  (M3) (M5) --(m) -- (M6);
		
		\end{tikzpicture}}}

		\caption{An example of a topological minor model $(M,T)$ of $H$ in $G$. Vertices of $T$ are depicted in blue while the neighbors of vertices of $T$ that are also subdividing vertices are depicted in red. Also, $\ann({\cal A}')$ is depicted in green.}
				\label{sdfadfsdfgdgfdgfdg}
	\end{figure}
		
	Let $I\subseteq [q]$, where $|I|>\funref{axfsfsd}(h)$. By applying \autoref{fsfsdfdsdsafsadffasdasfd2} for $s, g, G, {\cal A}', L$, and $I$ we obtain a linkage $\tilde{L}$ of $G$ such that $\tilde{L}\equiv  L$, $\tilde{L}$ is ${\cal A}'$-avoiding, $\tilde{L}\setminus \ann({\cal A'})\subseteq L \setminus \ann({\cal A'})$, and $\tilde{L}$ $(s,I)$-confined in ${\cal A}'$.
	We define
	$$\tilde{M} = (M\setminus L)\cup\tilde{L}.$$
	By definition, $(\tilde{M},T)$ is a topological minor model  of $H$ in $G$. Also, since $L, \tilde{L}\subseteq \ann({\cal A})$, then $\tilde{M}\setminus \ann({\cal A})\subseteq M \setminus \ann({\cal A})$. Finally, as $\tilde{L}$ is $(s,I)$-confined in ${\cal A}'$ then $\tilde{M}$ is $(s,I)$-confined in ${\cal A}$ as well.
\end{proof}

\section{Optimizing the Dynamic programming}
\label{dsfnskdalgnaklds}

According to the classic meta-algorithmic results of \cite{ArnborgLS91easy,BoriePT92} (see also~\cite{ArnborgP89,CourcelleM93})  computing ${\bf tm}_{\cal F}(G)$ can be done in $\mathcal{O}_{h,\tw}(n)$ time.
As we want to optimize the contribution of $k$ in our algorithm, we present in this section
a way to design a dynamic programming algorithm for computing ${\bf tm}_{\cal F}(G)$ 
in $2^{\mathcal{O}_{h}(\tw\log\tw)}n$ time. Actually, we give a more general statement of this result, \autoref{sssswfgfegwergewgwergwegr}, that 
will be useful in intermediate steps of our algorithm, presented in \autoref{asdfsdfdsfgsgdsg} and \autoref{gdgafssdfgsfhdsdhgshghsfgssghhfsgj}. In particular,  \autoref{sssswfgfegwergewgwergwegr}  will allow us to find (partial or complete) solutions to the (partial or complete) instances of the problem in bounded treewidth graphs.
In order to prove \autoref{sssswfgfegwergewgwergwegr}, we adapt the main ideas of \cite{arxivmoster} in our context.
Thus, in \autoref{ejflgjrlpe}, we define boundaried graphs and an equivalence relation among them with respect to the existence of certain topological minor models as subgraphs, which gives rise to a minimum-sized {\it representative} of each equivalence class.
In \autoref{kergjreklhjg}, we use known results from the protrusion machinery and bidimensionality theory \cite{arxivmoster,FominLST16, BasteST20hittI} to deduce that the size of each representative is bounded by a function of its boundary size.
Finally, in \autoref{jkelgjklrgj}, we define a notion of annotated boundaried graphs, the {\it enhanced boundaried graphs}, we extend the notions of equivalence of boundaried graphs to {\it meta-equivalence} of enhanced boundaried graphs, and we consider the {\it meta-representatives} of enhanced boundaried graphs.
We also prove that the number of different meta-representatives is bounded by a function of the boundary size (\autoref{fdhsdhsfghfgs}) and, using the dynamic programming tools of \cite{BasteST20hittI}, we conclude with an algorithm (\autoref{sssswfgfegwergewgwergwegr}) that computes the minimal size modulator of an enhanced boundaried graph to a given meta-representative.

\subsection{Boundaried graphs and representatives}\label{ejflgjrlpe}
We begin with some definitions, which originate in the seminal work of Bodlaender et al. \cite{BodlaenderFLPST16meta}.

\paragraph{Boundaried graphs.}
Let $t\in\Bbb{N}$.
A \emph{$t$-boundaried graph} is a triple $\bound{G} = (G,B,ρ)$ where $G$ is a graph, $B \subseteq V(G),$ $|B| =  t,$ and
$ρ : B \rightarrow [t]$ is an bijective function.
We call $B$ the {\em boundary} of ${\bf G}$ and we call the  vertices
of $B$ {\em the boundary vertices} of ${\bf G}$. We also
call $G$ {\em the underlying graph} of ${\bf G}$.
We say that the $t$-boundaried ${\bf G}'=(G',B',ρ')$ is a {\em subgraph}  of ${\bf G}$ if
$G'$ is a subgraph of $G$, $B'=B$, and $ρ'=ρ$. 
For $S\subseteq V(G)\setminus B$, we define ${\bf G}\setminus S$ to be the $t$-boundaried graph $(G',B, \rho)$ where $G'=G\setminus S$.
Also, for $B'\subseteq B$, we define the bijection $\rho[B']:B'\to [|B'|]$ such that for every $v\in B'$, $\rho[B'](v)=|\{u\in B'\mid \rho(u)\leq \rho(v)\}|$.
Two $t$-boundaried graphs ${\bf G}_{1}=(G_{1},B_{1},ρ_{1})$ and
${\bf G}_{2}=(G_{2},B_{2},ρ_{2})$ are {\em isomorphic}  if  $G_{1}$ is isomorphic to $G_{2}$ 
via a bijection $\phi: V(G_{1})\to V(G_{2})$ such that $ρ_{1}=ρ_{2}\circ\phi|_{B_{1}} ,$ i.e., the vertices of $B_{1}$ are mapped via $\phi$ to equally indexed vertices  of $B_{2}$.
A {\em boundaried graph} is any $t$-boundaried graph for some $t\in\Bbb{N}$.
As in \cite{RobertsonRXIII} (see also \cite{arxivmoster}), we define the {\em detail}
of a boundaried graph ${\bf G}=(G,B,\rho)$ as ${\sf detail}({\bf G}):=\max\{|E(G)|,|V(G)\setminus B|\}$.
Let $h,t\in \Bbb{N}$.
We denote by ${\cal B}^{(t)}$ the set of all  (pairwise non-isomorphic) $t$-boundaried graphs and by ${\cal B}^{(t)}_{h}$ the set of all (pairwise non-isomorphic) $t$-boundaried graphs with detail at most $h$.
We set ${\cal B}=\bigcup_{t\in\Bbb{N}} {\cal B}^{(t)}$.
{We say that a boundaried graph ${\bf G}=(G,B,\rho)$ is {\em planar} if $G$ is planar.}

We also define the {\em treewidth} of a boundaried graph ${\bf G}=(G,B,ρ)$, denoted by $\tw({\bf G})$, 
as the minimum width of a tree decomposition  $(T,\chi)$ of $G$ for which there is some $u\in V(T)$
such that $B\subseteq \chi(u)$. Notice that the treewidth of a $t$-boundaried graph is always lower bounded by $t-1$.



\paragraph{Equivalent boundaried graphs and representatives.}
We say that two boundaried graphs ${\bf G}_1=(G_1, B_1, \rho_1)$ and ${\bf G}_2=(G_2, B_2, \rho_2)$ are {\em compatible} if $\rho_{2}^{-1}\circ \rho_1$ is an isomorphism from $G_1[B_1]$ to $G_2[B_2]$.
Given two compatible boundaried graphs ${\bf G}_1=(G_1, B_1, \rho_1)$ and ${\bf G}_2=(G_2, B_2, \rho_2)$, we define ${\bf G}_1 \oplus {\bf G}_2$ as the graph obtained if we take the disjoint union of $G_1$ and $G_2$ and, for every $i\in[|B_1|]$ we identify vertices $\rho_{1}^{-1}(i)$ and $\rho_{2}^{-1}(i)$.


Given an $h\in \Bbb{N}$, we say that two boundaried graphs ${\bf G}_1$ and ${\bf G}_2$ are {\em $h$-equivalent}, denoted by ${\bf G}_1 \equiv_h {\bf G}_2$, if they are compatible and, for every graph $H$ on at most {$h$ vertices and $h$ edges (or, in other words, every $0$-boundaried graph $H$ with detail at most $h$)} and every boundaried graph ${\bf F}$ that is compatible with ${\bf G}_1$ (hence, with ${\bf G}_2$ as well), it holds that
$$ H\preceq {\bf F}\oplus {\bf G}_1 \Longleftrightarrow H \preceq {\bf F}\oplus {\bf G}_2.$$

Note that $\equiv_h$ is an equivalence relation on ${\cal B}$.  In the rest of this section we insist that 
${\cal B}$ contains only planar graphs, therefore  $\equiv_h$ is seen as an equivalence relation on
boundaried planar graphs.

A minimum-sized (first in terms of edges and then in terms of vertices) member of an equivalence class of $\equiv_h$ is called {\em representative} of $\equiv_h$. For every $t\in\Bbb{N}$, we denote by ${\cal R}^{(t)}_{h}$ the set of all $t$-boundaried graphs that are representatives of equivalence classes of $\equiv_h$. We also define the function ${\sf rep}: {\cal B}\to \bigcup_{t\in\Bbb{N}}{\cal R}^{(t)}_h$ that maps each boundaried graph to the representative of the equivalence class of $\equiv_h$ it belongs to.

\subsection{Bounding the size of a representative}\label{kergjreklhjg}
In this subsection we briefly present how the main idea of  \cite{arxivmoster} is applied in our context so as to bound the size of the representatives in ${\cal R}_h^{(t)}$.
We define a graph parameter that measures the minimum amount of vertices needed to {\it affect} every wall of the graph under consideration.
Following \cite[Corollary 25]{arxivmoster}, this parameter on representatives in ${\cal R}_h^{(t)}$ is linear in terms of $t$ (\autoref{ndsjgnamdgn}).
Moreover, we argue that, under the light of bidimensionality theory, this parameter is {\it contraction-bidimensional} and {\it linear-separable}.
We combine all above facts and employ a result of Fomin et al. \cite[Theorem 3.11]{FominLST16} and a slight extension of a result of Baste et al. \cite[Lemma 7.2]{BasteST20hittI} to obtain the desired linear bound on the size of every representative (\autoref{flmdgsd}).
To achieve this, we have to deal with {\it protrusion decompositions}, a notion which is also defined in this subsection and was introduced in \cite{BodlaenderFLPST16meta}.

\paragraph{Affecting walls in planar graphs.}
Let $r\in\Bbb{N}_{\geq 3}$, $G$ be a planar graph, $S\subseteq V(G)$, and $W$ be an $r$-wall of $G$.
We say that $S$ {\em affects $W$} if the following condition holds:
for every embedding $\Gamma$ of $G$ on the plane and every closed disk $\Delta$, if the connected component $\Delta_W$ of $\Bbb{R}^{2}\setminus {\sf perim}(W)$ that intersects $W$ is a subset of $\Delta$, then $S\cap \Delta_W \neq \emptyset$.
Given an $r\in\Bbb{N}_{\geq 3}$, we define the following graph parameter on planar graphs $${\bf p}_{r}(G) = \text{min}\{k\mid \exists S \subseteq V(G): \ |S|\leq k\  \wedge S \text{ affects every $r$-wall of $G$}\}.$$

{
From now on, functions $\funref{axfsfsd},\funref{axfsfadsrdsfsd}$ will always denote the functions of \autoref{jklnlnlk}.}
The following result is \cite[Corollary 25]{arxivmoster} {in the special case of planar graphs}.
\begin{proposition}\label{ndsjgnamdgn}
Let $h,t\in\Bbb{N}$.
There exists a function $\newfun{dgmksdgnl}: \Bbb{N}\to \Bbb{N}$ such that for every  ${\bf G}\in{\cal R}_h^{(t)}$ it holds that ${\bf p}_{\funref{dgmksdgnl}(h)}(G)\leq {t}$. 
Moreover, $\funref{dgmksdgnl}(h) = \mathcal{O}((\funref{axfsfsd}(h))^{3})$.
\end{proposition}

We comment that the function $\funref{dgmksdgnl}$ is obtained by \cite[Theorem 23 and Corollary 25]{arxivmoster} taking into account that in our case we deal with walls whose compass is planar.

\paragraph{Protrusion decompositions of boundaried graphs.}
Given a graph $G$, a set $X\subseteq V(G)$ is a {\em $\beta$-protrusion} of $G$ if $|\partial (X)|\leq \beta$ and $\tw (G[X])\leq \beta -1$.
Given a boundaried graph ${\bf G}=(G,B, \rho)$, a boundaried graph ${\bf G}'=(G', B', {\rho'})$ is a {\em $\beta$-protrusion} of ${\bf G}$ if 
\begin{itemize}

\item $V(G')$ is a $\beta$-protrusion of $G$,

\item $\tw ({\bf G}')\leq \beta-1$,

\item $\partial(V(G'))\subseteq B'$, 

\item $B\cap V(G')\subseteq B'$, and

\item $\rho'=\rho[B']$.
\end{itemize}

Given $\alpha, \ell\in \Bbb{N}$, an {\em $(\alpha, \beta)$-protrusion decomposition} of ${\bf G}$ is a sequence ${\cal P}=\langle R_0, \ldots, R_\ell\rangle$ of pairwise disjoint subsets of $V(G)$ such that

\begin{itemize}
\item $\bigcup_{i\in [0,\ell]} R_i=V(G)$,

\item $\max\{\ell, | R_0 |\}\leq \alpha$,

\item $B\subseteq R_0$,

\item for $i\in[\ell]$, the triple ${\bf G}'_{i}=(G'_{i},B'_{i},\rho'_{i})$, where $G_{i}'=G[N_G [R_i]]$, $B_{i}'=\partial(N_G [R_i])$, and $\rho_i'= \rho[B_{i}']\to [|B_{i}'|]$, is a $\beta$-protrusion of ${\bf G}$, and

\item for $i\in [\ell]$, $N_G (R_i)\subseteq R_0$.
\end{itemize}

\paragraph{Linear protrusion decompositions of representatives.}
Before we proceed, we give the definition of a {contraction-bidimensional} parameter, as defined in \cite{FominLST16}. A {\em graph parameter} is a function $\pi$ mapping graphs to non-negative integers.
We say that a graph parameter ${\bf \pi}$ is {\em contraction-bidimensional} if for every graph $G$ and every $e\in E(G)$ it holds that ${\bf \pi} (G/e) \leq {\bf \pi} (G)$ and ${\bf \pi} (\Gamma_k)=\Omega (k^2)$, where $\Gamma_k$ is the $(k\times k)$-triangulated grid\footnote{the graph  obtained from the $(k\times k)$-grid by adding, for all $1\leq x,y\leq k-1$, the edge with endpoints $(x+1,y)$ and $(x,y+1)$ and additionally making vertex $(k,k)$ adjacent to all the other vertices $(x,y)$ with $x\in\{1,k\}$ or $y\in \{1,k\}$, i.e., to the whole perimetric border of the grid.}.
Notice that ${\bf p}_r$ is contraction-bidimensional.

Moreover, it easy to observe that ${\bf p}_r$ is {\em linear-separable}, i.e., for every graph $G$, every set $S\subseteq V(G)$ of size ${\bf p}_r (G)$ that affects every $r$-wall of $G$, and every separation $(L,R)$ of $G$ it holds that  $||S\cap L| - {\bf p}_{r}(G[L])|=\mathcal{O}(|L\cap R|)$.

Since ${\bf p}_r$ is a contraction-bidimensional and linear-separable parameter on planar graphs, \cite[Theorem 3.11]{FominLST16} together with \autoref{ndsjgnamdgn} imply the following result.

\begin{lemma}\label{protrusion1}
Let $h,t\in\Bbb{N}$. There is a constant $c_h$  such that every ${\bf G}\in {\cal R}_{h}^{(t)}$ admits a
$(c_h \cdot t 
, c_h)$-protrusion decomposition.
{Moreover, $c_h=(\funref{axfsfsd}(h))^{\mathcal{O}(1)}$.}
\end{lemma}

%
%

Using \autoref{protrusion1},
the definition of {$(\alpha, \beta)$-protrusion decompositions},
and the arguments of the proof of~\cite[Lemma 7.2]{BasteST20hittI},
it follows
that the size of every ${\bf G}\in {\cal R}_{h}^{(t)}$ is $\mathcal{O}_h (t)$.

\begin{lemma}\label{flmdgsd}
There is a function $\newfun{ngklagnokr}:\Bbb{N}\to \Bbb{N}$ such that for every $t,h\in \Bbb{N}$, if ${\bf G}$ is a  boundaried graph in ${\cal R}_h^{(t)}$, then the underlying graph of ${\bf G}$ has at most $\funref{ngklagnokr}(h)\cdot t$ vertices.
{Moreover, $\funref{ngklagnokr}(h) = 2^{2^{2^{(\funref{axfsfsd}(h))^{\mathcal{O}(1)} \log \funref{axfsfsd}(h)}}}$.}
\end{lemma}

%
\begin{proof}
Let $s:=f(c_h,h)$, where $f$ is the function of~\cite[Lemma 7.2]{BasteST20hittI} and $c_h$ is the constant from~\autoref{protrusion1}.
From the proof of~\cite[Lemma 7.2]{BasteST20hittI} we can derive that
$f(c_h,h)=2^{2^{2^{\mathcal{O}(c_h \log c_h)}}}$.
Also, we set $\funref{ngklagnokr}(h) =(s+1)\cdot c_h$.

Let ${\bf G} = (G,B,\rho)$ be a boundaried graph in ${\cal R}_{h}^{(t)}$.
We will prove that $G$ has at most $\funref{ngklagnokr}(h)\cdot t$ vertices.
By~\autoref{protrusion1}, ${\bf G}$ admits a $(c_h \cdot t , c_h)$-protrusion decomposition.
Therefore, by definition,
there is an $\ell\in[0,c_h\cdot t]$ and a sequence
${\cal P}=\langle R_0, \ldots, R_\ell\rangle$
of pairwise disjoint subsets of $V(G)$ such that
\begin{itemize}
\item $\bigcup_{i\in [0,\ell]} R_i=V(G)$,

\item $\max\{\ell, | R_0 |\}\leq c_h\cdot t$,

\item $B\subseteq R_0$,

\item for $i\in[\ell]$, the triple ${\bf G}'_{i}=(G'_{i},B'_{i},\rho'_{i})$, where $G_{i}'=G[N_G [R_i]]$, $B_{i}'=\partial(N_G [R_i])$, and $\rho_i'= \rho[B_{i}']\to [|B_{i}'|]$, is a $c_h$-protrusion of ${\bf G}$, and

\item for $i\in [\ell]$, $N_G (R_i)\subseteq R_0$.
\end{itemize}
For every $i\in[\ell]$, we will show that $G'_{i}$ has at most $s$ vertices.
Suppose that, towards a contradiction, $|V(G'_{i})|>s$.
We sketch the proof of~\cite[Lemma 7.2]{BasteST20hittI} and we comment how to adjust it to our setting in order to obtain a contradiction to the fact that  ${\bf G}\in {\cal R}_{h}^{(t)}$.
First of all, in the statement of~\cite[Lemma 7.2]{BasteST20hittI} it is required that the family $\mathcal{F}$ contains a planar graph, an assumption that is not true in our case.
However, inside the proof this is only used in order to bound the treewidth of $G'_{i}$ (in fact, to bound its branchwidth, which we know that is upper-bounded by its treewidth).
Here,
the fact that the treewidth of $G'_{i}$ is at most $c_h$ is implied by the fact that ${\bf G}'_{i}$ is a $c_h$-protrusion of ${\bf G}$.
After this step, the proof of our lemma follows the same
arguments as the one of~\cite[Lemma 7.2]{BasteST20hittI}.
Intuitively, having a bound on the branchwidth of $G'_{i}$ we can consider a branch decomposition of $G'_{i}$ of 
bounded width.
Since we assume that  $|V(G'_{i})|>s$ and we have that  ${\bf G}\in {\cal R}_{h}^{(t)}$,
the number of edges of $G'_i$ is ``large enough'' and therefore the branch decomposition contains a ``long 
enough'' path from the root of the decomposition to a leaf.
Along this ``long enough''  path, we can find two boundaried graphs ${\bf G}''_{1}$ and ${\bf G}'_{2}$ such that
the underlying graph of ${\bf G}''_{2}$ is a subgraph of the underlying graph of ${\bf G}''_{1}$ that has
less edges and ${\bf G}''_{1}\equiv_h{\bf G}'_{2}$.
Therefore, by replacing ${\bf G}''_{1}$ with ${\bf G}'_{2}$, we obtain a boundaried graph that is $h$-equivalent to ${\bf G}$ and whose underlying graph has less edges than $G$, a contradiction to the fact that
 ${\bf G}\in {\cal R}_{h}^{(t)}$. For more details, we refer the reader to the proof of~\cite[Lemma 7.2]{BasteST20hittI}.
\end{proof}

\subsection{Enhanced boundaried graphs}\label{jkelgjklrgj}
In this subsection we define a notion of annotated boundaried graphs, the {\it enhanced boundaried graphs}.
Mirroring the equivalence relation defined in \autoref{ejflgjrlpe} for boundaried graphs, in this subsection we also define {\it meta-equivalences} and {\it meta-representatives} of enhanced boundaried graphs and we prove that the size of a meta-representative is also linear bounded in terms of its boundary size.
This directly implies that an almost single-exponential (with a logarithmic factor error) bound on the number of different meta-representatives (\autoref{dsgldglds}).
Finally, using the ideas of \cite{BasteST20hittI} concerning dynamic programming, we arrive to the main result of this section, \autoref{sssswfgfegwergewgwergwegr}.

\paragraph{Enhanced boundaried graphs.}
Let $t\in\Bbb{N}$ and $q\in\Bbb{N}_{\geq 1}$.
A {\em $q$-enhanced $t$-boundaried graph} is a triple $({\bf G}, {\cal Z}, {\cal V})$, where ${\bf G}=(G, B, \rho)$ is a $t$-boundaried graph, ${\cal Z}=\{Z_1, \dots, Z_q\}$ is a collection of non-empty subsets of $B$, and ${\cal V}=\{V_1, \ldots,V_q\}$ is a collection of non-empty subsets of $V(G)$, such that for every $i\in[q]$, $(G[V_i], Z_i, \rho[Z_i])$ is a boundaried graph.
A {\em $q$-enhanced boundaried graph} is a $q$-enhanced $t$-boundaried graph for some $t\in\Bbb{N}$.
The {\em treewidth} of a $q$-enhanced boundaried graph $({\bf G}, {\cal Z}, {\cal V})$ is the treewidth of ${\bf G}$.
As we did in the previous subsection, we consider only triples $({\bf G}, {\cal Z}, {\cal V})$ where $G$ is planar.

\paragraph{Meta-representatives of enhanced boundaried graphs.}
We say that two $q$-enhanced boundaried graphs $({\bf G}, {\cal Z}, {\cal V})$ and $({\bf G}', {\cal Z}', {\cal V}')$ are {\em $h$-meta-equivalent}, denoted by $({\bf G}, {\cal Z}, {\cal V})\equiv_h^{(q)} ({\bf G}', {\cal Z}', {\cal V}')$, if the following hold:
\begin{itemize}
\item ${\bf G}$ and ${\bf G}'$ are compatible (via the isomorphism ${\rho'}^{-1}\circ \rho$ from $G[B]$ to $G'[B']$) and 
\item for every $i\in[q]$, it holds that $(G[V_i], Z_i, \rho_i)\equiv_h (G'[V_i '], Z_i ', \rho_i ')$.
\end{itemize}

Notice that $\equiv_h^{(q)}$ defines an equivalence relation on $q$-enhanced boundaried graphs. The minimum-sized (first in terms of edges and then in terms of vertices)  member of each equivalence class of $\equiv_h^{(q)} $ is called a
{\em meta-representative} of $\equiv_h^{(q)}$. We denote by ${\cal R}_h^{(q,t)}$ the set of all $q$-enhanced $t$-boundaried graphs that are meta-representatives of $\equiv_h^{(q)}$. We call $|V(G)|$ the {\em size} of a {meta-representative} $({\bf G}, {\cal Z}, {\cal V})$  of $\equiv_h^{(q)}$.

\begin{lemma}\label{fdhsdhsfghfgs}
There is a function $\funref{ngklagnokr}:\Bbb{N}\to \Bbb{N}$ such that for every $t,h\in \Bbb{N}$ and $q\in \Bbb{N}_{\geq 1}$, every  $\bar{\bf G}\in {\cal R}_h^{(q,t)}$ has size at most $\funref{ngklagnokr}(h)\cdot q\cdot t$.
\end{lemma}

\begin{proof}
Let $t,h\in\Bbb{N}$, $q\in \Bbb{N}_{\geq 1}$ and let $({\bf G}, {\cal V}, {\cal Z})$ be a $q$-enhanced $t$-boundaried graph, where ${\bf G}=(G,B,\rho)$. 
For every $i\in[q]$, we denote by ${\bf H}_i$ the graph ${\sf rep}((G[V_i], Z_i, \rho_i))$ and  notice that, due to \autoref{flmdgsd}, the underlying graph of ${\bf H}_i$ has at most $\funref{ngklagnokr}(h)\cdot t$ vertices.
By setting ${\bf G}'=(\bigcup_{i\in [q]} H_i, B, \rho)$ and ${\cal V}'=\{V(H_1),\ldots, V(H_q)\}$,
we observe that the triple $({\bf G}', {\cal V}', {\cal Z})$
is a  $q$-enhanced $t$-boundaried graph that is $h$-meta-equivalent with $({\bf G}, {\cal V}, {\cal Z})$
and its size is at most $\funref{ngklagnokr}(h)\cdot q\cdot t$.
Thus, the meta-representative of the equivalence class of $\equiv_h^{(q)}$ that contains $({\bf G}, {\cal V}, {\cal Z})$ has size at most $\funref{ngklagnokr}(h)\cdot q\cdot t$.
\end{proof}

The following result is a direct consequence of \autoref{fdhsdhsfghfgs}, using the fact that the underlying graph $G$ of every meta-representative of $\equiv_h^{(q)}$ is planar, hence it has $\mathcal{O}(|V(G)|)$ edges.

\begin{corollary}\label{dsgldglds}
There is a function $\newfun{dskglasdgklnfdkn}: \Bbb{N}^3\to \Bbb{N}$ such that for every $t,h\in \Bbb{N}$ and $q\in\Bbb{N}_{\geq 1}$, $|{\cal R}_h^{(q,t)}|\leq \funref{dskglasdgklnfdkn}(h,q,t)$. {Moreover, it holds that $\funref{dskglasdgklnfdkn}(h,q,t)=2^{\mathcal{O}(\funref{ngklagnokr}(h)\cdot q\cdot t\cdot \log (q\cdot t))}.$}
\end{corollary}

Let  $q\in \Bbb{N}_{\geq 1}$ and $({\bf G}, {\cal V}, {\cal Z})$ be a $q$-enhanced  boundaried graph,
where ${\bf G}=(G,B,\rho)$. We say that a set $S\subseteq V(G)$ is {\em boundary-avoiding} if $S\cap B=\emptyset$.
The dynamic programming machinery of \cite[Theorem 1 and Theorem 5]{BasteST20hittI} together with \autoref{dsgldglds} 
and the single-exponential $5$-approximation algorithm of Bodlaender et al. for treewidth \cite[Theorem VI]{BodlaenderDDFLP16} yield the following result.

\begin{lemma}
\label{sssswfgfegwergewgwergwegr}
There is an algorithm with the following specifications:

\smallskip\noindent {\bf Compute\_rep}$(h,q,t,w,k,\bar{\bf G},R, \bar{\bf J})$\\
\noindent{\sl Input:} five integers $t,w, k\in\Bbb{N}$ and $h,q\in\Bbb{N}_{\geq 1}$, where $h\geq t, q$, a  $q$-enhanced (planar) $t$-boundaried graph $\bar{\bf G}$ of treewidth at most $w$, a boundary-avoiding set $R\subseteq V(G)$, and a meta-representative $\bar{\bf J}\in {\cal R}_h^{(q,t)}$.

\noindent{\sl Output:} if exists, the minimum-size set $S_{\bar{\bf J}}\subseteq R$ of size at most $k$ such that $\bar{\bf G}\setminus S_{\bar{\bf J}}\equiv_h^{(q)} \bar{\bf J}$.

\noindent This algorithm runs in $2^{\mathcal{O}(\funref{ngklagnokr}(h)\cdot w\log w)}\cdot n$ time   or, alternatively, in $\mathcal{O}(n^3)+2^{\mathcal{O}(\funref{ngklagnokr}(h)\cdot w)}\cdot n$ time.
\end{lemma}

\begin{proof}
The algorithm first applies the single-exponential $5$-approximation algorithm of Bodlaender et al. for treewidth \cite[Theorem VI]{BodlaenderDDFLP16} to compute a tree decomposition of the underlying graph of $\bar{\bf G}$ of width at most $5w$.
Then,
using the dynamic programming algorithm of~\cite[Theorem 1]{BasteST20hittI},
it checks,
for every $\bar{\bf I}\in {\cal R}_h^{(q,t)}$,
whether there is a set $S_{\bar{\bf I}}\subseteq V(G)$ of size at most $k$
such that $\bar{\bf G}\setminus S_{\bar{\bf I}}\equiv_h^{(q)} \bar{\bf I}$.
We can easily modify this dynamic programming algorithm 
so as it checks whether such a set $S_{\bar{\bf I}}$ is a subset of $R$ and to also output a minimum-size $S_{\bar{\bf I}}$ satisfying all above properties, if such exists.
As, due to~\autoref{dsgldglds}, $|{\cal R}_h^{(q,t)}|\leq \funref{dskglasdgklnfdkn}(h,q,t)$,
this algorithm runs in time $2^{\mathcal{O}(\funref{ngklagnokr}(h)\cdot w\log w)}\cdot n$.
Moreover,
we can replace the algorithm of~\cite[Theorem 1]{BasteST20hittI} with the one of~\cite[Theorem 5]{BasteST20hittI}, which runs in a special branch decomposition of the underlying graph of $\bar{\bf G}$ (called {\sl sphere-cut decomposition}) and  performs in time 
$\mathcal{O}(n^3)+2^{\mathcal{O}(\funref{ngklagnokr}(h)\cdot w)}\cdot n$.
\end{proof}

We stress that, in the case $q=1$, we simply refer to $q$-enhanced $t$-boundaried graphs as $t$-boundaried graphs, and to the equivalence relation $\equiv_{h}^{(q)}$ as $\equiv_h$. Following this, meta-representatives of $\equiv_h^{(1)}$ are just called representatives. \autoref{sssswfgfegwergewgwergwegr} is applied three times in this paper. First we use it in the proof of \autoref{fsfsdfdsdsafsasdsdadffasdasfd} in \autoref{asdfsdfdsfgsgdsg} towards reducing the solution size, second we use it 
with $q=1$  and $k=0$ in  the proof of \autoref{fsfsdfdsdsafsadffasdasfd1} in \autoref{gdgafssdfgsfhdsdhgshghsfgssghhfsgj}, and we also use it with  $t=0$, $k=0$, and $q=1$, in order to deduce \autoref{dfassdfasfdfdsad} from~\autoref{dfasdfdsad}.

\section{The two main subroutines of the algorithm}
\label{asfdsasdfdsf}

In this section, we provide two main subroutines that will be useful in the proof of \autoref{dfasdfdsad}.
In subsection \autoref{asdfsdfdsfgsgdsg}, we provide an algorithm (\autoref{fsfsdfdsdsafsasdsdadffasdasfd}), that allows us to ``safely'' reduce the set of possible candidates to a solution, while in \autoref{gdgafssdfgsfhdsdhgshghsfgssghhfsgj}, we provide an algorithm (\autoref{fsfsdfdsdsafsadffasdasfd1}) that outputs a ``big enough'' wall such that the vertices in its compass are irrelevant with respect to the existence of a solution to the problem.\medskip

Before proceeding to the algorithmic results, we provide some definitions that will facilitate the presentation of the proofs.

\paragraph{Boundaried graphs in railed annuli.}
Let ${\cal A}=({\cal C}, {\cal P})$ be a {$(r,q)$-railed annulus} of a partially $\Delta$-embedded graph $G$.
We can see each path $P_{j}$ in ${\cal P}$ as being oriented towards the ``inner'' part of $\Delta$, i.e., starting from an endpoint of  $P_{1,j}$ and finishing to an endpoint of $P_{r,j}$. 
For every $(i,j)\in[r]\times[q],$   we define $r_{i,j}$ as the first vertex of $P_{j}$ that appears in $P_{i,j}$
while traversing $P_{j}$ according to this orientation. 
Given an $i\in[r]$ and a $t\in [q]$, 
we define the $t$-boundaried graph ${\bf G}_{i,t}=(G_{i},B_{i,t},ρ_{i,t})$
where $G_{i}=G\cap \overline{D}_{i}$, $B_{i,t}=\{r_{i,1}\,\ldots,r_{i,t}\}$ and, for $j\in[t]$, $ρ_{i,t}(r_{i,j})=j$.

\subsection{Reducing the solution space}
\label{asdfsdfdsfgsgdsg}
We now prove the following lemma that intuitively states that there is an algorithm that 
given a graph $G$ and a ``big enough'' railed annulus ${\cal A}$ of $G$, it ``reduces'' the set of vertices that are candidates to the set $S$ that certifies that {${\bf tm}_{\cal F}(G)\leq k$}.

\begin{lemma}
\label{fsfsdfdsdsafsasdsdadffasdasfd}
There are three functions $\newfun{dsalgmdlaotpopyrt},\newfun{dsfndnvgdsqwert}: \Bbb{N}^{2}\to \Bbb{N}$, $\newfun{dfljhklgdfjhklgfj}: \Bbb{N}\to \Bbb{N}$ and an algorithm with the following specifications:

\smallskip\noindent {\bf Reduce\_Solution\_Space}$(k,h,w,{\cal F},\Delta,G,R,{\cal C},{\cal P})$\\
\noindent{\sl Input:} three  integers $k,h,w\in\Bbb{N}$,
a finite set ${\cal F}$ of graphs such that $h\leq h({\cal F})$, a partially $\Delta$-embedded graph $G$ whose compass has treewidth at most $w$, {a set $R\subseteq V(G)$}, and an  $(r,q)$-railed 
annulus ${\cal A}=({\cal C},{\cal P})$ of $G$, where  $r=\funref{dsfndnvgdsqwert}(h,k)$ 
and $q\geq  \funref{dfljhklgdfjhklgfj}(h)$. 

\noindent{\sl Output:} a set $R'\subseteq R$ such that
\begin{itemize}
\item $|R'\cap D_r|\leq \funref{dsalgmdlaotpopyrt}(h,k)$ and
\item if $(G,R,k)$ is a ${\bf tm}_{\cal F}$-triple then $(G,R',k)$ is a ${\bf tm}_{\cal F}$-triple. 
\end{itemize} 
Moreover, $\funref{dsfndnvgdsqwert}(h,k)=\mathcal{O}_{h} (k)$, $\funref{dsalgmdlaotpopyrt}(h,k)=\mathcal{O}_{h} (k^2)$, and the algorithm runs in $2^{\mathcal{O}_{h} (w \log w)}\cdot k\cdot n$ time,  or, alternatively, in $\mathcal{O}(k\cdot n^3)+2^{\mathcal{O}_{h}(w)}\cdot k\cdot n$ time.
\end{lemma}

\begin{proof}
Let $g:= \genfrac{(}{)}{0pt}{}{h}{2}$, $\lambda:=\funref{axfsfsd}(g) + 1$, $μ:=\funref{axfsfadsrdsfsd}(g)+3$,
\begin{align*}
\funref{dsfndnvgdsqwert}(h,k) := & \ (k+1)(h+1)\mu,\\
\funref{dsalgmdlaotpopyrt}(h,k):= & \ \funref{dskglasdgklnfdkn}(h+\lambda, h+1, (h+1)\cdot\lambda)\cdot k(k+1), \text{ and}\\
\funref{dfljhklgdfjhklgfj}(h):= & \ 5/2\cdot \funref{axfsfsd}(g).
\end{align*}
Given an $i \in [k+1]$, we define $A_{i}=\ann({\cal C},(i-1)(h+1)μ+1,i(h+1)μ)$ and for every $j\in[h+1]$ we define $Q_{i,j} = \ann({\cal C},(i-1)(h+1)μ+ (j-1)μ+1, (i-1)(h+1)μ+jμ).$
Intuitively, we partition ${\cal C}$ into $k+1$ sets of consecutive cycles (i.e., the cycles of $A_{i}, i\in[k+1]$) and then, for every $i\in [k+1]$ we further partition the set of cycles of $A_{i}$ into $h+1$ sets of consecutive cycles (i.e., the cycles of $Q_{i, j}$, $j\in [h+1]$). Notice that for every $i,j\in[k+1]\times[h+1]$, $|Q_{i,j}\cap {\cal C}| = μ$ (see \autoref{asdfdsfdsf}). 
\begin{figure}[H]
\centering\scalebox{1}{
\sshow{0}{\begin{tikzpicture}[scale=0.38]
\begin{scope}[xshift = 16 cm]
	\foreach \i in {0,0.5,3, 5.5, 6}{
		\draw[-] (\i,0) -- (\i, 7);
	}
	
	\draw[opacity=0.3] (1,0) -- (1,7);
	\draw[opacity=0.1] (1.5,0) -- (1.5,7);
	
	\draw[opacity=0.2] (2.5,0) -- (2.5,7);
	\draw[opacity=0.2] (3.5,0) -- (3.5,7);
	
	\draw[opacity=0.1] (4.5,0) -- (4.5,7);
	\draw[opacity=0.3] (5,0) -- (5,7);
\end{scope}
	
\begin{scope}[xshift=0.5 cm]
	\draw[-] (0.5,0) -- (0.5, 7);
	\draw[opacity=0.3] (1,0) -- (1,7);
	\draw[opacity=0.1] (1.5,0) -- (1.5,7);
	\draw[opacity=0.1] (2.5,0) -- (2.5,7);
	\draw[opacity=0.3] (3,0) -- (3,7);
	\draw[-] (3.5,0) -- (3.5, 7);
	\draw[opacity=0.1] (4,0) -- (4,7);
\end{scope}
	
\foreach \i in {5,11, 23,29}{
	\begin{scope}[xshift=\i cm]
		\draw[opacity=0.1] (0,0) -- (0,7);
		\draw[-] (0.5,0) -- (0.5, 7);
		\draw[opacity=0.3] (1,0) -- (1,7);
		\draw[opacity=0.1] (1.5,0) -- (1.5,7);
		\draw[opacity=0.1] (2.5,0) -- (2.5,7);
		\draw[opacity=0.3] (3,0) -- (3,7);
		\draw[-] (3.5,0) -- (3.5, 7);
		\draw[opacity=0.1] (4,0) -- (4,7);
	\end{scope}
	}

\begin{scope}[xshift=33.5  cm]

\draw[-] (0.5,0) -- (0.5, 7);
\draw[opacity=0.3] (1,0) -- (1,7);
\draw[opacity=0.1] (1.5,0) -- (1.5,7);
\draw[opacity=0.1] (2.5,0) -- (2.5,7);
\draw[opacity=0.3] (3,0) -- (3,7);
\draw[-] (3.5,0) -- (3.5, 7);
\end{scope}

	\node () at (1,8) {$C_{1}$};
	\node () at (37,8) {$C_{r}$};
	\node[anchor=south] () at (11.5, 7.8) {$C_{(i-1)(h+1)μ+1}$};
	\draw[red] (11.5,8) -- (11.5,7.5);
	\node[small red] () at (11.5,7.5) {};
	\node[anchor=south] () at (14, 9) {$C_{(i-1)(h+1)μ+(j-1)μ+1}$};
	\draw[red] (16,9) -- (16,7.5);
	\node[small red] () at (16,7.5) {};
	\node[anchor=south] () at (24, 9) {$C_{(i-1)(h+1)μ+jμ}$};
	\draw[red] (22,9) -- (22,7.5);
	\node[small red] () at (22,7.5) {};
	\node[anchor=south] () at (19, 10.2) {$C_{(i-1)(h+1)μ+(j-1)μ+\lceil μ/2\rceil}$};
	\draw[red] (19,10) -- (19,7.5);
	\node[small red] () at (19,7.5) {};
	\node[anchor=south] () at (26.5, 7.8) {$C_{i(h+1)μ}$};
		\draw[red] (26.5,8) -- (26.5,7.5);
	\node[small red] () at (26.5,7.5) {};
	\draw [decorate,decoration={brace,amplitude=10pt, mirror}] (1,-1.7) -- (8.5,-1.7) node [black,midway,yshift=-.5cm] {\footnotesize $A_1$};
	\draw [decorate,decoration={brace,amplitude=5pt, mirror}] (1,-.3) -- (4,-.3) node [black,midway,yshift=-.4cm] {\footnotesize $Q_{1,1}$};
	\draw [decorate,decoration={brace,amplitude=5pt, mirror}] (5.5,-.3) -- (8.5,-.3) node [black,midway,yshift=-.4cm] {\footnotesize $Q_{1,h+1}$};
	
	\draw [decorate,decoration={brace,amplitude=10pt, mirror}] (11.5,-1.7) -- (26.5,-1.7) node [black,midway,yshift=-.5cm] {\footnotesize $A_{i}$};
	\draw [decorate,decoration={brace,amplitude=5pt, mirror}] (11.5,-.3) -- (14.5,-.3) node [black,midway,yshift=-.4cm] {\footnotesize $Q_{i,1}$};
	\draw [decorate,decoration={brace,amplitude=5pt, mirror}] (16,-.3) -- (22,-.3) node [black,midway,yshift=-.4cm] {\footnotesize $Q_{i,j}$};
	\draw [decorate,decoration={brace,amplitude=5pt, mirror}] (23.5,-.3) -- (26.5,-.3) node [black,midway,yshift=-.4cm] {\footnotesize $Q_{i,h+1}$};
	
	\draw [decorate,decoration={brace,amplitude=10pt, mirror}] (29.5,-1.7) -- (37,-1.7) node [black,midway,yshift=-.5cm] {\footnotesize $A_{k+1}$};
	\draw [decorate,decoration={brace,amplitude=5pt, mirror}] (29.5,-.3) -- (32.5,-.3) node [black,midway,yshift=-.4cm] {\footnotesize $Q_{k+1,1}$};
	\draw [decorate,decoration={brace,amplitude=5pt, mirror}] (34,-.3) -- (37,-.3) node [black,midway,yshift=-.4cm] {\footnotesize $Q_{k+1,h+1}$};
	
	\node() at (4.8,-.3) {$\ldots$};
	\node() at (10.1,-1.5) {$\ldots$};
	\node() at (15.3,-.3) {$\ldots$};
	\node() at (22.7,-.3) {$\ldots$};
	\node() at (28.1,-1.5) {$\ldots$};
	\node() at (33.3,-.3) {$\ldots$};
	\end{tikzpicture}}}

	\caption{Visualization of the partition of the cycles of ${\cal A}$ into sets $A_{i}, i \in [k+1]$ and of the partition in sets $Q_{i,j}, i,j\in [k+1]\times [h+1]$.}
	\label{asdfdsfdsf}
\end{figure}

Also, we define for every $(i,j)\in[k+1]\times[h+1]$ the $\lambda$-boundaried 
graph $$(G_{i,j}, B_{i,j}, \rho_{i,j}):= {\bf G}_{(i-1)(h+1)μ+(j-1)μ+\lceil μ/2\rceil,\lambda}.$$
To get some intuition, notice that the vertices of $B_{i,j}$ lie on the ``middle'' cycle of $Q_{i,j}$ -- see \autoref{asdfdsfdsf}.
 
Now, for every $i\in[k+1]$, we aim to define a $(h+1)$-enhanced $((h+1)\cdot \lambda)$-boundaried graph obtained by the union of the $\lambda$-boundaried graphs $(G_{i,j}, B_{i,j}, \rho_{i,j})$, for $j\in[h+1]$.
Let  $i\in[k+1]$.
We set  $\bar{\rho}_i: \bigcup_{j\in[h+1]} B_{i,j}\to [(h+1)\cdot \lambda]$ to be the function that, for every $j\in[h +1]$, $$\bar{\rho}_i (v)=(j-1)\cdot \lambda +{\rho}_{i,j}(v), \text{ if }v\in B_{i,j}.$$
Notice that $\bar{\rho}_i$ is a bijection and that, for every $j\in[h+1]$,
$\bar{\rho}_i [B_{i,j}]=\rho_{i,j}$.
Thus, if we set
\begin{align*}
\bar{\bf G}_i:= & \ (\bigcup_{j\in[h+1]} {G}_{i,j},  \bigcup_{j\in[h+1]} B_{i,j}, \bar{\rho}_i),\\ 
{\cal Z}_{i}:= & \ \{B_{i,1},\ldots, B_{i,h+1}\}, \text{ and }\\
{\cal V}_{i}:= & \ \{V(G_{i,1}),\ldots, V(G_{i,h+1})\},
\end{align*}
we have that $(\bar{\bf G}_i, {\cal Z}_i, {\cal V}_i)$ is an $(h+1)$-enhanced $((h+1)\cdot \lambda )$-boundaried graph.

For every $i\in[k+1]$ and every meta-representative $\bar{\bf J}\in{\cal R}_{h+\lambda}^{(h+1,(h+1)\cdot \lambda )}$, let  $S_{i,\bar{\bf J}}$ be the minimum-size subset of $R\cap D_{i(h+1)μ}$
of at most $k$ vertices such that $(\bar{\bf G}_i, {\cal Z}_i, {\cal V}_i)\setminus S_{i,\bar{\bf J}}\equiv_{h+\lambda}^{(h+1)}  \bar{\bf J}$.
If such a set does not exist, then we set $S_{i,\bar{\bf J}}=\emptyset$.
We define $$R^*=(\bigcup_{\genfrac{}{}{0pt}{1}{i\in[k+1]}{\bar{\bf J}\in  {\cal R}_{h+\lambda}^{(h+1,(h+1)\cdot \lambda)}}} S_{i,\bar{\bf J}})\cap D_{r}$$
and $R'=(R\setminus D_r)\cup R^*$.
Observe that $|R^*| \leq (k+1)\cdot |{\cal R}_{h+\lambda}^{(h+1,(h+1)\cdot \lambda)}|\cdot k = \funref{dsalgmdlaotpopyrt}(h,k)$, and therefore $|R'\cap D_r|\leq \funref{dsalgmdlaotpopyrt}(h,k)$.
Also, notice that as the underlying graph of $\bar{\bf G}_{i}$ is a subgraph of the compass of $G$, it has treewidth at most $w$.
Moreover, since $S_{i,\bar{\bf J}}\subseteq R\cap D_{i(h+1)\mu}$, it holds that $R\cap \bigcup_{j\in[h+1]} B_{i,j} = \emptyset$.
Therefore, for every $i\in[k+1]$ and every $\bar{\bf J}\in  {\cal R}_{h+\lambda}^{(h+1,(h+1)\cdot \lambda)}$,
we compute $S_{i,\bar{\bf J}}$ by using the algorithm of \autoref{sssswfgfegwergewgwergwegr} for 
$h:=h+\lambda$,
$q:= h+1$,
$t:=(h+1)\cdot \lambda$,
$\bar{\bf G}:=\bar{\bf G}_i$, and
$R:=R\cap D_{i(h+1)\mu}$.
This algorithm runs in $2^{\mathcal{O}_h (w\log w)}\cdot n$ time, or, alternatively, in $\mathcal{O}(n^3)+2^{\mathcal{O}_{h}(w)}\cdot n$ time, since the underlying graph of each $\bar{\bf G}_i$ is planar.
Therefore, we can compute $R'$ as well,  in $2^{\mathcal{O}_h (w\log w)}\cdot (k+1)\cdot n$ time, or, alternatively, in $(k+1)\cdot \mathcal{O}(n^3)+(k+1)\cdot 2^{\mathcal{O}_{h}(w)}\cdot n$ time.

We now prove that  if $(G,R,k)$ is a ${\bf tm}_{\cal F}$-triple then $(G,R',k)$ is also a ${\bf tm}_{\cal F}$-triple.
In particular, we prove that for every graph $H$ on at most $h$ vertices and every  $S\subseteq R$, if  $|S|\leq k$ and $H\npreceq G\setminus S$,  then there is some $S'\subseteq R'$ such that $|S'|\leq k$ and $H\npreceq G\setminus S'$.

Let $H$ be  graph on at most $h$ vertices (and, therefore, of at most $g$ edges) and let $S\subseteq R$ such that  $|S|\leq k$ and $H\npreceq G\setminus S$.
As $r= (k+1)(h+1)μ$ and $|S|\leq k$, then by the pigeonhole principle there is some $\ell \in[k+1]$
such that $S\cap {A}_{\ell}=\emptyset.$
(In case there are many such $\ell$'s, we take the minimum one.)
Let $S_{\rm in}=S\cap D_{\ell(h+1)μ}$ and $S_{\rm out}=S\setminus \overline{D}_{(\ell-1)(h+1)μ+1}$. 
Let also $k_{\rm in}:=|S_{\rm in}|$ and $k_{\rm out}:=|S_{\rm out}|$
and keep in mind that $k_{\rm in}+k_{\rm out}=|S|\leq k$.

Let $\bar{\bf J}_{S}$ be the meta-representative of $(\bar{\bf G}_\ell, {\cal Z}_\ell, {\cal V}_\ell)\setminus S_{\rm in}$.
Let also $S_{\bar{\bf J}_{S}}$ be the minimum-size subset of $R\cap D_{\ell(h+1)μ}$
such that $(\bar{\bf G}_\ell, {\cal Z}_\ell, {\cal V}_\ell)\setminus S_{\bar{\bf J}_{S}}\equiv_{h+\lambda}^{(h+1)}  \bar{\bf J}_{S}$.
%
%
Clearly, $|S_{\bar{\bf J}_{S}}|\leq |S_{\rm in}|=k_{\rm in}$ and therefore $S_{\bar{\bf J}_{S}}=S_{\ell,\bar{\bf J}_{S}}$.
We now set $S'=S_{\bar{\bf J}_{S}}\cup S_{\rm out}$
and observe that,
since $S_{\bar{\bf J}_{S}}=S_{\ell,\bar{\bf J}_{S}}$,
$S_{\bar{\bf J}_{S}}\cap D_{r}\subseteq R^*$ and therefore $S'\subseteq R'$.
Since $|S'|\leq k$, it remains to prove that 
$H\npreceq G\setminus S'$.

\medskip

Let ${\cal H}$ be the set of all topological minor models of $H$ in $G$ and notice that 
for every $(M,T)\in{\cal H}$ it holds that $S\cap V(M)\neq\emptyset$, i.e., $S$ intersects at least one 
vertex of each graph in ${\cal H}$. Let ${\cal H}_{\ell}$ be the members of ${\cal H}$
that are intersected {\em only} by vertices in $S_{\rm in}$.

The next claim shows that
there is a collection of cycles of ${\cal A}$, {the ``middle'' cycles of $Q_{\ell, j}$'s,} such that 
for every {\sf tm}-pair $({M},{T})\in {\cal H}_{\ell}$
there is a 
cycle $C$ of this collection
and a {\sf tm}-pair $(\tilde{M},\tilde{T})\in {\cal H}_{\ell}$
that is equivalent to $({M},{T})$ and is ``combed in $C$'' in the sense that $\tilde{M}\cap C$ is a subgraph 
of the rails of ${\cal A}$.\medskip

\noindent{\em Claim: } For every ${(M,T)}\in {\cal H}_{\ell}$, there exists a $j_{M}\in[h+1]$ and a {\sf tm}-pair $(\tilde{M},\tilde{T})\in {\cal H}_{\ell}$, 
such that $\tilde{M}\setminus {A}_{\ell}\subseteq M\setminus {A}_{\ell}$ and
the graph $\tilde{M}\cap {C}_{y_{M}}$ is the union of the paths $\{P_{y_{M},c^M_{1}},\ldots,P_{y_{M},c^M_{z_{M}}}\}$, where $y_{M}=(\ell-1)(h+1)μ+(j_M-1)μ+\lceil μ/2\rceil$ and
$\{c^M_{1},\ldots,c^M_{z_{M}}\}\subseteq [\lambda]$ (see \autoref{sdfadfgdfgfdgdfhgfdhfgdh}).

\begin{figure}[ht]
	
	\centering\scalebox{1}{
	\sshow{0}{\begin{tikzpicture}[scale=0.3]
	\begin{scope}[xshift = 5 cm]
	\foreach \i in {0,0.5,3, 5.5, 6}{
		\draw[-] (\i,0) -- (\i, 10);
	}
	
	\draw[opacity=0.3] (1,0) -- (1,10);
	\draw[opacity=0.1] (1.5,0) -- (1.5,10);
	
	\draw[opacity=0.2] (2.5,0) -- (2.5,10);
	\draw[opacity=0.2] (3.5,0) -- (3.5,10);
	
	\draw[opacity=0.1] (4.5,0) -- (4.5,10);
	\draw[opacity=0.3] (5,0) -- (5,10);

	\draw[-, line width=0.7pt] (0,8) -- (0.5,8) -- (0.5,7);
	\draw[black!30!white, line width=0.7pt] (0.5,7) -- (1,7);
	\draw[black!10!white, line width=0.7pt] (1,7) -- (1.5,8);
	\draw[-, line width=0.7pt] (0,6) -- (0,5) -- (0.5,5.5);
	\draw[black!30!white, line width=0.7pt] (0.5,5.5) -- (1,6);
	\draw[black!10!white, line width=0.7pt] (1,6) -- (1.5,6);
	\draw[-, line width=0.7pt] (0,3) -- (0.5,4);
	\draw[black!30!white, line width=0.7pt] (0.5,4) -- (1,4) -- (1,3);
	\draw[black!10!white, line width=0.7pt] (1,3) -- (1.5,2.5);
	
	\node[track node 1] () at (0,8) {};
	\node[track node 1] () at (0,6) {};
	\node[track node 1] () at (0,3) {};
	\node[track node 1] () at (0,5) {};
	\node[track node 1] () at (0.5,8) {};
	\node[track node 1] () at (0.5,7) {};
	\node[track node 1] () at (0.5,5.5) {};
	\node[track node 1] () at (0.5, 4) {};
	
	\node[track node 2] () at (1,7) {};
	\node[track node 2] () at (1,6) {};
	\node[track node 2] () at (1,4) {};
	\node[track node 2] () at (1,3) {};
	
	\node[track node 3] () at (1.5,8) {};
	\node[track node 3] () at (1.5,6) {};
	\node[track node 3] () at (1.5,2.5) {};

	\draw[black!30!white, line width=0.7pt] (2.5,8) -- (3,9) -- (3.5,9);
	\draw[black!30!white, line width=0.7pt] (2.5,6) -- (2.5,5) -- (3,6) -- (3.5,5.5);
	\draw[black!30!white, line width=0.7pt] (2.5,2.5) -- (3,3);
	\draw[-, line width=0.7pt] (3,3) -- (3,2);
	\draw[black!30!white, line width=0.7pt] (3,2) -- (3.5,2);
	
	\node[track node 1] () at (3,9) {};
	\node[track node 1] () at (3,6)  {};
	\node[track node 1] () at (3,3) {};
	\node[track node 1] () at (3,2) {};
	
	\node[track node 2] () at (2.5,8) {};
	\node[track node 2] () at (3.5,9) {};
	\node[track node 2] () at (2.5,6) {};
	\node[track node 2] () at (2.5,5) {};
	\node[track node 2] () at  (3.5,5.5) {};
	\node[track node 2] () at (2.5,2.5) {};
	\node[track node 2] () at  (3.5,2) {};

	\draw[-, line width=0.7pt] (5.5, 7) -- (6,8);
	\draw[black!30!white, line width=0.7pt] (5,8) -- (5,7) -- (5.5, 7);
	\draw[black!10!white, line width=0.7pt] (4.5,9) -- (5,8);
	\draw[-, line width=0.7pt]  (5.5,6) -- (5.5,5) -- (6,5);
	\draw[black!30!white, line width=0.7pt](5,6) -- (5.5,6);
	\draw[black!10!white, line width=0.7pt](4.5,5.5) -- (5,6);
	\draw[-, line width=0.7pt] (5.5,1) -- (6,1) -- (6,0.5);
	\draw[black!30!white, line width=0.7pt] (5,2) -- (5.5,1);
	\draw[black!10!white, line width=0.7pt] (4.5,2) -- (5,2);
	
	\node[track node 1] () at (6,8) {};
	\node[track node 1] () at (5.5, 7) {};
	\node[track node 1] () at (5.5,6) {};
	\node[track node 1] () at (5.5,5) {};
	\node[track node 1] () at (6,5) {};
	\node[track node 1] () at (5.5,1) {};
	\node[track node 1] () at (6,1) {};
	\node[track node 1] () at (6,0.5) {};
	
	\node[track node 2] () at (5,8) {};
	\node[track node 2] () at (5,7) {};
	\node[track node 2] () at (5,6) {};
	\node[track node 2] () at (5,2) {};
	
	\node[track node 3] () at (4.5,9) {};
	\node[track node 3] () at (4.5,5.5) {};
	\node[track node 3] () at (4.5,2) {};
	
	\end{scope}
	
	\foreach \i in {0, 12}{
		\begin{scope}[xshift=\i cm]
		
		\draw[-] (0.5,0) -- (0.5, 10);
		\draw[opacity=0.3] (1,0) -- (1,10);
		\draw[opacity=0.1] (1.5,0) -- (1.5,10);
		\draw[opacity=0.1] (2.5,0) -- (2.5,10);
		\draw[opacity=0.3] (3,0) -- (3,10);
		\draw[-] (3.5,0) -- (3.5, 10);
		\end{scope}
	}
\draw[opacity=0.1] (4,0) -- (4,10);
\draw[opacity=0.1] (12,0) -- (12,10);


\draw [decorate,decoration={brace,amplitude=10pt, mirror}] (0.5,-2.5) -- (15.5,-2.5) node [black,midway,yshift=-.6cm] {$A_{\ell}$};
\draw [decorate,decoration={brace,amplitude=5pt, mirror}] (0.5,-.3) -- (3.5,-.3) node [black,midway,yshift=-.4cm] {$Q_{\ell,1}$};
\draw [decorate,decoration={brace,amplitude=5pt, mirror}] (5,-.3) -- (11,-.5) node [black,midway,yshift=-.4cm] {$Q_{\ell,j_{M}}$};
\draw [decorate,decoration={brace,amplitude=5pt, mirror}] (12.5,-.3) -- (15.5,-.3) node [black,midway,yshift=-.4cm] {$Q_{\ell,h+1}$};
	
	\node[model node] (M1) at (3,7) {};
	\node[model node] (M2) at (-1,4) {};
	\node[model node] (M3) at (1,3) {};
	\node[model node] (M4) at (12,7) {};
	\node[model node] (M5) at (15,5) {};
	\node[model node] (M6) at (17,2) {};
	\node[model node] (M7) at (13,1) {};
	
	\draw[celestialblue, line width=1pt] plot [smooth, tension=1] coordinates {(M1) (0,6) (-1,4) (M2)};
	\draw[celestialblue, line width=1pt] plot [smooth, tension=1] coordinates {(M2) (-1,3) (M3)};
	\draw[celestialblue, line width=1pt] plot [smooth, tension=1] coordinates {(M1) (2,6) (2.5,5)  (M3)};
	\draw[celestialblue, line width=1pt] plot [smooth, tension=1] coordinates {(M1) (5,8) (7,6) (10,8) (M4)};
	\draw[celestialblue, line width=1pt] plot [smooth, tension=1] coordinates {(M1) (4,4) (10,4) (7,1) (M7)};
	\draw[celestialblue, line width=1pt] plot [smooth, tension=1] coordinates {(M4) (13,5) (M7)};
	\draw[celestialblue, line width=1pt] plot [smooth, tension=1] coordinates {(M4) (15,7) (M5)};
	\draw[celestialblue, line width=1pt] plot [smooth, tension=1] coordinates {(M7) (14,2) (M5)};
	\draw[celestialblue, line width=1pt] plot [smooth, tension=1] coordinates {(M5) (16,4) (M6)};
	\draw[celestialblue, line width=1pt] plot [smooth, tension=1] coordinates {(M7) (15,1.5) (16,1) (M6)};
	\begin{scope}[xshift=25cm]

	\begin{scope}[xshift = 5 cm]
	\foreach \i in {0,0.5,3, 5.5, 6}{
		\draw[-] (\i,0) -- (\i, 10);
	}
	
	\draw[opacity=0.3] (1,0) -- (1,10);
	\draw[opacity=0.1] (1.5,0) -- (1.5,10);
	
	\draw[opacity=0.2] (2.5,0) -- (2.5,10);
	\draw[opacity=0.2] (3.5,0) -- (3.5,10);
	
	\draw[opacity=0.1] (4.5,0) -- (4.5,10);
	\draw[opacity=0.3] (5,0) -- (5,10);

	\draw[-, line width=0.7pt] (0,8) -- (0.5,8) -- (0.5,7);
	\draw[black!30!white, line width=0.7pt] (0.5,7) -- (1,7);
	\draw[black!10!white, line width=0.7pt] (1,7) -- (1.5,8);
	\draw[-, line width=0.7pt] (0,6) -- (0,5) -- (0.5,5.5);
	\draw[black!30!white, line width=0.7pt] (0.5,5.5) -- (1,6);
	\draw[black!10!white, line width=0.7pt] (1,6) -- (1.5,6);
	\draw[-, line width=0.7pt] (0,3) -- (0.5,4);
	\draw[black!30!white, line width=0.7pt] (0.5,4) -- (1,4) -- (1,3);
	\draw[black!10!white, line width=0.7pt] (1,3) -- (1.5,2.5);
	
	\node[track node 1] () at (0,8) {};
	\node[track node 1] () at (0,6) {};
	\node[track node 1] () at (0,3) {};
	\node[track node 1] () at (0,5) {};
	\node[track node 1] () at (0.5,8) {};
	\node[track node 1] () at (0.5,7) {};
	\node[track node 1] () at (0.5,5.5) {};
	\node[track node 1] () at (0.5, 4) {};
	
	\node[track node 2] () at (1,7) {};
	\node[track node 2] () at (1,6) {};
	\node[track node 2] () at (1,4) {};
	\node[track node 2] () at (1,3) {};
	
	\node[track node 3] () at (1.5,8) {};
	\node[track node 3] () at (1.5,6) {};
	\node[track node 3] () at (1.5,2.5) {};

	\draw[black!30!white, line width=0.7pt] (2.5,8) -- (3,9) -- (3.5,9);
	\draw[black!30!white, line width=0.7pt] (2.5,6) -- (2.5,5) -- (3,6) -- (3.5,5.5);
	\draw[black!30!white, line width=0.7pt] (2.5,2.5) -- (3,3);
	\draw[-, line width=0.7pt] (3,3) -- (3,2);
	\draw[black!30!white, line width=0.7pt] (3,2) -- (3.5,2);
	
	\node[track node 1] () at (3,9) {};
	\node[track node 1] () at (3,6)  {};
	\node[track node 1] () at (3,3) {};
	\node[track node 1] () at (3,2) {};
	
	\node[track node 2] () at (2.5,8) {};
	\node[track node 2] () at (3.5,9) {};
	\node[track node 2] () at (2.5,6) {};
	\node[track node 2] () at (2.5,5) {};
	\node[track node 2] () at  (3.5,5.5) {};
	\node[track node 2] () at (2.5,2.5) {};
	\node[track node 2] () at  (3.5,2) {};

	\draw[-, line width=0.7pt] (5.5, 7) -- (6,8);
	\draw[black!30!white, line width=0.7pt] (5,8) -- (5,7) -- (5.5, 7);
	\draw[black!10!white, line width=0.7pt] (4.5,9) -- (5,8);
	\draw[-, line width=0.7pt]  (5.5,6) -- (5.5,5) -- (6,5);
	\draw[black!30!white, line width=0.7pt](5,6) -- (5.5,6);
	\draw[black!10!white, line width=0.7pt](4.5,5.5) -- (5,6);
	\draw[-, line width=0.7pt] (5.5,1) -- (6,1) -- (6,0.5);
	\draw[black!30!white, line width=0.7pt] (5,2) -- (5.5,1);
	\draw[black!10!white, line width=0.7pt] (4.5,2) -- (5,2);
	
	\node[track node 1] () at (6,8) {};
	\node[track node 1] () at (5.5, 7) {};
	\node[track node 1] () at (5.5,6) {};
	\node[track node 1] () at (5.5,5) {};
	\node[track node 1] () at (6,5) {};
	\node[track node 1] () at (5.5,1) {};
	\node[track node 1] () at (6,1) {};
	\node[track node 1] () at (6,0.5) {};
	
	\node[track node 2] () at (5,8) {};
	\node[track node 2] () at (5,7) {};
	\node[track node 2] () at (5,6) {};
	\node[track node 2] () at (5,2) {};
	
	\node[track node 3] () at (4.5,9) {};
	\node[track node 3] () at (4.5,5.5) {};
	\node[track node 3] () at (4.5,2) {};
	
	\end{scope}
	
	\foreach \i in {0, 12}{
		\begin{scope}[xshift=\i cm]
		
		\draw[-] (0.5,0) -- (0.5, 10);
		\draw[opacity=0.3] (1,0) -- (1,10);
		\draw[opacity=0.1] (1.5,0) -- (1.5,10);
		\draw[opacity=0.1] (2.5,0) -- (2.5,10);
		\draw[opacity=0.3] (3,0) -- (3,10);
		\draw[-] (3.5,0) -- (3.5, 10);
		\end{scope}
	}
\draw[opacity=0.1] (4,0) -- (4,10);
\draw[opacity=0.1] (12,0) -- (12,10);


	\draw [decorate,decoration={brace,amplitude=10pt, mirror}] (0.5,-2.5) -- (15.5,-2.5) node [black,midway,yshift=-.6cm] {$A_{\ell}$};
	\draw [decorate,decoration={brace,amplitude=5pt, mirror}] (0.5,-.3) -- (3.5,-.3) node [black,midway,yshift=-.4cm] {$Q_{\ell,1}$};
	\draw [decorate,decoration={brace,amplitude=5pt, mirror}] (5,-.3) -- (11,-.5) node [black,midway,yshift=-.4cm] {$Q_{\ell,j_{M}}$};
	\draw [decorate,decoration={brace,amplitude=5pt, mirror}] (12.5,-.3) -- (15.5,-.3) node [black,midway,yshift=-.4cm] {$Q_{\ell,h+1}$};

	\node[model node] (M1) at (3,7) {};
	\node[model node] (M2) at (-1,4) {};
	\node[model node] (M3) at (1,3) {};
	\node[model node] (M4) at (12,7) {};
	\node[model node] (M5) at (15,5) {};
	\node[model node] (M6) at (17,2) {};
	\node[model node] (M7) at (13,1) {};
	
	\draw[celestialblue, line width=1pt] plot [smooth, tension=1] coordinates {(M1) (0,6) (-1,4) (M2)};
	\draw[celestialblue, line width=1pt] plot [smooth, tension=1] coordinates {(M2) (-1,3) (M3)};
	\draw[celestialblue, line width=1pt] plot [smooth, tension=1] coordinates {(M1) (2,6) (2.5,5)  (M3)};
	\draw[celestialblue, line width=1pt] plot [smooth, tension=1] coordinates {(M1) (5,8) (8,6) (10,8) (M4)};
	\draw[celestialblue, line width=1pt] plot [smooth, tension=1] coordinates {(M1) (3,2) (8,3)};
	\draw[celestialblue, line width=1pt] (8,2) -- (8,3);
	\draw[celestialblue, line width=1pt] plot [smooth, tension=1] coordinates {(8,2) (10,1) (M7)};
	\draw[celestialblue, line width=1pt] plot [smooth, tension=1] coordinates {(M4) (13,5) (M7)};
	\draw[celestialblue, line width=1pt] plot [smooth, tension=1] coordinates {(M4) (15,7) (M5)};
	\draw[celestialblue, line width=1pt] plot [smooth, tension=1] coordinates {(M7) (14,2) (M5)};
	\draw[celestialblue, line width=1pt] plot [smooth, tension=1] coordinates {(M5) (16,4) (M6)};
		\draw[celestialblue, line width=1pt] plot [smooth, tension=1] coordinates {(M7) (15,1.5) (16,1) (M6)};

\end{scope}
	\end{tikzpicture}}}
	\caption{Visualization of the statement of the Claim. $(M,T)$ is depicted in the left figure, while $(\tilde{M},\tilde{T})$ is depicted in the right figure.}
	\label{sdfadfgdfgfdgdfhgfdhfgdh}
\end{figure}

\noindent{\em Proof of Claim: }
Let ${(M,T)}\in {\cal H}_{\ell}$ and notice that $S_{\rm in}\cap V({M})\neq \emptyset$. As $|T|\leq h$, there is some $j_{M}\in[h+1]$
such that $T\cap Q_{\ell, j_{M}}=\emptyset$ (if many such $j_{M}$'s exist, take the minimum one).
We use notation ${\cal A}^{(M)}=({\cal C}^{(M)},{\cal P}^{(M)})$ instead of ${\cal A}\cap Q_{\ell, j_{M}}$.
Since $|{\cal C}^{(M)}|=\mu= \funref{axfsfadsrdsfsd}(g)+3$ and $|{\cal P}^{(M)}|=q\geq \funref{dfljhklgdfjhklgfj}(h)=  5/2\cdot \funref{axfsfsd}(g)$, we can now apply \autoref{jklnlnlk} for 
$s=1$, ${\cal A}:={\cal A}^{(M)}$, and $I=[\lambda]$ 
and obtain a  topological minor model $(\tilde{M}, \tilde{T})$ of $H$ in $G$
such that $\tilde{T}=T$, $\tilde{M}$ is $(1,I)$-confined in ${\cal A}^{(M)}$  and $\tilde{M}\setminus Q_{\ell, j_{M}} \subseteq M\setminus Q_{\ell, j_{M}}$, which implies that $\tilde{M}\setminus {A}_{\ell}\subseteq M\setminus {A}_{\ell}$.
Let $y_{M}=(\ell-1)(h+1)μ+(i_M-1)μ+\lceil μ/2\rceil$.
Notice  that $(\tilde{M}, \tilde{T})$ is a topological minor model in ${\cal H}_{\ell}$ 
whose intersection with $C_{y_{M}}$ is the union of some of the paths in $\{P_{y_{M},1},\ldots,P_{y_{M},\lambda}\}$, namely $\{P_{y_{M},c^M_{1}},\ldots,P_{y_{M},c^M_{z_{M}}}\}$, where $\{c^M_{1},\ldots,c^M_{z_{M}}\}\subseteq [\lambda]$.
The claim follows.
\medskip

Suppose, towards a contradiction, that the graph $G\setminus S'$ contains some topological minor model  $(M,T)$  of $H$ as a subgraph.
Since $H\npreceq G\setminus S$, it holds that $(M,T)$ is intersected only by vertices in $S_{\rm in}$ - thus $(M,T)\in {\cal H}_{\ell}$.
According to the Claim above, there is an $j_{M}\in[h+1]$ and a topological minor model $(\tilde{M},\tilde{T})\in {\cal H}_{\ell}$ such that  $\tilde{M}\setminus {A}_{\ell}\subseteq M\setminus {A}_{\ell}$ and the graph $\tilde{M}\cap C_{y_{M}}$ is the union of the paths $\{P_{y_{M},c^M_{1}},\ldots,P_{y_{M},c^M_{z_{M}}}\}$ where $y_{M}=(\ell-1)(h+1)μ+(j_M-1)μ+\lceil μ/2\rceil+1$ and
 $\{c^M_{1},\ldots,c^M_{z_{M}}\}\subseteq [\lambda]$. Note that $B_{\ell, j_M}\subseteq V(C_{j_M})$.
 Moreover, since $(M,T)$ is a topological minor model of $H$ in $G\setminus S'$ and $\tilde{M}\setminus {A}_{\ell}\subseteq M\setminus {A}_{\ell}$, we have that $(\tilde{M}, \tilde{T})$ is a topological minor model of $H$ in $G\setminus S'$.

We  consider the $\lambda$-boundaried graph $\tilde{\bf M}_{\rm in}=(\tilde{M}_{\rm in},B_{\ell,j_M},ρ_{\ell,j_M})$ where $\tilde{M}_{\rm in}=(\tilde{M}\cap \overline{D}_{y_M})\cup (B_{\ell,j_M}, \emptyset)$, i.e., the graph $\tilde{M}\cap \overline{D}_{y_M}$ together with the isolated vertices $B_{\ell, j_M}$.
We also define
{$$\tilde{M}_{\rm out}=(\tilde{M}\setminus (\overline{D}_{y_M}\setminus B_{\ell, j_M}))\cup (B_{\ell, j_M}, \emptyset)\mbox{~and~} \tilde{\bf M}_{\rm out}=(\tilde{M}_{\rm out},B_{\ell, j_M},ρ_{\ell, j_M}).$$}
Intuitively, $\tilde{M}_{\rm out}$ is the graph obtained from $\tilde{M}$  by removing all vertices in $\overline{D}_{y_M}$ except the vertices in $B_{\ell,j_M}$ and adding the isolated vertices $B_{\ell,j_M}\setminus V(\tilde{M})$.

We now observe that, since $(\tilde{M}, \tilde{T})$ is a topological minor model of $H$ in $G\setminus S'$ and $S_{\bar{\bf J}}$ is a subset of $S'$, $\tilde{\bf M}_{\rm in}$ is a subgraph of $(G_{\ell, j_M}, B_{\ell, j_M}, \rho_{\ell, j_M})\setminus S_{\bar{\bf J}}$.
The latter, together with the fact that $\tilde{\bf M}_{\rm out}$ is compatible with $(G_{\ell, j_M}, B_{\ell, j_M}, \rho_{\ell, j_M})\setminus S_{\bar{\bf J}}$, implies that 
\begin{eqnarray}\label{eqdgj}
H\preceq \tilde{\bf M}_{\rm out} \oplus (G_{\ell, j_M}, B_{\ell, j_M}, \rho_{\ell, j_M})\setminus S_{\bar{\bf J}}.
\end{eqnarray}
Also, the fact that $(\bar{\bf G}_\ell, {\cal Z}_\ell, {\cal V}_\ell)\setminus S_{\bar{\bf J}_{S}}\equiv_{h+\lambda}^{(h+1)} (\bar{\bf G}_\ell, {\cal Z}_\ell, {\cal V}_\ell)\setminus S_{\rm in}$ implies that 
\begin{eqnarray}\label{ddflkgmdfl}
(G_{\ell, j_M}, B_{\ell, j_M}, \rho_{\ell, j_M})\setminus S_{\rm in}\equiv_{h+\lambda}(G_{\ell, j_M}, B_{\ell, j_M}, \rho_{\ell, j_M})\setminus S_{\bar{\bf J}}.
\end{eqnarray}
By (\ref{eqdgj}) and (\ref{ddflkgmdfl}), we obtain that $H\preceq \tilde{\bf M}_{\rm out} \oplus (G_{\ell, j_M}, B_{\ell, j_M}, \rho_{\ell, j_M})\setminus S_{\rm in}$, which, in turn, implies that $H\preceq G\setminus S$, a contradiction.
\end{proof}

\subsection{Finding an irrelevant area}
\label{gdgafssdfgsfhdsdhgshghsfgssghhfsgj}

\remove{
Before we proceed with the proof of the second result of this section we need some more definitions.
Let ${\cal A}=({\cal C},{\cal P})$ be an $(r,q)$-railed annulus of a partially $\Delta$-embedded graph $G$. 
\begin{figure}[H]
\centering
	\sshow{0}{\begin{tikzpicture}[scale=.4]
		\foreach \x in {2,...,6}{
			\draw[line width =0.6pt] (0,0) circle (\x cm);
		}
		\node (P3) at (45:7) {$P_{3}$};
		\node[black node] (P11) at (45:6) {};
		\node[black node] (P21a) at (30:5) {};
		\node[black node] (P21b) at (40:5) {};
		\node[black node] (P31a) at (35:4) {};
		\node[black node] (P31b) at (50:4) {};
		\node[black node] (P41a) at (45:3) {};
		\node[black node] (P41b) at (25:3) {};
		\node[black node] (P51) at (40:2) {};
		\draw[line width=1pt] (P11) -- (P21a) -- (P21b) -- (P31a)  (P31b) -- (P41a) -- (P41b) -- (P51);
		
		\node (P4) at (70:7) {$P_{4}$};
		\node[black node] (P12) at (70:6) {};
		\node[black node] (P22a) at (80:5) {};
		\node[black node] (P22b) at (75:5) {};
		\node[black node] (P32a) at (90:4) {};
		\node[black node] (P32b) at (75:4) {};
		\node[black node] (P42a) at (85:3) {};
		\node[black node] (P42b) at (70:3) {};
		\node[black node] (P52) at (75:2) {};
		\draw[line width=1pt] (P12) -- (P22a) -- (P22b) -- (P32a) (P32b) -- (P42a) -- (P42b) -- (P52);
		
		\node (P5) at (115:7) {$P_{5}$};
		\node[black node] (P13a) at (120:6) {};
		\node[black node] (P13b) at (110:6) {};
		\node[black node] (P23a) at (110:5) {};
		\node[black node] (P23b) at (115:5) {};
		\node[black node] (P33a) at (120:4) {};
		\node[black node] (P33b) at (130:4) {};
		\node[black node] (P43a) at (135:3) {};
		\node[black node] (P43b) at (120:3) {};
		\node[black node] (P53) at (125:2) {};
		\draw[line width=1pt] (P13a) -- (P13b) -- (P23a) -- (P23b) -- (P33a) -- (P33b) -- (P43a) -- (P43b) -- (P53);
		
		\node (P6) at (165:7) {$P_{6}$};
		\node[black node] (P14a) at (170:6) {};
		\node[black node] (P14b) at (160:6) {};
		\node[black node] (P24) at (155:5) {};
		\node[black node] (P34) at (160:4) {};
		\node[black node] (P44a) at (165:3) {};
		\node[black node] (P44b) at (180:3) {};
		\node[black node] (P54) at (170:2) {};
		\draw[line width=1pt] (P14a) -- (P14b) -- (P24) -- (P34) -- (P44a) -- (P44b) -- (P54);
		
		\node (P7) at (190:7) {$P_{7}$};
		\node[black node] (P18a) at (190:6) {};
		\node[black node] (P28a) at (185:5) {};
		\node[black node] (P28b) at (220:5) {};
		\node[black node] (P38a) at (210:4) {};
		\node[black node] (P38b) at (225:4) {};
		\node[black node] (P48a) at (205:3) {};
		\node[black node] (P48b) at (220:3) {};
		\node[black node] (P58) at (200:2) {};
		\draw[line width=1pt] (P18a) -- (P28a) to [bend right=15]  (P28b);
		\draw[line width=1pt] (P28b) --  (P38a) -- (P38b) -- (P48a) -- (P48b) -- (P58);
		
		\node (P8) at (235:7) {$P_{8}$};
		\node[black node] (P15) at (235:6) {};
		\node[black node] (P25a) at (240:5) {};
		\node[black node] (P25b) at (250:5) {};
		\node[black node] (P35b) at (245:4) {};
		\node[black node] (P45a) at (250:3) {};
		\node[black node] (P45b) at (240:3) {};
		\node[black node] (P55) at (235:2) {};
		\draw[line width=1pt] (P15) -- (P25a)  -- (P25b) --  (P35b) -- (P45a) -- (P45b) -- (P55);
		
		\node (P1) at (295:7) {$P_{1}$};
		\node[black node] (P16a) at (290:6) {};
		\node[black node] (P16b) at (300:6) {};
		\node[black node] (P26a) at (295:5) {};
		\node[black node] (P26b) at (310:5) {};
		\node[black node] (P36a) at (300:4) {};
		\node[black node] (P36b) at (290:4) {};
		\node[black node] (P46a) at (310:3) {};
		\node[black node] (P46b) at (325:3) {};
		\node[black node] (P56) at (320:2) {};
		\draw[line width=1pt] (P16a) -- (P16b) -- (P26a) -- (P26b) -- (P36a) -- (P36b) -- (P46a) -- (P46b) -- (P56);
		
		\node (P2) at (0:7) {$P_{2}$};
		\node[black node] (P17a) at (5:6) {};
		\node[black node] (P17b) at (-5:6) {};
		\node[black node] (P27a) at (0:5) {};
		\node[black node] (P27b) at (-10:5) {};
		\node[black node] (P37a) at (-15:4) {};
		\node[black node] (P37b) at (0:4) {};
		\node[black node] (P47a) at (10:3) {};
		\node[black node] (P47b) at (-5:3) {};
		\node[black node] (P57) at (-5:2) {};
		\draw[line width=1pt] (P17a) -- (P17b) -- (P27b)  -- (P27a) -- (P37a) -- (P37b) -- (P47a) -- (P47b) -- (P57);
		
		\node[red node] (A1) at (115:5) {};
		\node[red node] () at (155:5) {};
		\node[red node] (A2) at (185:5) {};
		\draw[red, line width=1pt] (A1) to [bend right=30] (A2);
		
		\node[yellow node] (B1) at (310:5) {};
		\node[yellow node] (B2) at (300:4) {};
		\node[yellow node] (B3) at (290:4) {};
		\node[yellow node] (B4) at (310:3) {};
		\draw[yellow, line width=1.5pt] (B1) -- (B2) -- (B3) -- (B4);

		\node[blue node] (C1) at (0:4) {};
		\node[blue node] (C2) at (10:3) {};
		\node[blue node] (C3) at (-5:3) {};
		\node[blue node] (C4) at (-5:2) {};
		
		\node[blue node] (D1) at (35:4) {};
		\node[blue node] (D2) at (50:4) {};
		\node[blue node] (D3) at (40:2) {};
		
		\node[blue node] (E1) at (90:4) {};
		\node[blue node] (E2) at (75:4) {};
		\node[blue node] (E3) at (75:2) {};
		
		\node[blue node] (F1) at (120:4) {};
		\node[blue node] (F2) at (130:4) {};
		\node[blue node] (F3) at (135:3) {};
		\node[blue node] (F4) at (120:3) {};
		\node[blue node] (F5) at (125:2) {};
		
		\draw[celestialblue, line width=1.5pt] (C1) -- (C2) -- (C3) -- (C4);
		\draw[celestialblue, line width=1.5pt] (F1) -- (F2) -- (F3) -- (F4) -- (F5);
		\begin{scope}
		\clip (125:2) -- (120:4) -- (60:9) -- (0:4) -- (-5:2) -- cycle;
		\draw[celestialblue, line width =1.5pt] (0,0) circle (2 cm);
		\draw[celestialblue, line width =1.5pt] (0,0) circle (4 cm);
		\end{scope}
		
		\begin{scope}[on background layer]
		\clip (60:9) -- (0:4) -- (10:3) -- (-5:3) -- (-5:2) -- (40:2)-- (75:2) -- (125:2) -- (120:3) -- (135:3) -- (130:4) -- (120:4) -- cycle;
		\filldraw[draw= celestialblue, fill= celestialblue!50!white, line width =1.5pt] (0,0) circle (4 cm);
		\filldraw[draw=white,fill=white, line width =1.5pt] (0,0) circle (2 cm);
		\end{scope}

		\begin{scope}
		\clip  (320:2)  -- (310:3) -- (290:4) -- (295:5) -- (290:6) -- (270:9) -- (235:6) -- (250:5) -- (245:4) -- (250:3) --  (235:2) -- (270:1);
		
		\foreach \x in {2,...,6}{
			\draw[green, line width =1pt] (0,0) circle (\x cm);
		}

		\node[black node] (P16a) at (290:6) {};
		\node[black node] (P16b) at (300:6) {};
		\node[black node] (P26a) at (295:5) {};
		\node[yellow node] (P36a) at (300:4) {};
		\node[yellow node] (P36b) at (290:4) {};
		\node[yellow node] (P46a) at (310:3) {};
		\node[black node] (P56) at (320:2) {};
		\node[black node] (P15) at (235:6) {};
		\node[black node] (P25a) at (240:5) {};
		\node[black node] (P25b) at (250:5) {};
		\node[black node] (P35b) at (245:4) {};
		\node[black node] (P45a) at (250:3) {};
		\node[black node] (P45b) at (240:3) {};
		\node[black node] (P55) at (235:2) {};
		\end{scope}
		\end{tikzpicture}}
	
	\caption{An example of a $(5,8)$-railed annulus ${\cal A}$, the set $F_{{\cal A}}$ (depicted in green), and the graphs $L_{2, 5\to 7}$ (depicted in red), $R_{2\to 4, 1}$ (depicted in yellow), and $\Delta_{3,5,2,5}$ (depicted in blue).}
\label{sdfadgdfgagdbgbfb}
\end{figure}

For every $i\in[r]$, we define  $F^{(i)}_{\cal A}$ as the edge set of the unique  
$(P_{i,q},P_{i,1})$-path  that does not contain 
any vertex from $P_{2}$. We also set $F_{\cal A}=\bigcup_{i\in[r]}F^{(i)}$. 

Let $(i,j,j')\in[r]\times[q]^2$ where $j\neq j'$. We denote by $L_{i,j\rightarrow j'}$
the shortest  path in $C_{i}$ starting from a vertex of $P_{i,j}$
and finishing to a vertex of $P_{i,j'}$ and that  does not contain any edge from $F_{\cal A}$.
Let $(i,i',j)\in[r]^2\times[q]$ where $i\neq i'$. We denote by $R_{i\rightarrow i',j}$
the shortest  path in $P_{j}$ starting from a vertex of $P_{i,j}$
and finishing to a vertex of $P_{i',j}$.
Let $(i,i',j,j')\in [r]^2\times[q]^2$ such that $i< i'$ and $j< j'$.
We define $\Delta_{i,i',j,j'}$ as the closed disk bounded by the unique cycle in the graph 
\begin{eqnarray*}
& P_{i,j}\cup L_{i,j\to j'}\cup P_{i,j'}\cup R_{i\to i',j'}\cup\\
&  P_{i',j'}\cup L_{i',j'\to j}\cup P_{i',j}\cup R_{i'\to i,j}.
\end{eqnarray*}
See \autoref{sdfadgdfgagdbgbfb} for an example of the above definitions.
}

The next lemma intuitively states that there exists an algorithm that given a \seg\ $G$ and a ``big enough'' railed annulus of $G$, outputs a ``big enough'' wall $W$ of $G$ whose compass is a subset of $\Delta$ and such that for every hitting set $S$ outside $\Delta$, the vertex set of the compass of $W$ in $G$ is an irrelevant part of the instance.

\begin{lemma}
\label{fsfsdfdsdsafsadffasdasfd1}
There exist two functions $\newfun{fnkdslnfgkldsang},\newfun{dsgkdgnlkdfsngkdfsnl}: \Bbb{N}^2\to \Bbb{N}$,
and an algorithm with the following specifications:
	
	\smallskip\noindent {\bf Find\_irrelevant\_area}$(b,h,w,{\cal F},\Delta,G,{R},{\cal C},{\cal P})$

\noindent{\sl Input:} three integers $b\in\Bbb{N}_{\geq 3}$ and $h,w\in\Bbb{N}$, a finite set of graphs ${\cal F}$ where $h\leq h({\cal F})$,
a partially $\Delta$-embedded graph $G$  whose compass has treewidth at most $w$,
a set $R\subseteq V(G)\setminus \Delta$,
and an  $(\funref{fnkdslnfgkldsang}(h,b),\funref{dsgkdgnlkdfsngkdfsnl}(h,b))$-railed annulus ${\cal A}=({\cal C},{\cal P})$ of $G$.
		
\noindent{\sl Output:} a $b$-wall $W$ of $G$ such that 
\begin{itemize}

\item $V({\sf compass}(W))\subseteq \Delta$ and

\item if $(G\setminus V({\sf compass}(W)),R,k)$ is a ${\bf tm}_{\cal F}$-triple then $(G, R, k)$ is a ${\bf tm}_{\cal F}$-triple.
\end{itemize} 
Moreover,
$\funref{fnkdslnfgkldsang}(h,b)=\mathcal{O}_{h} (b)$,
$\funref{dsgkdgnlkdfsngkdfsnl}(h,b)=\mathcal{O}_h (b)$,
and this algorithm runs in $2^{\mathcal{O}_{h} (w\log w)}\cdot b\cdot n$ time, or, alternatively, in $ \mathcal{O}_h (b\cdot n^3)+ 2^{\mathcal{O}_{h}(w)}\cdot b\cdot n$ time.
\end{lemma}

\begin{proof}
Let $g:= \genfrac{(}{)}{0pt}{}{h}{2}$, $\lambda:=\funref{axfsfsd}(g) + 1$, and $μ:=\funref{axfsfadsrdsfsd}(g)+3$.
We set
\begin{align*}
	\ell:= & (h+2)μ+b+1, \\
	r := & \ \funref{dskglasdgklnfdkn}(h+\lambda,1,\lambda) \cdot \ell, \\	
	q:= & \max\{5/2\cdot \funref{axfsfsd}(g),\lambda+2b\},\\
	\funref{fnkdslnfgkldsang}(h,b): = &\ r, \text{ and }\\
	\funref{dsgkdgnlkdfsngkdfsnl}(h,b):= & q.
\end{align*}
For every $i\in[r]$, we consider the $\lambda$-boundaried graph ${\bf G}_{i,\lambda}=(G_{i},B_{i,\lambda},ρ_{i,\lambda})$
where $G_{i}=G\cap \overline{D}_{i}$, $B_{i,\lambda}=\{r_{i,1}\,\ldots,r_{i,\lambda}\}$ and, for $j\in[\lambda]$, $ρ_{i,\lambda}(r_{i,\lambda})=j$.
Also, for every $i\in[r]$, let ${\bf J}_i$ be the representative of ${\bf G}_{i,\lambda}$.
To compute ${\bf J}_i$, we first observe that as the underlying graph of each ${\bf G}_{i,\lambda}$ is a subgraph of the compass of $G$, we 
	have that $\tw({\bf G}_{i,\lambda})\leq w+\lambda=\mathcal{O}_{h}(w)$.
	Therefore, for each representative ${\bf J}\in {\cal R}_{h+\lambda}^{(\lambda)}$, we call the algorithm  {\bf Compute\_rep}$(h+\lambda,1,\lambda,w+\lambda,0,{\bf G}_{i,\lambda}, \emptyset, {\bf J})$
	of \autoref{sssswfgfegwergewgwergwegr} to check whether ${\bf G}_{i,\lambda}\equiv_{h+\lambda} {\bf J}$.	
	The overall running time needed to compute the representative of each ${\bf G}_{i,\lambda}$ is $r\cdot |{\cal R}_{h+\lambda}^{(\lambda)}| \cdot 2^{\mathcal{O}_{h} (w\log w)}\cdot n =  2^{\mathcal{O}_{h} (w\log w)}\cdot b\cdot n$, or, alternatively, $r\cdot \mathcal{O}(n^3)+r\cdot |{\cal R}_{h+\lambda}^{(\lambda)}| \cdot 2^{\mathcal{O}_{h}(w)}\cdot n= \mathcal{O}_h (b\cdot n^3)+ 2^{\mathcal{O}_{h}(w)}\cdot b\cdot n$.	
	
	Since $|{\cal R}_{h+\lambda}^{(\lambda)}|\leq \funref{dskglasdgklnfdkn}(h+\lambda,1,\lambda)$, then there is a set $I\subseteq [r]$ of size $\ell$ such that for every $p,q\in I$, ${\bf J}_{p}={\bf J}_{q}$.
Let $i'$ be the maximum element of $I$ and note that $i'\geq (h+2)μ+b+1$.
We define $\Delta'$ to be the arc-wise connected component of $\ann({\cal C}, i'-b, i'-1)\setminus (P_{\lambda+1}\cup P_{\lambda +2b})$
that does not intersect $P_1$ (see \autoref{sadfdgfsdgdsgdfg}).

Notice that the graph obtained by the union of $\bigcup_{j\in[b]}C_{i'-j}$ and $\bigcup_{j\in[2b]} P_{\lambda+j}$ contains a $b$-wall $W$ as a subgraph such that $V({\sf compass}(W))$ is a subset of the closure of $\Delta'$.
 We set $K:={\sf compass}(W)$ and keep in mind that $V(K)$ (and, therefore, also $\Delta'$) is a subset of $\Delta$ that does not intersect the cycle $C_{i'}$. 
 
\begin{figure}[ht]
		\centering\scalebox{1}{
		\sshow{0}{\begin{tikzpicture}[scale=0.4]
		\begin{scope}
		
		\draw[-] (7,2) -- (7, 10);
		\draw[opacity=0.3] (7.5,2) -- (7.5,10);
		\draw[opacity=0.1] (8,2) -- (8,10);

		\draw[opacity=0.1] (29,2) -- (29,10);
		\draw[opacity=0.3] (29.5,2) -- (29.5,10);
		\draw[-] (30,2) -- (30, 10);
		\end{scope}

		\foreach \i in {11, 17, 25 }{
			\begin{scope}[xshift=\i cm]

			\draw[opacity=0.1] (-1,2) -- (-1,10);
			\draw[opacity=0.3] (-.5,2) -- (-.5,10);
			\draw[-] (0,2) -- (0, 10);
			\draw[opacity=0.3] (0.5,2) -- (0.5,10);
			\draw[opacity=0.1] (1,2) -- (1,10);
			
			\end{scope}
		}

		\node () at (7,11) {$C_{1}$};
		\node () at (11,11) {};
		\node[] () at (17,11) {$C_{i'-b}$};
		\draw[red] (17,10.6) -- (17,10.2);
		\node[small red] () at (17,10.2) {};
		\node[] () at (25,11) {$C_{i'}$};
		\draw[red] (25,10.6) -- (25,10.2);
		\node[small red] () at (25,10.2) {};
		\node () at (30,11) {$C_{r}$};
		
		\draw [decorate,decoration={brace,amplitude=5pt, mirror}] (17,1.5) -- (24.5,1.5) node [black,midway,yshift=-.3cm] {\footnotesize $b$};
		
		\filldraw[fill=applegreen!60!white] (17,9) --  (24.5,9)  -- (24.5,5) -- (17,5) -- cycle;
		\node () at (20,7) {$\Delta'$};
		
		\draw[black!30!white, line width=0.7pt] (16.5, 8.5) -- (17,9) (17,9) -- (24.5,9) -- (24.5,8) -- (25,8) -- (25.5,8) -- (25.5,9) (16.5, 5.5) -- (17,6) -- (17,5) -- (24.5,5) -- (25,6) -- (25.5,6.5);
		\draw[black!10!white, line width=0.7pt] (16,9) -- (16.5, 8.5) (25.5,9) -- (26,9) (16,5) -- (16.5, 5.5) (25.5,6.5) -- (26,6);
		\node () at (27.5,9) {$P_{\lambda+2b}$};
		\node () at (27.5,6) {$P_{\lambda+1}$};
		
		\node[track node 1] () at (17,9) {};
		\node[track node 1] () at (25,8)  {};
		\node[track node 1] () at (17,6) {};
		\node[track node 1] () at  (25,6) {};
		\node[track node 1] () at  (17,5) {};

		\node[track node 2] () at (16.5, 8.5) {};
		\node[track node 2] () at  (16.5, 5.5) {};
		\node[track node 2] () at  (25.5,8) {};
		\node[track node 2] () at (25.5,9){};
		\node[track node 2] () at (25.5,6.5) {};
		\node[track node 2] () at (24.5,9) {};
		\node[track node 2] () at  (24.5,8) {};
		\node[track node 2] () at  (24.5,5) {};
		
		\node[track node 3] () at (16,9) {};
		\node[track node 3] () at (26,9) {};
		\node[track node 3] () at (16,5)  {};
		\node[track node 3] () at (26,6) {};
		
		\end{tikzpicture}}}
		\caption{An example showing the disk $\Delta'$, whose closure contains the vertices of the compass of the obtained $b$-wall $W$.}
		\label{sadfdgfsdgdsgdfg}
	\end{figure}
	
We now aim to prove that if $(G\setminus V(K),R,k)$ is a ${\bf tm}_{\cal F}$-triple then $(G, R, k)$ is a ${\bf tm}_{\cal F}$-triple.
In order to prove this, we argue that if $H$ is a graph on at most $h$ vertices and
$S$ is a subset of $R$ such that $H\npreceq(G\setminus V(K))\setminus S$, then it holds that  $H\npreceq G\setminus S$.

Let  $H$ be a graph on at most $h$ vertices (and, therefore, of at most $g$ edges) and  $S\subseteq R \subseteq V(G)\setminus \Delta$ such that  $H\npreceq(G\setminus V(K))\setminus S$.
Suppose towards a contradiction that the graph $G\setminus S$ contains some topological minor model  $(M,T)$  of $H$ as a subgraph. In what follows, we argue how to obtain a subgraph of $(G\setminus V(K))\setminus S$ that is a subdivision of $H$, thus arriving at a contradiction.

As $|T|\leq h$ and $\ell=(h+2)μ +b+1$, there is some
	$y\in I\setminus \{i'\}$  such that $$T\cap \ann({\cal C},y-\lfloor μ/2\rfloor,y+\lfloor μ/2\rfloor)=\emptyset.$$
We  consider the  $(μ,q)$-railed 
annulus ${\cal A}'=({\cal C}',{\cal P}')$ of $G$ where 
	\begin{itemize}
		\item ${\cal C}'=[C_{1}',\ldots,C_{μ}']:=[C_{y-\lfloor μ/2\rfloor},\ldots,C_{y+\lfloor μ/2\rfloor}]$ and
		
		\item ${\cal P}'=[P_{1}',\ldots,P_{q}']:=[P_{1}\cap\ann({\cal C}'),\ldots,P_{q}\cap\ann({\cal C}')].$ (See \autoref{sadfdfsgffdgshdfsagdadggsfdn}.)
	\end{itemize}

\begin{figure}[H]
	\centering
	\sshow{0}{\begin{tikzpicture}[scale=0.3]
	\begin{scope}
	
	\draw[-] (5,0) -- (5, 10);
	\draw[opacity=0.3] (5.5,0) -- (5.5,10);
	\draw[opacity=0.1] (5,0) -- (5,10);

	\draw[opacity=0.1] (29,0) -- (29,10);
	\draw[opacity=0.3] (29.5,0) -- (29.5,10);
	\draw[-] (30,0) -- (30, 10);
	\end{scope}
	
	\begin{scope}[xshift=12cm]
	\fill[mustard] (0,10) -- (4,10) -- (4,0) -- (0,0) -- cycle;
	\node () at (2,8) {${\cal A}'$};
	\end{scope}

	\foreach \i in {8,12,16, 25 }{
		\begin{scope}[xshift=\i cm]
		\draw[opacity=0.1] (-1,0) -- (-1,10);
		\draw[opacity=0.3] (-.5,0) -- (-.5,10);
		\draw[-] (0,0) -- (0, 10);
		\draw[opacity=0.3] (0.5,0) -- (0.5,10);
		\draw[opacity=0.1] (1,0) -- (1,10);
		\end{scope}
	}

	\node () at (5,11) {$C_{1}$};
	\node () at (8,11) {};
	\node () at (11.5,12) {$C_{y-\lfloor \mu/2\rfloor}$};
	\draw[red] (12,11.1) -- (12,10.5);
	\node[small red] () at (12,10.5) {};
	\node () at (16.5,12) {$C_{y+\lfloor \mu/2\rfloor}$};
	\draw[red] (16,11.1) -- (16,10.5);
	\node[small red] () at (16,10.5) {};
	\node[] () at (25,12) {$C_{i'}$};
	\draw[red] (25,11.3) -- (25,10.5);
	\node[small red] () at (25,10.5) {};
	\node () at (30,11) {$C_{r}$};
	
	\draw [decorate,decoration={brace,amplitude=5pt, mirror}] (12,-.5) -- (16,-.5) node [black,midway,yshift=-.4cm] {\footnotesize $μ$};

	\filldraw[draw=applegreen,fill=applegreen!60!white] (22.5,9) --  (24.5,9)  -- (24.5,7) -- (22.5,7) -- cycle;
	\node () at (23.2,8.3) {$\Delta'$};

	\begin{scope}[xshift=5cm]
	
	\node[model node] (M1) at (5.2,8) {};
	\node[model node] (M2) at (2.7,5) {};
	\node[model node] (M3) at (6.3,4) {};
	\node[model node] (M4) at (15,7) {};
	\node[model node] (M5) at (19,7.5) {};
	\node[model node] (M6) at (13,4) {};
	\node[model node] (M7) at (18,2.5) {};
	
	\draw[celestialblue] (M1) -- (M2) (M2) -- (M3) (M1) -- (M3) (M3) -- (M4) (M3) -- (M6) (M4) -- (M5) (M6) -- (M7) (M4) -- (M6) (M5)-- (M7) (M5) -- (M6);
	
	\end{scope}
	\end{tikzpicture}}\caption{An example showing the $(μ, q)$-railed annulus ${\cal A}'$.}
	\label{sadfdfsgffdgshdfsagdadggsfdn}
\end{figure}

Observe that, since $S\subseteq V(G)\setminus \Delta$, ${\cal A}'$ remains intact after removing the vertices of $S$ from $G$, i.e., ${\cal A}'$ is also an $(μ, q)$-railed annulus of $G\setminus S$.
Since $μ= \funref{axfsfadsrdsfsd}(g)+3$ and $q\geq 5/2 \cdot \funref{axfsfsd}(g)$, we are in position to apply \autoref{jklnlnlk} for $s:=1$, $H$, $G:=G\setminus S$, ${\cal A}:={\cal A}'$, $r:=μ$, $M$,  and $I=[\lambda]$.
We deduce the existence of 
a topological minor model $(\tilde{M}, \tilde{T})$ of $H$ in $G\setminus S$ such that
$\tilde{T} =T$,  $\tilde{M}$ is $(1,I)$-confined in ${\cal A}'$, and $\tilde{M}\setminus \ann({\cal A}')\subseteq M\setminus \ann({\cal A}')$ (see \autoref{snfcajksfn}).

\begin{figure}[ht]
	\centering\scalebox{1}{
	\sshow{0}{\begin{tikzpicture}[scale=0.38]
	
	\begin{scope}[xshift=10cm]
		\fill[mustard!50!white] (0,10) -- (6,10) -- (6,0) -- (0,0) -- cycle;		\end{scope}
	\node () at (12,-1) {${\cal A}'$};
			
	\foreach \i in {10,16, 25 }{
		\begin{scope}[xshift=\i cm]
			\draw[opacity=0.1] (-1,0) -- (-1,10);
			\draw[opacity=0.3] (-.5,0) -- (-.5,10);
			\draw[-] (0,0) -- (0, 10);
			\draw[opacity=0.3] (0.5,0) -- (0.5,10);
			\draw[opacity=0.1] (1,0) -- (1,10);
		\end{scope}
	}
		
	\node () at (10,12) {$C_{y-\lfloor \mu/2\rfloor}$};
	\draw[red] (10,11.3) -- (10,10.5);
	\node[small red] () at (10,10.5) {};
	\node () at (13,12) {$C_{y}$};
	\draw[red] (13,11.3) -- (13,10.5);
	\node[small red] () at (13,10.5) {};
	\node () at (16,12) {$C_{y+\lfloor \mu/2\rfloor}$};
	\draw[red] (16,11.3) -- (16,10.5);
	\node[small red] () at (16,10.5) {};
	\node[] () at (25,12) {$C_{i'}$};
	\draw[red] (25,11.3) -- (25,10.5);
	\node[small red] () at (25,10.5) {};
		
	\filldraw[draw=applegreen,fill=applegreen!60!white]
	(22.5,9) --  (24.5,9)  -- (24.5,7) -- (22.5,7) -- cycle;
	\node () at (23.2,8.3) {$\Delta'$};

	\begin{scope}[xshift=5cm]
		
		\node[model node] (M1) at (3.2,8) {};
		\node[model node] (M2) at (0.7,5) {};
		\node[model node] (M3) at (4.3,4) {};
		\node[model node] (M4) at (15,7) {};
		\node[model node] (M5) at (19,7.5) {};
		\node[model node] (M6) at (13,4) {};
		\node[model node] (M7) at (18,2.5) {};
		\draw[celestialblue]
		(M1) -- (M2) (M2) -- (M3) (M1) -- (M3)
		(M4) -- (M5) (M6) -- (M7) (M4) -- (M6)
		(M5)-- (M7) (M5) -- (M6);
		\draw[celestialblue] (M3) -- (8,6) --  (M4)
		(M3) -- (8,3) -- (8,2) -- (M6);
	
	\end{scope}
		
	\begin{scope}[xshift=10cm]
		
		\draw[opacity=0.3] (1,0) -- (1,10);
		\draw[opacity=0.1] (1.5,0) -- (1.5,10);
		
		\draw[opacity=0.2] (2.5,0) -- (2.5,10);
		\draw[opacity=0.2] (3.5,0) -- (3.5,10);
		
		\draw[opacity=0.1] (4.5,0) -- (4.5,10);
		\draw[opacity=0.3] (5,0) -- (5,10);

		\draw[-, line width=0.7pt] (0,8) -- (0.5,8) -- (0.5,7);
		\draw[black!30!white, line width=0.7pt] (0.5,7) -- (1,7);
		\draw[black!10!white, line width=0.7pt] (1,7) -- (1.5,8);
		\draw[-, line width=0.7pt] (0,6) -- (0,5) -- (0.5,5.5);
		\draw[black!30!white, line width=0.7pt] (0.5,5.5) -- (1,6);
		\draw[black!10!white, line width=0.7pt] (1,6) -- (1.5,6);
		\draw[-, line width=0.7pt] (0,3) -- (0.5,4);
		\draw[black!30!white, line width=0.7pt] (0.5,4) -- (1,4) -- (1,3);
		\draw[black!10!white, line width=0.7pt] (1,3) -- (1.5,2.5);
		
		\node[track node 1] () at (0,8) {};
		\node[track node 1] () at (0,6) {};
		\node[track node 1] () at (0,3) {};
		\node[track node 1] () at (0,5) {};
		\node[track node 1] () at (0.5,8) {};
		\node[track node 1] () at (0.5,7) {};
		\node[track node 1] () at (0.5,5.5) {};
		\node[track node 1] () at (0.5, 4) {};
		
		\node[track node 2] () at (1,7) {};
		\node[track node 2] () at (1,6) {};
		\node[track node 2] () at (1,4) {};
		\node[track node 2] () at (1,3) {};
		
		\node[track node 3] () at (1.5,8) {};
		\node[track node 3] () at (1.5,6) {};
		\node[track node 3] () at (1.5,2.5) {};

		\draw[black!30!white, line width=0.7pt] (2.5,8) -- (3,9) -- (3.5,9);
		\draw[black!30!white, line width=0.7pt] (2.5,6) -- (2.5,5) -- (3,6) -- (3.5,5.5);
		\draw[black!30!white, line width=0.7pt] (2.5,2.5) -- (3,3);
		\draw[-, line width=0.7pt] (3,3) -- (3,2);
		\draw[black!30!white, line width=0.7pt] (3,2) -- (3.5,2);
		\draw[-] (3,0) -- (3,10);
		
		\node[track node 1] () at (3,9) {};
		\node[track node 1] () at (3,6)  {};
		\node[track node 1] () at (3,3) {};
		\node[track node 1] () at (3,2) {};
		
		\node[track node 2] () at (2.5,8) {};
		\node[track node 2] () at (3.5,9) {};
		\node[track node 2] () at (2.5,6) {};
		\node[track node 2] () at (2.5,5) {};
		\node[track node 2] () at  (3.5,5.5) {};
		\node[track node 2] () at (2.5,2.5) {};
		\node[track node 2] () at  (3.5,2) {};

		\draw[-, line width=0.7pt] (5.5, 7) -- (6,8);
		\draw[black!30!white, line width=0.7pt] (5,8) -- (5,7) -- (5.5, 7);
		\draw[black!10!white, line width=0.7pt] (4.5,9) -- (5,8);
		\draw[-, line width=0.7pt]  (5.5,6) -- (5.5,5) -- (6,5);
		\draw[black!30!white, line width=0.7pt](5,6) -- (5.5,6);
		\draw[black!10!white, line width=0.7pt](4.5,5.5) -- (5,6);
		\draw[-, line width=0.7pt] (5.5,1) -- (6,1) -- (6,0.5);
		\draw[black!30!white, line width=0.7pt] (5,2) -- (5.5,1);
		\draw[black!10!white, line width=0.7pt] (4.5,2) -- (5,2);
		
		\node[track node 1] () at (6,8) {};
		\node[track node 1] () at (5.5, 7) {};
		\node[track node 1] () at (5.5,6) {};
		\node[track node 1] () at (5.5,5) {};
		\node[track node 1] () at (6,5) {};
		\node[track node 1] () at (5.5,1) {};
		\node[track node 1] () at (6,1) {};
		\node[track node 1] () at (6,0.5) {};
		
		\node[track node 2] () at (5,8) {};
		\node[track node 2] () at (5,7) {};
		\node[track node 2] () at (5,6) {};
		\node[track node 2] () at (5,2) {};
		
		\node[track node 3] () at (4.5,9) {};
		\node[track node 3] () at (4.5,5.5) {};
		\node[track node 3] () at (4.5,2) {};
		
		\end{scope}
\end{tikzpicture}}}
\caption{An example of $(\tilde{M}, \tilde{T})$, the result of applying \autoref{jklnlnlk} in the railed annulus ${\cal A}'$.}\label{snfcajksfn}
\end{figure}

Notice now that $\tilde{M}\cap C_{y}$ is the union of some of the paths in $\{P_{y,1},\ldots,P_{y,\lambda}\}$.
Suppose that these paths are $\{P_{y,c_{1}},\ldots,P_{y,c_{z}}\}$ where $\{c_{1},\ldots,c_{z}\}\subseteq [\lambda]$.
We  consider the $\lambda$-boundaried graph ${\bf M}_{y}=(M_y,B_{y,\lambda},ρ_{y,\lambda})$ where $M_y=(\tilde{M}\cap \overline{D}_{y})\cup (B_{y,\lambda}, \emptyset)$ (i.e. the graph $\tilde{M}\cap \overline{D}_{y}$ together with the isolated vertices $B_{y,\lambda}$).
We also define
$$\tilde{M}_{\rm out}=(\tilde{M}\setminus (\overline{D}_{y}\setminus B_{y,\lambda}))\cup (B_{y, \lambda}, \emptyset)\mbox{~and~} \tilde{\bf M}_{\rm out}=(\tilde{M}_{\rm out},B_{y,\lambda},ρ_{y,\lambda}).$$
Keep in mind that $\tilde{M}_{\rm out}$ is a subgraph of $G\setminus S$ that does not contain vertices of $K$ (see \autoref{dasfndajksfnjrk}).
	
\begin{figure}[ht]
\centering\scalebox{1}{
	\sshow{0}{\begin{tikzpicture}[scale=0.36]
		\begin{scope}[xshift=0cm]

		\draw[black!30!white, line width=0.7pt] (2.5,8) -- (3,9) -- (3.5,9);
		\draw[black!30!white, line width=0.7pt] (2.5,6) -- (2.5,5) -- (3,6) -- (3.5,5.5);
		\draw[black!30!white, line width=0.7pt] (2.5,2.5) -- (3,3);
		\draw[-, line width=0.7pt] (3,3) -- (3,2);
		\draw[black!30!white, line width=0.7pt] (3,2) -- (3.5,2);
		\draw[-] (3,1) -- (3,10);
		\node () at (3,10.5) {$C_{y}$}; 
		\node (r1) at (4,6.7) {$r_{y,c_{1}}$};
		\node (r2) at (4,3.5) {$r_{y,c_{2}}$};
		\node[blue node] () at (3,9) {};
		\node[track node 1] () at (3,6)  {};
		\node[track node 1] () at (3,3) {};
		\node[track node 1] () at (3,2) {};
		
		\node[track node 2] () at (2.5,8) {};
		\node[track node 2] () at (3.5,9) {};
		\node[track node 2] () at (2.5,6) {};
		\node[track node 2] () at (2.5,5) {};
		\node[track node 2] () at  (3.5,5.5) {};
		\node[track node 2] () at (2.5,2.5) {};
		\node[track node 2] () at  (3.5,2) {};

		\begin{scope}[xshift=-5cm]
		
		\node[blue node] (M1) at (3.2,8) {};
		\node[blue node] (M2) at (0.7,5) {};
		\node[blue node] (M3) at (4.3,4) {};
		
		\node[blue node] (m1) at (8,6) {};
		\node[blue node] (m2) at (8,3) {};

		\draw[blue] (M1) -- (M2) (M2) -- (M3) (M1) -- (M3);
	
		\draw[blue] (M3) -- (m1);
		\draw[blue] (M3) -- (m2);

		\end{scope}
		\end{scope}
		
		\begin{scope}[xshift=6cm]

		\draw[black!30!white, line width=0.7pt] (2.5,8) -- (3,9) -- (3.5,9);
		\draw[black!30!white, line width=0.7pt] (2.5,6) -- (2.5,5) -- (3,6) -- (3.5,5.5);
		\draw[black!30!white, line width=0.7pt] (2.5,2.5) -- (3,3);
		\draw[-, line width=0.7pt] (3,3) -- (3,2);
		\draw[black!30!white, line width=0.7pt] (3,2) -- (3.5,2);
		\draw[-] (3,1) -- (3,10);
		\node () at (3,10.5) {$C_{y}$}; 
		\node (r1) at (4,6.7) {$r_{y,c_{1}}$};
		\node (r2) at (4,3.5) {$r_{y,c_{2}}$};
		\node[red node] () at (3,9) {};
		\node[track node 1] () at (3,6)  {};
		\node[track node 1] () at (3,3) {};
		\node[track node 1] () at (3,2) {};
		
		\node[track node 2] () at (2.5,8) {};
		\node[track node 2] () at (3.5,9) {};
		\node[track node 2] () at (2.5,6) {};
		\node[track node 2] () at (2.5,5) {};
		\node[track node 2] () at  (3.5,5.5) {};
		\node[track node 2] () at (2.5,2.5) {};
		\node[track node 2] () at  (3.5,2) {};

		\end{scope}
		\begin{scope}[xshift=1cm]

		\node[red node] (M4) at (13,7) {};
		\node[red node] (M5) at (17,7.5) {};
		\node[red node] (M6) at (11,4) {};
		\node[red node] (M7) at (16,2.5) {};
		\node[red node] (m1) at (8,6) {};
		\node[red node] (m2) at (8,3) {};
		\node[red node] (m3) at (8,2) {};
		
		\draw[red] (M4) -- (M5) (M6) -- (M7) (M4) -- (M6) (M5)-- (M7) (M5) -- (M6);
		\draw[red] (m1) --  (M4);
		\draw[red] (m2) -- (m3) -- (M6);
		
		\end{scope}

		\end{tikzpicture}}}
	\caption{The graphs $\tilde{M}_{\rm out}$ (depicted in blue) and $M_{y}$ (depicted in red).}
	\label{dasfndajksfnjrk}
\end{figure}

Notice that ${\bf M}_y$ is a subgraph of ${\bf G}_{y,\lambda}$ and therefore the graph $\tilde{\bf M}_{\rm out}\oplus {\bf M}_y$
(that is the graph $\tilde{M}\cup (B_{y,\lambda},\emptyset)$)
is a subgraph of $\tilde{\bf M}_{\rm out}\oplus {\bf G}_{y,\lambda}$.
Thus, $H\preceq \tilde{\bf M}_{\rm out}\oplus {\bf G}_{y,\lambda}$.
Since ${\bf J}_y = {\bf J}_{i'}$, we have that $H\preceq \tilde{\bf M}_{\rm out}\oplus {\bf G}_{i',\lambda}$.
Therefore, ${\bf G}_{i',\lambda}$ contains as a subgraph a $\lambda$-boundaried graph ${\bf M}_{i'}=(M_{i'},B_{i',\lambda},ρ_{i',\lambda})$,  
such that $\tilde{\bf M}_{\rm out}\oplus {\bf M}_{i'}$ {contains $H$ as a topological minor}.
Notice that $M_{i'}$ is a subgraph of $G\setminus S$ that does not intersect $V({\sf compass}(W))$.
\begin{figure}[ht]
		\centering\scalebox{1}{
		\sshow{0}{\begin{tikzpicture}[scale=0.4]

		\node () at (12,-1) {${\cal A}'$};
		
		\foreach \i in {10,16, 23 }{
			\begin{scope}[xshift=\i cm]
			\draw[opacity=0.1] (-1,0) -- (-1,10);
			\draw[opacity=0.3] (-.5,0) -- (-.5,10);
			\draw[-] (0,0) -- (0, 10);
			\draw[opacity=0.3] (0.5,0) -- (0.5,10);
			\draw[opacity=0.1] (1,0) -- (1,10);
			\end{scope}
		}
		
		\node () at (10,12) {$C_{y-\lfloor \mu/2\rfloor}$};
		\draw[red] (10,11.1) -- (10,10.5);
		\node[small red] () at (10,10.5) {};
		\node () at (16,12) {$C_{y+\lfloor \mu/2\rfloor}$};
		\draw[red] (16,11.1) -- (16,10.5);
		\node[small red] () at (16,10.5) {};
		\node[] () at (23,12) {$C_{i'}$};
		\draw[red] (23,11.3) -- (23,10.5);
		\node[small red] () at (23,10.5) {};
		\node () at (13,11) {$C_{y}$};

		\filldraw[draw=applegreen,fill=applegreen!60!white] (20.5,9) --  (22.5,9)  -- (22.5,7) -- (20.5,7) -- cycle;
		\node () at (21.2,8.3) {$\Delta'$};

		\begin{scope}[xshift=5cm]
		
		\node[model node small] (M1) at (3.2,8) {};
		\node[model node small] (M2) at (0.7,5) {};
		\node[model node small] (M3) at (4.3,4) {};
		
		\node[rep node] (m1) at (8,6) {};
		\node[rep node] (m2) at (8,3) {};
		\node[rep node] (m3) at (8,2) {};
		
		\draw[celestialblue] (M1) -- (M2) (M2) -- (M3) (M1) -- (M3);
		
		\draw[celestialblue] (M3) -- (m1);
		\draw[celestialblue] (M3) -- (m2);
		
		\end{scope}
		
		\begin{scope}[xshift=12cm]
		
		\node[model node small] (M4) at (15,7) {};
		\node[model node small] (M5) at (18,7.5) {};
		\node[model node small] (M6) at (13,4) {};
		\node[model node small] (M7) at (17,3) {};
		\node[rep node] (mi1) at (11,6) {};
		\node[rep node] (mi3) at (11,2) {};
		\node () at (16,8.5) {$\hat{M}_{i'}$};
		
		\draw[celestialblue] (M4) -- (M5) (M6) -- (M7) (M4) -- (M6) (M5)-- (M7) (M5) -- (M6);
		\draw[celestialblue] (mi1) --  (M4);
		\draw[celestialblue] (mi3) -- (M6);

		\node[rep node] (a3) at (10.5, 5) {};
		\node[rep node] (b3) at (10.5,1) {};
		\node[rep node] (b2) at (10.5,2) {};
		\node[rep node] (b4) at (11,1) {};
		\node[rep node] (a2) at (10,4) {};
		\node[rep node] (a1) at (10,5) {};
		\node[rep node] (b1) at (10,2) {};
		
		\end{scope}

		\begin{scope}[xshift=10cm]
		
		\draw[red] (m1) -- (3.5,5.5) --(4.5,5.5) -- (5,6) --  (5.5,6) -- (5.5,5) -- (6,5) -- (a1)  -- (a2)  -- (a3) -- (mi1);
		\draw[red] (m2) -- (m3) --  (4.5,2) -- (5,2) -- (5.5,1) -- (6,1) -- (6,0.5) -- (b1) -- (b2) -- (b3) -- (b4) -- (mi3);
		
		\node () at (8.5,6) {$P_{1}^{*}$};
		\node () at (8.5,2) {$P_{2}^{*}$};

		\node[rep node] () at  (3.5,5.5) {};
		\node[rep node] () at  (3.5,2) {};

		\node[rep node] () at (5.5,6) {};
		\node[rep node] () at (5.5,5) {};
		\node[rep node] () at (6,5) {};
		\node[rep node] () at (5.5,1) {};
		\node[rep node] () at (6,1) {};
		\node[rep node] () at (6,0.5) {};
		
		\node[rep node] () at (5,6) {};
		\node[rep node] () at (5,2) {};
		
		\node[rep node] () at (4.5,5.5) {};
		\node[rep node] () at (4.5,2) {};
		
		\end{scope}

		\begin{scope}[on background layer,xshift=10cm]
		
		\draw[opacity=0.3] (1,0) -- (1,10);
		\draw[opacity=0.1] (1.5,0) -- (1.5,10);
		
		\draw[opacity=0.2] (2.5,0) -- (2.5,10);
		\draw[opacity=0.2] (3.5,0) -- (3.5,10);
		
		\draw[opacity=0.1] (4.5,0) -- (4.5,10);
		\draw[opacity=0.3] (5,0) -- (5,10);

		\draw[-, line width=0.7pt] (0,8) -- (0.5,8) -- (0.5,7);
		\draw[black!30!white, line width=0.7pt] (0.5,7) -- (1,7);
		\draw[black!10!white, line width=0.7pt] (1,7) -- (1.5,8);
		\draw[-, line width=0.7pt] (0,6) -- (0,5) -- (0.5,5.5);
		\draw[black!30!white, line width=0.7pt] (0.5,5.5) -- (1,6);
		\draw[black!10!white, line width=0.7pt] (1,6) -- (1.5,6);
		\draw[-, line width=0.7pt] (0,3) -- (0.5,4);
		\draw[black!30!white, line width=0.7pt] (0.5,4) -- (1,4) -- (1,3);
		\draw[black!10!white, line width=0.7pt] (1,3) -- (1.5,2.5);
		
		\node[track node 1] () at (0,8) {};
		\node[track node 1] () at (0,6) {};
		\node[track node 1] () at (0,3) {};
		\node[track node 1] () at (0,5) {};
		\node[track node 1] () at (0.5,8) {};
		\node[track node 1] () at (0.5,7) {};
		\node[track node 1] () at (0.5,5.5) {};
		\node[track node 1] () at (0.5, 4) {};
		
		\node[track node 2] () at (1,7) {};
		\node[track node 2] () at (1,6) {};
		\node[track node 2] () at (1,4) {};
		\node[track node 2] () at (1,3) {};
		
		\node[track node 3] () at (1.5,8) {};
		\node[track node 3] () at (1.5,6) {};
		\node[track node 3] () at (1.5,2.5) {};

		\draw[black!30!white, line width=0.7pt] (2.5,8) -- (3,9) -- (3.5,9);
		\draw[black!30!white, line width=0.7pt] (2.5,6) -- (2.5,5) -- (3,6) -- (3.5,5.5);
		\draw[black!30!white, line width=0.7pt] (2.5,2.5) -- (3,3);
		\draw[-, line width=0.7pt] (3,3) -- (3,2);
		\draw[black!30!white, line width=0.7pt] (3,2) -- (3.5,2);
		\draw[-] (3,0) -- (3,10);
		
		\node[track node 1] () at (3,9) {};
		\node[track node 1] () at (3,6)  {};
		\node[track node 1] () at (3,3) {};
		\node[track node 1] () at (3,2) {};
		
		\node[track node 2] () at (2.5,8) {};
		\node[track node 2] () at (3.5,9) {};
		\node[track node 2] () at (2.5,6) {};
		\node[track node 2] () at (2.5,5) {};
		\node[track node 2] () at  (3.5,5.5) {};
		\node[track node 2] () at (2.5,2.5) {};
		\node[track node 2] () at  (3.5,2) {};

		\draw[-, line width=0.7pt] (5.5, 7) -- (6,8);
		\draw[black!30!white, line width=0.7pt] (5,8) -- (5,7) -- (5.5, 7);
		\draw[black!10!white, line width=0.7pt] (4.5,9) -- (5,8);
		\draw[-, line width=0.7pt]  (5.5,6) -- (5.5,5) -- (6,5);
		\draw[black!30!white, line width=0.7pt](5,6) -- (5.5,6);
		\draw[black!10!white, line width=0.7pt](4.5,5.5) -- (5,6);
		\draw[-, line width=0.7pt] (5.5,1) -- (6,1) -- (6,0.5);
		\draw[black!30!white, line width=0.7pt] (5,2) -- (5.5,1);
		\draw[black!10!white, line width=0.7pt] (4.5,2) -- (5,2);
		
		\node[track node 1] () at (6,8) {};
		\node[track node 1] () at (5.5, 7) {};
		\node[track node 1] () at (5.5,6) {};
		\node[track node 1] () at (5.5,5) {};
		\node[track node 1] () at (6,5) {};
		\node[track node 1] () at (5.5,1) {};
		\node[track node 1] () at (6,1) {};
		\node[track node 1] () at (6,0.5) {};
		
		\node[track node 2] () at (5,8) {};
		\node[track node 2] () at (5,7) {};
		\node[track node 2] () at (5,6) {};
		\node[track node 2] () at (5,2) {};
		
		\node[track node 3] () at (4.5,9) {};
		\node[track node 3] () at (4.5,5.5) {};
		\node[track node 3] () at (4.5,2) {};
		
		\end{scope}
	\end{tikzpicture}}}
	\caption{Visualization of the last part of the proof.}
	\label{sdgdfsgfaggdfgsdsfngdsgfn}
\end{figure}

For every $j\in[z]$, we define $P^*_{j}$ to be the path in $P_j$
starting from $r_{y,j}$ and finishing to $r_{i',y}$,
i.e.,  $P^*_{j}= (P_{j} \cap \overline{D}_y )\setminus(\overline{D}_{i'}\setminus r_{i',j})$ and
${\cal P}^*=\{P^*_{j}\mid j\in[z]\}$.
Observe that none of the paths in ${\cal P}^*$ intersects $V(K)$.
Let $\hat{M}_{i'}$ (resp. $\hat{M}_{\rm out}$) be the graph
obtained from $M_{i'}$ (resp. $\tilde{M}_{\rm out}$) after removing,
for every $j\in[\lambda]\setminus \{c_{1}, \ldots, c_{z}\}$,
the vertices $r_{i',j}$ (resp. $r_{y,j}$).
Consider now the graph
$M_{0}:=\hat{M}_{\rm out}\cup \hat{M}_{i'}\cup\cupall {\cal P}^{*}$
and observe that
$M_{0}$ is a subgraph of $(G\setminus V(K))\setminus S$
that is a subdivision of $H$.
Therefore $H\preceq(G\setminus V(K))\setminus S$,
a contradiction (see \autoref{sdgdfsgfaggdfgsdsfngdsgfn}).
\end{proof}

\section{Proof of \autoref{dfasdfdsad}}
\label{sasdfsdfdsfsdfsdgffdsgdfgsdfgd}

In this section, having all necessary results, we are in position to present the proof of \autoref{dfasdfdsad}.

\begin{proof}[Proof of \autoref{dfasdfdsad}]
Let
	\begin{align*}
	 x  := & \max\{\funref{dsfndnvgdsqwert}(h,k),  \funref{dfljhklgdfjhklgfj}(h)\}=\mathcal{O}_{h} (k),\\ 
	 y  := & \max\{\funref{fnkdslnfgkldsang}(h,3),\funref{dsgkdgnlkdfsngkdfsnl}(h,3)\},\\
	 z :=&  \funref{dsalgmdlaotpopyrt}(h,k)+1=\mathcal{O}_{h} (k^2), ~\mbox{and}\\
	q:= & \funref{ddansjbndaj}(x,y,z)= \mathcal{O}(x+y\sqrt{z})= \mathcal{O}_{h} (k).
	\end{align*}
	
We first call the algorithm {\bf Find\_Wall}$(G,q)$ of \autoref{something_good} which outputs either a $q$-wall $W$ of $G$ whose compass has treewidth at most $\conref{sdafsdfsd}\cdot q$ or a tree decomposition of $G$ of width at most $\conref{sdafsdfsd}\cdot q$. This algorithm runs in $2^{\mathcal{O} (q^2)}\cdot n=2^{\mathcal{O}_h (k^2)}\cdot n$ time, or, alternatively, in $\mathcal{O}(n^{2})$ time.
We consider the first case.
		
Let $\Delta$ be the closed disk whose boundary is the perimeter of $W$ and contains ${\sf compass}(W)$.
We call the algorithm  {\bf Find\_Collection\_of\_Annuli}$(x,y,z,\Delta,G,W)$ of \autoref{gdgdasgfhjkuyi} which, in $\mathcal{O}(n)$ time, outputs a closed disk $\Delta'\subseteq \Delta$ and a collection $\mathfrak{A}=\{{\cal A}_{0},{\cal A}_{1},\ldots,{\cal A}_{z}\}$ of railed annuli of the compass of $G$ such that 
	\begin{itemize}
		\item ${\cal A}_{0}$ is an $(x,x)$-railed annulus  whose outer disk is  $\Delta$ and whose inner disk is $\Delta'$,   
		\item for $i\in[z]$, ${\cal A}_{i}$ is a $(y,y)$-railed annulus of $G\cap \inter(\Delta')$, and  
		\item for every $i,j\in [z]$ where $i\neq j$, the outer disk of ${\cal A}_{i}$ and the outer disk of ${\cal A}_{j}$ are disjoint.
	\end{itemize}
Then, we call the algorithm {\bf Reduce\_Solution\_Space}$(k,h,w,\Delta,G,R,{\cal C},{\cal P})$  of \autoref{fsfsdfdsdsafsasdsdadffasdasfd} for $({\cal C}, {\cal P}):={\cal A}_{0}$ and $w:=\conref{sdafsdfsd}\cdot q$  which outputs a set $R'\subseteq R$ such that
\begin{itemize}
\item $|R' \cap \inter(\Delta')|\leq\funref{dsalgmdlaotpopyrt}(h,k) = z-1$ and 
\item if $(G,R,k)$ is a ${\bf tm}_{\cal F}$-triple then $(G,R',k)$ is a ${\bf tm}_{\cal F}$-triple.
\end{itemize}
	This algorithm runs in  $2^{\mathcal{O}_{h} (q \log q)}\cdot k\cdot n=2^{\mathcal{O}_{h} (k \log k)}\cdot n$ time,  or, alternatively, in $\mathcal{O}(k\cdot n^3)+2^{\mathcal{O}_{h}(q)}\cdot k\cdot n=\mathcal{O}(k\cdot n^3)+2^{\mathcal{O}_{h}(k)}\cdot n$ time.
Since $|R' |<z$ then there exists a $j\in[z]$ such that $R'\cap \ann({{\cal A}_{j}})=\emptyset$.
Let $({\cal C}^{(j)}, {\cal P}^{(j)}):={\cal A}_{j}$ and $\Delta_j$ be the closure of the outer disk of ${\cal A}_j$.
Now, for $b:=3$, the algorithm  {\bf Find\_irrelevant\_area}$(h,b,w,{\Delta_j},G,R',{\cal C}^{(j)}, {\cal P}^{(j)})$ of  \autoref{fsfsdfdsdsafsadffasdasfd1} outputs a $b$-wall $W$ of $G$ such that 
\begin{itemize}

\item $V({\sf compass}(W))\subseteq \Delta$ and
 if and only if
\item if $(G\setminus V({\sf compass}(W)),R,k)$ is a ${\bf tm}_{\cal F}$-triple then $(G, R, k)$ is a ${\bf tm}_{\cal F}$-triple.
\end{itemize} 
This algorithm runs in $2^{\mathcal{O}_{h} (q\log q)}\cdot n=2^{\mathcal{O}_{h} (k\log k)}\cdot n$ time, or, alternatively, in $ \mathcal{O}_h (n^3)+ 2^{\mathcal{O}_{h}(q)}\cdot n= \mathcal{O}_h (n^3)+ 2^{\mathcal{O}_{h}(k)}\cdot n$ time.

Therefore, if we pick a vertex $v\in V(G)\cap \Delta''$ then it holds that 
$(G,R,k)$ is a ${\bf tm}_{\cal F}$-triple if and only if $(G\setminus v, R', k)$ is a ${\bf tm}_{\cal F}$-triple.
The overall running time of the whole procedure is $2^{\mathcal{O}_h (k^2)}\cdot n$, or, alternatively, $\mathcal{O}(k\cdot n^3)+\mathcal{O}_{h}(n^3)+2^{\mathcal{O}_{h}(k)}\cdot n$.
\end{proof}

\section{Discussion}
\label{dnfjklsdgnklsdnbm}

In this paper we prove that  \textsc{${\cal F}$-TM-Deletion} is Fixed Parameter Tractable on planar graphs.

\subsection{Running time dependency on \texorpdfstring{$h$}{h}} The parametric dependency of our {\sf FPT}-algorithm 
is $2^{\mathcal{O}_h (k^2)}$ and it can be dropped to $2^{\mathcal{O}_h (k)}$
if we admit a cubic polynomial dependency on $n$. 
However both these parametric dependencies hide huge dependency on $h$. To estimate this,
one may observe that the complexity of the dynamic programming algorithm of \autoref{sssswfgfegwergewgwergwegr} dominates the overall running time of the 
algorithm of \autoref{dfasdfdsad}, in terms of the contribution of $h$. This permits us to
estimate that the  algorithm of \autoref{dfassdfasfdfdsad} 
runs in  $2^{k^2\cdot 2^{2^{2^{2^{\mathcal{O}(h)}}}}}\cdot n^2$ time, or, alternatively, in
$\mathcal{O}(k\cdot n^4)+2^{2^{2^{2^{2^{\mathcal{O}(h)}}}}}\cdot n^4+2^{k\cdot 2^{2^{2^{2^{\mathcal{O}(h)}}}}}\cdot n^2$.

\subsection{Extensions to bounded genus graphs}
\label{asdfsdgsgdfgsdfhgdgh}
In this subsection, we show how to extend our results to graphs of Euler genus at most $\gamma$.
In particular, we obtain an algorithm 
for  \textsc{${\cal F}$-TM-Deletion} on graphs of Euler genus at most $γ$ that runs in $2^{\mathcal{O}_{h,γ} (k^2)}\cdot n^2$ time, or, alternatively, in $\mathcal{O}_{γ}(k\cdot n^4)+\mathcal{O}_{h,γ}(n^4)+2^{\mathcal{O}_{h,γ}(k)}\cdot n^{\mathcal{O}(1)}$ time.

\begin{theorem} \label{thrm_eulergen}
There exists an algorithm that given a finite set of graphs ${\cal F}$, a $k\in\Bbb{N}$, and an $n$-vertex graph $G$ of Euler genus at most $γ$,
outputs whether ${\bf tm}_{\cal F}(G)\leq k$ in  $2^{\mathcal{O}_{h,γ} (k^2)}\cdot n^2$ time, or, alternatively, $\mathcal{O}_γ (k\cdot n^4)+\mathcal{O}_{h,γ}(n^4)+2^{\mathcal{O}_{h,γ}(k)}\cdot n^{\mathcal{O}(1)}$ time, where $h=h({\cal F})$. 
\end{theorem}

To prove~\autoref{thrm_eulergen}, we can follow the same approach
as in the proof of \autoref{dfassdfasfdfdsad}, i.e., reduce the problem
to instances of bounded treewidth by removing vertices and reducing the set $R$.
This is done using the following result, which is an analogue of~\autoref{dfasdfdsad} for graphs of bounded Euler genus.

\begin{lemma}\label{irr_boundedeulergenus}
There exists a function $\newfun{fsdfdsytr}:\Bbb{N}\to\Bbb{N}$, and an algorithm that
given two integers $k,h\in \Bbb{N}$, an $n$-vertex graph $G$ of Euler genus at most $γ$,
and a set $R\subseteq V(G)$, outputs
either
a vertex $v\in V(G)$ and a set $R'\subseteq R$ such that,
for every graph class ${\cal F}$ where $h({\cal F})\leq h$,
$(G,R,k)$ is a  ${\bf tm}_{\cal F}$-triple if and only if
$(G\setminus v, R', k)$  is a ${\bf tm}_{\cal F}$-triple or 
a tree decomposition of $G$ of width
at most $\funref{fsdfdsytr}(h)\cdot k$.
Moreover, this algorithm runs in  $2^{\mathcal{O}_{h,\gamma} (k^2)}\cdot n$ time, or, alternatively, $\mathcal{O}_\gamma(k\cdot n^3)+\mathcal{O}_{h,\gamma}(n^3)+2^{\mathcal{O}_{h,\gamma}(k)}\cdot n$ time.
\end{lemma}

The proof of~\autoref{irr_boundedeulergenus} is analogous to the proof of~\autoref{dfasdfdsad}.
We can use both subroutines in \autoref{asfdsasdfdsf}
(i.e., the algorithms of \autoref{fsfsdfdsdsafsasdsdadffasdasfd} and \autoref{fsfsdfdsdsafsadffasdasfd1})
since they
are designed to work when the input graph is partially $\Delta$-embedded (and not necessarily planar).
The only missing ingredient for the proof of~\autoref{irr_boundedeulergenus}
is an extension of \autoref{something_good} on graphs of bounded Euler genus.

\begin{lemma}\label{find_wall_eulergen}
There exists a constant $\newcon{dsdfsffds}$ and an algorithm that given
an $n$-vertex graph $G$ of Euler genus at most $\gamma$ and an integer $q\in \Bbb{N}_{\geq 3}$,
outputs either a disk-embedded $q$-wall $W$ of $G$ whose
compass has treewidth at most $\conref{dsdfsffds}\cdot q$ or  
a tree decomposition of $G$ of width at most $\conref{dsdfsffds}\cdot q$.
Moreover, this algorithm runs in $2^{\mathcal{O}_\gamma(q^2)}\cdot n$ time,
or, alternatively, in $2^{\mathcal{O}_\gamma(q)}\cdot n^2$ time.	
\end{lemma}	

\begin{proof}
As a first step we use the single exponential $5$-approximation algorithm of
\cite{BodlaenderDDFLP16} in order to check whether $\tw(G)=\mathcal{O}(q)$.
If not, we aim to find a disk-embedded 
$q$-wall, whose existence is guaranteed by the grid exclusion theorem on bounded genus graphs (see e.g., \cite{DemaineFHT05sube,DemaineHT04theb,FominGT11cont,DemaineH08line}).
To find the $q$-wall, we may again use the algorithm of \cite{AdlerDFST11} to first detect a wall of $G$ and then a subwall of it that is disk-embedded (see \cite{GeelenRS04embe}).
This way, we derive an algorithm running in $2^{\mathcal{O}_{γ}(q^2)}\cdot n$ time.
Ιf we want to avoid the exponential dependence on $q^2$, we may find the $q$-wall 
by the following alternative approach: 
as a first step, we may find a set $S$ of $\mathcal{O}_{γ}(q)$  vertices whose 
removal from $G$ will give  either a planar graph of treewidth $Ω(q)$ or a non-planar embedded graph of face-width  $Ω(q)$. This can be done by successively 
finding minimum-size non-contractible cycles on $\mathcal{O}(q)$ vertices by using a polynomial 
time algorithm (see e.g., \cite{Fox13shor,CabelloCE13mult,Thomassen90embe,EricksonH04opti,CabelloVL16find}). If the outcome is that $S$ is a set of vertices whose removal from $G$ gives a planar graph,
then, as in the planar case, we use the polynomial algorithm of \cite{GuT12} to find the $q$-wall. Otherwise, the $q$-wall can be found by using the polynomial algorithmic procedure described in the proof of \cite[Lemma 3.3]{DemaineFHT05sube}. The overall running time of the above procedure is $2^{\mathcal{O}_{γ}(q)}\cdot n^{\mathcal{O}(1)}$.
\end{proof}

To complete the proof of~\autoref{thrm_eulergen} and achieve the claimed parametric dependencies in its running times, we may adapt the dynamic programming algorithm of \autoref{sssswfgfegwergewgwergwegr}
which 
runs in {$2^{\mathcal{O}_{h,γ}(w\log w)}\cdot n$ time}.
Again, if we want to avoid the $\log w$ contribution in the exponent, we may alternatively use  
dynamic programming 
in \cite[Theorem 10.1]{BasteST20hittI} (based on an extension of sphere-cut decompositions called {\sl surface-cut decompositions}) and derive a dynamic programming algorithm that runs in 
 $\mathcal{O}_{γ}(n^3)+2^{\mathcal{O}_{h,γ}(w)}\cdot n$ time.
\medskip

We stress that, in the above analysis, we insisted on a single-exponential dependence on $k$ on the running time of the algorithms. The reason is that this implies that, for every  finite set of graphs ${\cal F}$ with detail $h$ and every $γ\in\Bbb{N}$, the following problem is polynomially solvable.\medskip

\begin{center}
\fbox{
\begin{minipage}{12cm}
\noindent \textsc{$({\cal F},γ)$-Log-TM-Deletion}\\
\noindent {\bf Input}: an $n$-vertex graph $G$ of Euler genus $γ$.\\
{\bf Question}:  Does $G$ contain a set  $S$ of $\log n$ vertices such that $G\setminus S$\\
\phantom{{\bf Question}:}   excludes every graph in ${\cal F}$ as a topological minor?
\end{minipage}
}
\end{center}

\subsection{Recent  advances on the general problem}
\label{asdfsfgddfhdgshfghg}

The remaining question is whether 
the same result can be derived for {\sl all} graphs. Recently an $\mathcal{O}_{h,k}(n^4)$ algorithm for the general   \textsc{${\cal F}$-TM-Deletion} problem was proposed by Fomin et al.~\cite{FominLP0Z20hitt}. Using the words of  \cite{FominLP0Z20hitt}, the parametric 
dependency of this algorithm, on $k$ and $h$, is humongous. However, in the same paper, it was proven
that better parametric dependancies can be achieved when restricting the problem to graphs of bounded Euler genus. According to the results of ~\cite{FominLP0Z20hitt}, \textsc{${\cal F}$-TM-Deletion} on graphs of Euler genus at most $γ$ can be solved by an algorithm running in $2^{2^{\mathcal{O}_{h,γ}(k)}}\cdot n^2$ time. The algorithms claimed in \autoref{asdfsdgsgdfgsdfhgdgh}
can be seen as improvements of this result.

\subsection{Open problems}
\label{asdgasdfgdsfgdfg}

We believe that the techniques developed in this paper can be applied to other instances 
of the ${\cal P}$-{\sc deletion} problem. In particular, we conjecture the following.

\begin{conjecture}
\label{asdfsdff}
If ${\cal F}$ is a finite set of graphs and  $\leq$ is the contraction relation, then
the problem {\sc ${\cal P}_{{\cal F},\leq }$-deletion}, with inputs restricted on  graphs of Euler genus at most $γ$, can be solved 
by an algorithm that runs in $\mathcal{O}_{h,k}(n^c)$ time, for some constant $c$.
\end{conjecture}

\begin{conjecture}
\label{asdfsdffs}
If ${\cal F}$ is a finite set of graphs and  $\leq$ is the induced minor relation, then
the problem {\sc ${\cal P}_{{\cal F},\leq }$-deletion},  with inputs restricted on  graphs of Euler genus at most $γ$, can be solved 
by an algorithm that runs in $\mathcal{O}_{h,k}(n^c)$ time, for some constant $c$.
\end{conjecture}

A possible pathway for proving \autoref{asdfsdff} is to use the fact that, given an graph embedding $Γ$,
edge contractions on $G$ correspond to topological minors on the dual embedding $Γ^*$.
This ``translation'' of contractions to topological minors  was proposed in~\cite{KaminskiT12} in order to devise an algorithm for the problem of checking whether 
a graph of Euler genus at most $γ$ contains a graph $H$ as a contraction (this result is 
the case $k=0$ of \autoref{asdfsdff}). Under this setting, the only significant 
change is that instead of looking for a set of vertices to remove, we must find a set of faces to ``shrink''.
Therefore, the main missing ingredient for a proof of \autoref{asdfsdff}  is a dynamic programming framework for this shrinking variant 
of the problem on surfaces.\medskip

For \autoref{asdfsdffs} one may directly attempt to build counterparts of all the algorithms of this paper 
for the induced minor relation. The only missing combinatorial ingredient for this is an ``induced'' version of 
the model combing theorem (\autoref{jklnlnlk}), that is proved in \cite{GolovachST22comb}.

\section*{Acknowledgements} We wish to thank the anonymous reviewers for their comments and remarks that improved the presentation of this paper. Moreover, we are especially indebted to Fedor V. Fomin for his valuable comments and advise.
%

\end{document}